\documentclass{amsart}

\usepackage{amsthm}
\usepackage{amssymb}
\usepackage{upgreek}
\usepackage{xurl}
\usepackage[colorlinks=true,
            urlcolor=blue,
            linkcolor=black,
            citecolor=black]{hyperref}
\usepackage{graphicx} 
\usepackage{caption}  
\usepackage{tikz-cd}

\usepackage{amsmath}
\usepackage[margin=1.2in]{geometry}
\usepackage[toc,page]{appendix}
\usepackage{comment}

\AtBeginDocument{%
	}

\usepackage{cleveref}
\usepackage{listings}
\usepackage{enumitem}
\usepackage{microtype}
\usepackage{multicol}
\usepackage[T1]{fontenc}
\usepackage{lmodern}

\makeatletter
\renewcommand\paragraph{\@startsection{paragraph}{4}%
  \z@{3.25ex \@plus 1ex \@minus .2ex}%
  {-1em}
  {\normalfont\bfseries}}%
\makeatother

\usepackage{xurl}

\usepackage[colorinlistoftodos]{todonotes}

\definecolor{keywordcolor}{rgb}{0.7, 0.1, 0.1}   
\definecolor{commentcolor}{rgb}{0.4, 0.4, 0.4}   
\definecolor{errorcolor}{rgb}{0.9, 0.0, 0.0}     
\definecolor{symbolcolor}{rgb}{0.4, 0.4, 0.4}    
\definecolor{sortcolor}{rgb}{0.1, 0.5, 0.1}      
\definecolor{stringcolor}{rgb}{0.31, 0.71, 0.14}      

\usepackage{xspace}
\usepackage[utf8]{inputenc}

\newtheorem{remark}{Remark}

{\theoremstyle{definition}
\newtheorem{example}{Example}}
\newtheorem{definition}{Definition}
\newtheorem{theorem}{Theorem}
\newtheorem{lemma}{Lemma}
\newtheorem{corollary}{Corollary}
\newtheorem*{lemma*}{Lemma}
\newtheorem*{theorem*}{Theorem}

\title{Formally certifying number field invariants}

\author{Alain Chavarri Villarello }
\address{Department of Mathematics \\
 VU Amsterdam \\
De Boelelaan 1111 \\
1081 HV Amsterdam \\
The Netherlands} 
\email{a.chavarri.villarello@vu.nl}

\author{Sander R. Dahmen }
\address{Department of Mathematics \\
 VU Amsterdam \\
De Boelelaan 1111 \\
1081 HV Amsterdam \\
The Netherlands} 
\email{s.r.dahmen@vu.nl}

\begin{document}

	\begin{abstract}

Number fields, which generalize the rational numbers, are fundamental objects in number theory. Many of their key arithmetic properties are captured by invariants whose computation is among the central tasks of computational algebraic number theory and a focus of several computer algebra systems and databases. In this paper, we describe a Lean 4 formalization for certifying several of these number field invariants. Building on previous work on certifying rings of integers, we extend this certification approach to further invariants including the signature, the unit group modulo $p$-th powers, and, ultimately, the class group. We also improve discriminant certification, allowing verifications for higher-degree number fields infeasible in previous work. We introduce structures based on representations of algebraic objects suited to computation, including reusable ones for certifying ideal arithmetic. Along the way, we formalize several underlying mathematical results, for instance on real closed fields and pseudo-remainder sequences, which are of independent interest. We apply our framework to verify hundreds of entries of the \emph{L-functions and modular forms database} (LMFDB) concerning the discriminant, signature, class number, and class group structure of various number fields. To this end, we wrote a SageMath script that computes the certificates and outputs Lean proofs of the corresponding statements.   
	\end{abstract}
    
\maketitle

\section{Introduction}
Number fields are one of the central objects of algebraic number theory. They generalize the field of rational numbers $\mathbb{Q}$ by being finite degree field extensions of it. Attached to each number field is a collection of invariants that govern its key arithmetic properties. These include the ring of integers, discriminant, signature, unit group, and class group. For an explicit number field, computing these invariants forms part of the main computational tasks of algebraic number theory \cite[Section 4.9.3]{Cohen}. 

Involved algorithms to compute these arithmetic data are implemented in different Computer Algebra Systems (CASs), including PARI/GP \cite{PARI2}, SageMath \cite{SageMath}, and Magma \cite{magmasystem}. Using these tools, the invariants of large families of number fields have been determined and collected in mathematical databases such as the well-known \emph{L-Functions and Modular Forms Database} (LMFDB) \cite{lmfdb}.

While CASs are very effective in practice, they rely on large stacks of highly optimized components whose implementations lack formal guarantees of correctness. As with any large and complex piece of software, subtle bugs have been found (e.g. \cite[pag. 162]{magma} reports a bug in class group computations). This motivates formally verifying some of these computations inside a proof assistant such as Lean, where everything is checked against a small trusted kernel. Additionally, the formal verification of number field invariants allows for their use in future formal proofs, such as in the formal resolution of Diophantine equations. For instance, in order to solve some explicit Mordell equations, rings of integers and class group calculations were carried out in Lean in \cite{alex} for a selection of degree two number fields. 

As a first substantial step towards formally verifying invariants of arbitrary number fields, the verification in Lean of rings of integers, together with the development of the necessary computational infrastructure for working in these rings and with polynomials, was done in \cite{rings}. We build on this work, which follows a \emph{certification approach}: rather than formalizing expensive algorithms inside the proof assistant, an external CAS is used to carry out the more computationally intensive tasks and output a certificate which is checked in Lean to guarantee correctness of the computation. The current paper is a substantial continuation of this approach, significantly broadening the scope by verifying several further number field invariants and culminating in the certification of the class group.

Certifying the class group of a number field was the central goal guiding our development. This finite abelian group, whose definition and basic properties were formalized in Lean in  \cite{dedekindforma}, measures the failure of the ring of integers (which generalizes how $\mathbb{Z}$ is contained in the number field $\mathbb{Q}$) to have unique factorization into irreducible elements. Its computation is expensive in general, relying on involved algorithms such as \cite[Algorithm 6.5.9]{Cohen}, \cite[Section 6.5]{zassenhaus} or \cite[Algorithm 3.7.1]{hess}.

The main contribution of this paper is a formalization in Lean 4 for certifying several central computations in algebraic number theory. These include the discriminant and signature of a number field, the unit group modulo $p$-th powers, and, ultimately, the structure of the class group and its order (the \emph{class number}). We also build the infrastructure to certify the decomposition of prime numbers into prime ideals. Class group and unit group computations, together with prime ideal decompositions, are listed among the main computational tasks of algebraic number in theory in \cite[Section 4.9.3]{Cohen}, while discriminant and signature computations are necessary ingredients and of independent interest. 

In order to achieve this, we formalize many underlying mathematical results and definitions, including results on real closed fields and pseudo remainder sequences. To bridge the theory with the computational aspects, we introduce several reusable certifying structures based on representations of the relevant algebraic objects that are suitable for computation, together with proof obligations that can be discharged automatically. We also address challenges in scaling formal verification by partitioning verification into smaller independent batches. 

The structure of this paper is as follows. In Section \ref{sec : prelim} we establish notation and go through some preliminaries of algebraic number theory. 
In Section \ref{sec:ideal} we describe the computational framework used throughout (inherited from \cite{rings}), discuss representations of ideals in Lean, and develop certificates for ideal arithmetic and primality of ideals.

Section \ref{sec:disc} concerns the \emph{discriminant} of a number field, an integer measuring its arithmetic complexity, which is tightly related to the discriminant of a defining polynomial. Polynomial discriminants, and their relation with the discriminant of a number field, were first formalized in Lean in \cite{rings}, however, the computation was infeasible there for degrees higher than 3. Here, we formalize a more efficient method to compute polynomial discriminants using pseudo-remainder sequences, which allowed us to certify the discriminant of higher degree number fields.

In Section \ref{sec : sign}, we deal with the \emph{signature} of a number field, which controls the rank of the unit group. To verify it, we formalize a version of Sturm’s theorem that is suitable for computation, allowing us to efficiently count the number of real roots of an integer polynomial using only arithmetic over $\mathbb{Z}$ and controlling for coefficient explosion. In the process, we formalize foundational results in the theory of real closed fields. 

All this comes together in the certification of the class group. Section \ref{sec : idealgen} is concerned with certifying a factor base and generators for the cyclic factors of the class group. Section \ref{sec:satclass} is about a $p$-saturation certificate, which allows us to verify generators of the unit group modulo $p$-th powers (an important step towards a full certification of the unit group) as well as, after certifying saturation at sufficiently many primes, the full structure of the class group and the class number. In Section \ref{sec:existancesat} we give a proof of the existence of this certificate as well as a heuristic argument for the effort required to find it. 

To illustrate the feasibility of our approach, we formally verified hundreds of entries in the \emph{Number Field} section of the \emph{L-functions and modular forms database} (LMFDB). For a selection of number fields, we verified in Lean their discriminant, signature, class group structure, and class number. To do this, we wrote a SageMath script that, for a given a number field, computes the corresponding certificates and outputs Lean files containing the formal proofs that can be readily checked. A discussion of these results, with verification timings, appears in Section \ref{sec : lmfdb}, and bottlenecks with possible optimizations appear in Section  \ref{sec : optimizations}. 
We conclude the paper with some discussion in Section \ref{sec : discussion}, including alternative approaches, related work, and future work. 

Full source code of our formalization and SageMath script is available in the GitHub repository \href{https://github.com/alainchmt/CertifyingInvariantsNF/tree/v1}{\texttt{CertifyingInvariantsNF}}. 
Throughout the paper, we accompany formal statements, definitions, and examples with links to the corresponding source code, either in Mathlib or in our repository, each pinned to a fixed commit. Some of the code snippets have been lightly edited for readability.

\subsection*{Acknowledgements} Alain Chavarri Villarello was funded by NWO grant No. 613.009.143, Formalizing Diophantine algorithms. We thank Assia Mahboubi, Bhavik Mehta, and John Voight for helpful conversations. 

\subsection*{AI disclosure statement} We did not use generative AI for the Lean formalization, the associated mathematical arguments, or the design of the formalization strategies. To find Mathlib lemmas, we used the semantic search tool LeanSearch. In preparing this manuscript, we occasionally used Overleaf's AI tools for copyediting (grammar and wording suggestions), but not for text generation. We initially wrote a working version of our SageMath code without generative AI, and later used Claude Code with Sonnet 5 for limited code and docstring edits, cleanup, and merging of SageMath~files.

\section{Algebraic number theory preliminaries} \label{sec : prelim}

We presuppose some familiarity with standard notions of graduate algebra (as can be found in \cite{lang2}). These are ideals, fractional ideals, quotient rings, modules and algebras over commutative rings, and basic properties of Dedekind domains and finite fields. An overview is given in Appendix \ref{App:B}. 

Throughout this paper, all rings have an identity $1$ and ring homomorphisms preserve $1$. We fix some notation. Let $R$ be a commutative ring and $I$ an ideal of $R$. When $I$ is clear from context, we denote the image of $x \in R$ under the quotient map $R \to R/I$ by $\bar{x}$. The ideal generated by $\{a_1, \ldots, a_n\} \subseteq R$ is denoted by $\langle a_1, \ldots, a_n\rangle$. For $M$ a module over $R$, the $R$-span of a subset $A$ of $M$ is denoted $\operatorname{span}_{R} A$. 
Here, we review the number-theoretic background needed for this paper. A standard reference is \cite{neuk}.

\subsection{Number fields and rings of integers}
A \emph{number field} $K$ is a finite field extension of $\mathbb{Q}$. Its dimension $[K : \mathbb{Q}]$ as a $\mathbb{Q}$-vector space is called the \emph{degree} of $K$. Given an irreducible polynomial $f$ over $\mathbb{Q}$ of degree $n$, adjoining one of its complex roots $\beta$ to $\mathbb{Q}$ results in a number field $\mathbb{Q}(\beta)$ of degree $n$. Conversely, every number field is isomorphic to some $\mathbb{Q}(\beta)$, with $\beta$ the root of an irreducible polynomial $f$ over $\mathbb{Q}$. Any polynomial with this property is called a \emph{defining polynomial} of $K$. It can always be chosen to be monic (leading coefficient equal to one) and with coefficients in $\mathbb{Z}$.

A number field $K$ of degree $n$ generalizes $\mathbb{Q}$. The analog of $\mathbb{Z}$ is the \emph{ring of integers} of $K$, given by
\begin{align*}
	\mathcal{O}_K = \{x \in K \mid \exists P \in \mathbb{Z}[X] \text{ monic with } P(x) = 0 \}.
\end{align*}
This is a subring of $K$ whose field of fractions equals $K$. Viewed as a $\mathbb{Z}$-module, it has a $\mathbb{Z}$-basis \cite[Proposition 2.10]{neuk}, called an \emph{integral basis}, of cardinality equal to $n$. It follows that, given an integer $m \in \mathbb{Z} \setminus{\{0\}}$, the quotient $\mathcal{O}_K / \langle m\rangle$ has size $|m|^n$. Every nonzero ideal $I$ of $\mathcal{O}_K$ contains a nonzero integer, and thus $\mathcal{O}_K/I$ is a finite ring. Its cardinality is the \emph{norm} of the ideal $I$, denoted by $N(I)$. For convenience, we set $N(\langle 0 \rangle ) = 0$. Number fields and rings of integers have been defined in Mathlib. 

\subsection{Prime ideals and factorization}

A key property, also found in Mathlib, of the ring of integers $\mathcal{O}_K$ of a number field $K$ is that of being a Dedekind domain. This implies that every nonzero proper ideal $I$ of $\mathcal{O}_K$ factors into prime ideals \cite[Theorem 3.3]{neuk}. Consequently, for nonzero ideals $I$ and $J$, $I \subseteq J$ if and only if $I = JH$ for some ideal $H$. Furthermore, for a nonzero prime ideal $\mathfrak{p}$ of $\mathcal{O}_K$, the quotient $\mathcal{O}_K/\mathfrak{p}$ is a finite field of characteristic $p$ for some prime number $p$. Hence, $\langle p \rangle \subseteq \mathfrak{p}$, and we say that $\mathfrak{p}$ lies \emph{above} $p$. The number of prime ideals lying above $p$ is finite, and we can factor $\langle p\rangle$ as
\begin{align*}
	\langle p \rangle = \prod_{\langle p\rangle \subseteq \mathfrak{p}} \mathfrak{p}^{e_\mathfrak{p}}, 
\end{align*}
with integers $e_\mathfrak{p} \geq 1$. The cardinality of $\mathcal{O}_K/\mathfrak{p}$ is $p ^ {f_\mathfrak{p}}$ for some $f_\mathfrak{p} \geq 1$ called the \emph{inertia degree} of $\mathfrak{p}$.

\subsection{Embeddings}
A ring homomorphism $ \sigma : K \hookrightarrow \mathbb{C}$, necessarily injective, is called an embedding of $K$ into the complex numbers. The embeddings whose image is fully contained in $\mathbb{R}$ are called \emph{real embeddings} and the remaining ones are the \emph{non-real complex embeddings}. The latter come in conjugate pairs, as composing with complex conjugation induces an involution on the set of non-real complex embeddings that has no fixed points. We denote by $r_1$ the number of real embeddings (or \emph{number of real places}), and by $r_2$ the number of conjugate pairs of non-real complex embeddings (or \emph{number of complex places}). We have the relation $n = r_1 + 2r_2$, with $n$ the degree of the number field $K$. The tuple $(r_1, r_2)$ is the \emph{signature} of $K$ and $r_1 + r_2$ is the number of \emph{infinite places} of $K$. Definitions of complex embeddings and number of real, complex, and infinite places can be found in Mathlib. 

As an example, adjoin the real cube root $\sqrt[3]{2}$ to $\mathbb{Q}$. This results in a number field $K = \mathbb{Q}(\sqrt[3]{2})$ of degree $3$. Besides the identity embedding, which is real, there are two other (non-real complex) embeddings. One is given by sending $\sqrt[3]{2} \mapsto \zeta \sqrt[3]{2}$, with $\zeta = e^{2\pi i /3}$, and the other by composing with complex conjugation, i.e. by sending $\sqrt[3]{2} \mapsto \zeta^2 \sqrt[3]{2}$. The signature of $K$ is thus $(1,1)$.

\subsection{Failure of unique factorization}
The ring of integers of a number field $K$ may fail to be a PID. A classic example is $\mathbb{Q}(\sqrt{-5})$, where the ideal $\langle 2, 1 - \sqrt{-5}\rangle$ of its ring of integers $\mathbb{Z}[\sqrt{-5}]$ is nonprincipal.
This motivates defining the \emph{class group} of $K$, measuring the failure of $\mathcal{O}_K$ to be a PID.

Let $\mathcal{I}_K$ be the group of invertible (equivalently, nonzero) fractional ideals of $\mathcal{O}_K$. Let $\mathcal{P}_K$ be the group of principal fractional ideals of $\mathcal{O}_K$ (fractional ideals generated, as $\mathcal{O}_K$-modules, by a single $x \in K^{\times}$). The group $\operatorname{Cl}(\mathcal{O}_K) = \mathcal{I}_K/\mathcal{P}_K$ is the class group of $K$. A fundamental theorem \cite[Theorem 6.3]{neuk} states that this group is finite. Its cardinality $h_K$ is the \emph{class number} of $K$. We have $h_K = 1$ if and only if $\mathcal{O}_K$ is a PID. Moreover, $\mathcal{O}_K$ being a PID is equivalent to $\mathcal{O}_K$ having unique factorization into irreducible elements (up to multiplication by units and reordering of factors), so the class group also measures the failure of unique factorization. The class group, its finiteness, and the class number have been formalized in Lean and are available in Mathlib (this is discussed in Section \ref{sec : class}).

\section{Ideal arithmetic}\label{sec:ideal}

Throughout this project, we need to compute in the ring of integers $\mathcal{O}_K$ of a number field $K$ within Lean. In this section, we consider a more general setting with $R$ and $S$ commutative rings, where $S$ is an $R$-algebra which is free and finitely generated as an $R$-module (the example to keep in mind is $S = \mathcal{O}_K$ and $R = \mathbb{Z}$).
We start by reviewing a computational framework, introduced in \cite{rings}. Building on this, we develop certificates for explicit ideal calculations, which will be crucial. 

\subsection{The computational framework} \label{sec : model}
The framework used throughout this project for computing in $S$ is inherited from \cite{rings}, where it is used to certify rings of integers. We recall it here. 

Given an $R$-basis $(b_1, \ldots, b_r)$ for $S$, any element in $S$ is uniquely represented by its coordinate tuple in $R^r$ (of type \lstinline[mathescape]|Fin r $\to R$| in Lean). The multiplication table $T$ corresponding to this basis is the matrix where the $ij$-th entry is $(a_{ijk})_{k = 1}^{r} \in R^r$ with $b_i b_j = \sum_k a_{ijk}\cdot b_k$. Multiplication in $S$ is determined by pairwise multiplication of basis elements, hence can be performed explicitly (with tuple representations) using $T$. The structure \href{https://github.com/alainchmt/CertifyingInvariantsNF/blob/v1/IdealArithmetic/DedekindProject/TimesTable/Defs.lean#L7}{ \lstinline|TimesTable (Fin r) R S|}, taken from \cite{rings}, is used to bundle together an $R$-basis of $S$, the corresponding multiplication table, and a proof of its correctness. Assuming that arithmetic in $R$ is computable, arithmetic operations in $S$ can be translated into corresponding computable operations in terms of tuples and lists over $R$. Further details can be found in \cite[Section 5.2]{rings}.

We apply this framework to the ring of integers of a number field $K$, which is available in Mathlib under the name \href{https://github.com/leanprover-community/mathlib4/blob/1a1a7c154755bbd9c9c092cc41c820bbdbb79e66/Mathlib/NumberTheory/NumberField/Basic.lean#L92-L103}{\lstinline|NumberField.ringOfIntegers K|} and is defined as the subalgebra \lstinline|integralClosure ℤ K| coerced into a type. Indeed, let $K$ be number field of degree $r$ obtained by adjoining a root of an irreducible monic polynomial $f \in \mathbb{Z}[X]$ to $\mathbb{Q}$, and fix an integral basis for $\mathcal{O}_K$. We use \cite[Section 5]{rings} to automatically construct an explicit $\mathcal{O}$ of type \lstinline[mathescape]|Subalgebra $\mathbb{Z}$ K| defined by this basis, together with a term of type \lstinline[mathescape]|TimesTable (Fin r) $\mathbb{Z}$  $\mathcal{O}$| and a proof of \lstinline[mathescape]|$\mathcal{O}$ = integralClosure ℤ K|. Elements in $\mathcal{O}$ are represented as tuples \lstinline[mathescape]|Fin r $\to \mathbb{Z}$|. We perform addition, multiplication, and scalar multiplication in $\mathcal{O}$ by using this tuple representation of elements and the multiplication table. 

Throughout this project, we also compute with polynomials. Mathlib polynomials are implemented in such a way that makes their operations \lstinline|noncomputable|. Instead, we use the computable representation introduced in \cite{rings}, where univariate polynomials are encoded as lists of coefficients (see \cite[Section 4.1]{rings}).

	\subsection{Ideal representation in Lean}
    An ideal of $S$ can be seen both as an $S$-module and an $R$-module. 
    Assuming $R$ is Noetherian, every ideal $I$ of $S$ admits a finite set of $S$-generators $\{v_1, \ldots, v_{m}\} \subseteq S$, such that $I = \langle v_1, \ldots, v_{m}\rangle$, or a finite set of $R$-generators $\{w_1, \ldots, w_{n}\}\subseteq S$, so that $I = \operatorname{span}_{R}\{w_1, \ldots, w_{n}\}$.

	The type \lstinline|Ideal S|, defined in Mathlib, is the type of ideals of $S$ and is definitionally equal to \lstinline|Submodule S S|, the type of $S$-submodules of $S$. In our development, we define explicit ideals in Lean via their $S$-generators using \lstinline|Ideal.span|. In contrast, 
    defining an ideal via their $R$-generators would additionally require us to prove that the resulting $R$-module carries a compatible $S$-module structure.

	As previously noted, we represent elements of $S$ using their $R$-coordinates with respect to an $R$-basis of $S$. In Lean, the tuple of $S$-generators $(v_1, \ldots, v_{m})$ can be represented as a function \lstinline|v : Fin m → Fin r → R| sending each index $i$ to the coordinate tuple of $v_i$. Let \lstinline|B : Basis (Fin r) R S| be an $R$-basis of $S$. Then we define the ideal $I$ generated by the $v_i$ as 
	\begin{lstlisting}
		I := Ideal.span (Set.range (fun i ↦ B.equivFun.symm (v i)))
	\end{lstlisting}
	which has type \lstinline|Ideal S|. The function \lstinline|B.equivFun.symm| is the $R$-linear isomorphism \lstinline[mathescape]!(Fin r → R) $\simeq_l$[R] S! sending a tuple in $R^r$ to the element in $S$ with those coordinates (relative to the basis \lstinline|B|).

	Although we do not define ideals from their $R$-generators, given an ideal \lstinline|I : Ideal S|, we may assert that it is generated by $(w_1, \ldots, w_{n})$ as an $R$-module via the proposition
	\begin{lstlisting}
		I.carrier = Submodule.span R (Set.range (fun i ↦ B.equivFun.symm (w i)))
	\end{lstlisting}
	with \lstinline|w : Fin n → Fin r → R|. Here, \lstinline|I.carrier| denotes the carrier set of \lstinline|I|. A description of $I$ in terms of $R$-generators will be useful for certain computations, such as the norm later described in Section \ref{sec : certnorm}. 
	
	As illustrated in the rest of this section, for computational purposes, ideals of $S$ will simply be represented as collections of tuples of coordinates corresponding to their $R$ or $S$-generators.

	\subsection{Ideal equality}\label{idealeq}
	Given two representations of the same ideal, we want to verify their equality by only performing additions and multiplications inside $S$. Let $I = \langle v_1, \ldots, v_{m}\rangle $ and $J = \langle w_1, \ldots, w_{n}\rangle $. To prove $I = J$, it suffices to verify the mutual inclusions \(I \subseteq J\) and \(J \subseteq I\). For the inclusion $I \subseteq J$, it is enough to exhibit, for every $1 \leq i \leq m$, a witness tuple $g_i \in S^n$ such that $v_i = \sum_k g_{ik}  w_k$ (and vice versa for $J \subseteq I$). Verifying these identities requires a total of $2mn$ multiplications in $S$.

	To implement this certificate in Lean, we define the following \href{https://github.com/alainchmt/CertifyingInvariantsNF/blob/v1/IdealArithmetic/IdealArithmetic/IdealArithmetic.lean#L327}{structure}. 
	\begin{lstlisting}
structure IdealEqCertificateO' {R : Type*} [CommRing R]{r n m : ℕ} 
  (T : Fin r → Fin r → List R)  (v : Fin m → Fin r → R) (w : Fin n → Fin r → R)  where ...
	\end{lstlisting}
	The parameter $T$ encodes the multiplication table, while \lstinline|v| and \lstinline|w| encode the coordinates of $S$-generators $v_i $ of $I$ and $w_j$ of $J$. 
	Included as fields of this structure are \lstinline|g : Fin m → Fin n → Fin r → R|, the tuples $g_i$ witnessing the inclusion $I \subseteq J$  (analogously for $J \subseteq I$), together with the proofs of the identities described above purely in terms of lists and tuples over $R$ (using the multiplication table). 
    
    Recall that our main use case is $R = \mathbb{Z}$, which has decidable equality in Lean. The tuples witnessing the inclusions, which serve as a certificate of equality, can be computed externally using a CAS. The proofs of the corresponding identities can then be automatically discharged using the tactic \lstinline|decide|.  
    
	Importantly, this structure refers purely to data and proofs in terms of tuples and lists over $R$. It makes no explicit mention of ideals or the $R$-algebra $S$. The only typeclass assumption is that of $R$ being a commutative ring, which reduces typeclass inference time when constructing terms of this type.
	
	Under the assumption that $T$ gives a multiplication table for $S$, and that $I$ and $J$ are generated by $v$ and $w$, respectively, we obtain a proof of \lstinline|I = J| from a term of type \lstinline|IdealEqCertificateO' T v w|.

	Sometimes it is convenient to work with a set of $R$-generators of $I$, or better yet, an $R$-basis (if one exists).  Given an ideal described in terms of $S$-generators, say $I = \langle v_1, \ldots, v_{m}\rangle$, and a set of elements $\{w_1, \ldots, w_{n}\} \subseteq S$, we want to certify that $I = \operatorname{span}_{R} \{w_1, \ldots, w_{n}\}$. Note that for the inclusion $I \subseteq \operatorname{span}_{R} \{w_1, \ldots, w_{n}\}$ it is not sufficient to show that each $v_i \in \operatorname{span}_{R} \{w_1, \ldots, w_{n}\}$ as we do not assume that we have, a priori, a proof that $\operatorname{span}_{R} \{w_1, \ldots, w_{n}\}$ is an ideal of $S$.  
	
	Let $(b_1, \ldots, b_{r})$ be an $R$-basis of $S$. Then $$I= \operatorname{span}_R \{v_1b_1,\ldots, v_1b_{r}, \dots, v_{m}b_1, \ldots, v_{m}b_{r} \}.$$ Proving $I \subseteq \operatorname{span}_{R} \{w_1, \ldots, w_{n}\}$ is done by providing, for every $1 \leq i \leq m $ and every $ 1 \leq j \leq r $ a tuple $g_{ij} \in R^n$ such that $v_ib_j = \sum_k g_{ijk} \cdot w_k$. The other inclusion is proved by exhibiting, for every $1 \leq j \leq n$, a matrix $f_j \in R ^ {m \times r}$ such that $w_j = \sum_{i,k} f_{jik} \cdot v_i b_k$. In total, we require $rm$ multiplications in $S$ and $2rmn$ scalar multiplications.  
	We implement this certificate as the \href{https://github.com/alainchmt/CertifyingInvariantsNF/blob/v1/IdealArithmetic/IdealArithmetic/IdealArithmetic.lean#L204}{structure}
	\begin{lstlisting}
structure IdealEqSpanCertificate' {R : Type*} [CommRing R] {r n m : ℕ}
  (T : Fin r → Fin r → List R) (v : Fin m → Fin r → R) (w : Fin n → Fin r → R)  where ...
	\end{lstlisting}
	It includes the tuples $g_{ij}$, the matrices $f_j$, and the coordinates of $v_ib_j$ as part of its fields, together with the proofs of the required identities in terms of lists and tuples over $R$. Under the assumption that $T$ gives a multiplication table for $S$ and that $I$ is generated by $v$, from a term of this type one extracts a proof of $I = \operatorname{span}_R \{w_1,\ldots, w_{n}\}$. 
	
	\begin{remark}
The usual approach to ideal equality in CASs \cite[Section 4.7]{Cohen} is to pass to \(\mathbb{Z}\)-generators. If $I$ is given by $m$ $\mathcal{O}_K$-generators, multiplying them with a $\mathbb{Z}$-basis of $\mathcal{O}_K$ of size $r$ gives a set of $mr$ $\mathbb{Z}$-generators of $I$. These are stored as an $r \times (mr)$ integer matrix. Ideal equality is checked by comparing the Hermite normal forms of these matrices. 
However, computing the $\mathbb{Z}$-generators of $I$ and $J$ (given by $m$ and $n$ $\mathcal{O}_K$-generators, respectively) requires $nr + mr$ multiplications in $\mathcal{O}_K$. By contrast, \lstinline|IdealEqCertificateO'| requires  $2mn$ multiplications. Since $\mathcal{O}_K$ is a Dedekind domain, we can always take $m, n \leq 2$, so our certificate is cheaper as soon as $r > 2$. 

	\end{remark}

	\subsection{The norm of an ideal}\label{sec : certnorm}
    Recall that the norm $N(I)$ of a nonzero ideal $I$ of $\mathcal{O}_K$ is defined as the cardinality of the finite quotient ring $\mathcal{O}_K/I$. 
	It is essential for our purposes to compute $N(I)$ in Lean. We do this by relating it to the more general notion of \emph{index}. 
	
Recall our setting with $S$ an $R$-algebra that is free of rank $r$ as an $R$-module.  Furthermore, suppose that $R$ is a PID and $S$ an integral domain. Under these assumptions, a nonzero ideal $I$ of $S$ admits an $R$-basis of size $r$. Hence, if we know that $I = \operatorname{span}_{R}\{v_1, \ldots, v_r\}$, then we can conclude that $(v_1, \ldots, v_r)$ is an $R$-basis of $I$. 
	 The index $[S : I]$ is an element in $R$ defined (up to units in $R$) as the determinant of the $r \times r$ matrix representing the $R$-linear map given by inclusion $\iota : I \hookrightarrow S$. In Lean, this was formalized in \cite[Section 5.3]{rings}. When $R = \mathbb{Z}$, then the absolute value of $[S : I]$ is precisely $\#S/I$.

     If the coordinates of $v_i$ with respect to an $R$-basis $\mathcal{B}$ are $(a_{ij})_{j=1}^r \in R^r$, then $\iota$ is represented by the matrix $(a_{ij})_{ij} ^ T$, and thus $[S : I] = \det (a_{ij})_{ij}$. In fact, it is always possible to find a basis of $I$ such that $(a_{ij})_{ij}$ is lower triangular. In that case $[S : I] = \prod_i a_{ii}$. 
	This allows us to compute $[S : I]$ cheaply, as we can freely choose a convenient $R$-basis of $I$ found using external means, such as a CAS. After showing that $I$ is indeed spanned by such basis using the techniques in Section \ref{idealeq}, computing $[S : I]$ entails simply computing a product of $r$ elements in $R$, thus allowing us to compute norms of ideals in rings of integers of number fields. We formalize this result as \href{https://github.com/alainchmt/CertifyingInvariantsNF/blob/v1/IdealArithmetic/IdealArithmetic/CertifyPrimeIdeal.lean#L103}{ \lstinline|ideal_index_associated_prod|} and specialize to $\mathbb{Z}$ and the norm in \href{https://github.com/alainchmt/CertifyingInvariantsNF/blob/v1/IdealArithmetic/IdealArithmetic/CertifyPrimeIdeal.lean#L137}{\lstinline|ideal_norm_eq_prod'|}. 
	
	\begin{example}\label{ex:norm}
		Consider the number field $K = \mathbb{Q}(\alpha)$ obtained by adjoining a root $\alpha$ of the polynomial $f = X^5 + 10X^3 - 10X^2 - 15X - 18$. An integral basis for a subring $\mathcal{O}$ is given by $$\mathcal{B} = (b_1, \ldots, b_5) = (1, \alpha,  \alpha^2, 1/2\alpha^3 - 1/2\alpha, 1/24\alpha^4 - 1/8\alpha^3 - 5/24\alpha^2 + 5/24\alpha - 1/4).$$

Take the ideal defined by $I = \langle 3, 2b_1 + 3b_2 -3b_3 -b_4-5b_5\rangle$. In \href{https://github.com/alainchmt/CertifyingInvariantsNF/blob/v1/IdealArithmetic/Examples/Paper/NormExample.lean#L9}{Lean},
		\begin{lstlisting}[mathescape]
def I : Ideal $\mathcal{O}$ :=
    Ideal.span (Set.range (fun i ↦ B.equivFun.symm (![![3, 0, 0, 0, 0], ![2, 3, -3, -1, -5]] i)))
		\end{lstlisting}

		Using \lstinline|IdealEqSpanCertificate'|, we can verify that
		\begin{align*}
			I = \operatorname{span}_{\mathbb{Z}} \left\{ \begin{bmatrix}
				3 & 0 & 0& 0& 0 \\
				0 &3 &0 &0& 0 \\
				2 &2& 1 &0 &0 \\
				1 &1 &0& 1 &0 \\
				0& 1& 0 &0 &1
			\end{bmatrix} \cdot \begin{pmatrix}
				b_1 \\ b_2 \\ b_3 \\ b_4 \\ b_5
			\end{pmatrix} \right\}, 
		\end{align*}
		and conclude that \href{https://github.com/alainchmt/CertifyingInvariantsNF/blob/v1/IdealArithmetic/Examples/Paper/NormExample.lean#L30}{\lstinline[mathescape]|Nat.card ($\mathcal{O}$ $/$ I) = 9|}, where \lstinline|Nat.card| gives the cardinality of a type (its zero if the type is infinite), by using \lstinline|ideal_norm_eq_prod'| and multiplying the diagonal of the matrix. 

	\end{example}

	\subsection{Ideal multiplication} \label{sec : idealmul}

	Given two ideals $I$ and $J$ of $S$, their product $IJ$ is the smallest ideal containing $ \{xy \mid x\in I, y \in J  \}$, or equivalently the $S$-span of this set. If $I,J,$ and $L$ are ideals of $S$, we want to certify $IJ = L$. In fact, for several applications we only need to prove the containment $IJ \subseteq L$. 

    Suppose that $I = \langle u_1, \ldots, u_{m}\rangle$,  $J = \langle v_1, \ldots, v_{n}\rangle$, and $L = \langle w_1, \ldots, w_{s}\rangle $. We have that $IJ = \langle u_1v_1, \ldots, u_1v_{n}, \ldots, u_{m}v_1, \ldots, u_{m}v_{n}\rangle$, and thus proving $IJ \subseteq L$ reduces to certifying ideal containment as described in Section \ref{idealeq}. In total, this requires $mn + mns$ multiplications in $S$. We implement the certificate in Lean as the \href{https://github.com/alainchmt/CertifyingInvariantsNF/blob/v1/IdealArithmetic/IdealArithmetic/IdealArithmetic.lean#L563}{structure} 
	\begin{lstlisting}
structure IdealMulLeCertificate' {R : Type*} [CommRing R] {r n m s : ℕ} (T : Fin r → Fin r → List R)
  (u : Fin m → Fin r → R) (v : Fin n → Fin r → R) (w : Fin s → Fin r → R)  where ...
	\end{lstlisting}
The functions \lstinline|u|, \lstinline|v|, and \lstinline|w| encode the coordinates of the $S$-generators of $I$, $J$, and $I J$, respectively. 
The coordinates of the $u_iv_j$ are included as fields, together with the proofs of the corresponding identities. From a term of this type, and under the assumption that $T$ is a multiplication table for $S$ and \lstinline|u|, \lstinline|v|, and \lstinline|w| encode $S$-generators for $I$, $J$, and $L$, respectively, we obtain a proof of $I  J \subseteq L $. We also define an alternative version \href{https://github.com/alainchmt/CertifyingInvariantsNF/blob/v1/IdealArithmetic/IdealArithmetic/IdealArithmetic.lean#L467}{\lstinline|IdealMulLeCertificate|}, where these assumptions are incorporated into the structure. 
	
	For repeated ideal multiplication, we want to prevent an exponential blow-up in the number of multiplications in $S$. To do this, we think of a goal like $I_0\dots I_k \subseteq J$, as a path starting from $I_0$ and ending at $J$. We proceed iteratively: first prove $I_0 I_1 \subseteq J_1$, then $J_1I_2 \subseteq J_2$, and continue until $J_{k-1}I_k \subseteq J$, while choosing a generating set with as few generators as possible for each intermediate ideal $J_i$. 
    Following this recipe requires chaining multiple multiplication containment certificates. To do this, we \href{https://github.com/alainchmt/CertifyingInvariantsNF/blob/v1/IdealArithmetic/IdealArithmetic/IdealArithmetic.lean#L630}{introduce} an inductive type defined as
	
	\begin{lstlisting}
inductive IdealMulLeChainCertificate {R : Type*} [CommRing R] {r m : ℕ}
   (T : Fin r → Fin r → List R) (u : Fin m → Fin r → R) : {s : ℕ} → (Fin s → Fin r → R) → Type _
  | nil : IdealMulLeChainCertificate T u u
  | cons {v} {uu} {w} : IdealMulLeChainCertificate T u v →
    IdealMulLeCertificate' T v uu w → IdealMulLeChainCertificate T u w
	\end{lstlisting}
A term \lstinline|C : IdealMulLeChainCertificate T u w| is a chain of multiplication containment certificates starting at \lstinline|u| and ending at \lstinline|w|, with no reference to the algebra $S$. Given a term \lstinline|TimesTable (Fin r) R S|, the function \href{https://github.com/alainchmt/CertifyingInvariantsNF/blob/v1/IdealArithmetic/IdealArithmetic/IdealArithmetic.lean#L646}{\lstinline|IdealMulLeChainCertificate.Ideals|} maps \lstinline|C| to the list of corresponding intermediate ideals. We \href{https://github.com/alainchmt/CertifyingInvariantsNF/blob/v1/IdealArithmetic/IdealArithmetic/IdealArithmetic.lean#L671}{prove} that the product of the ideals in this list is contained in the ideal with generators given by \lstinline|w|. 
	
We define analogous structures \href{https://github.com/alainchmt/CertifyingInvariantsNF/blob/v1/IdealArithmetic/IdealArithmetic/IdealArithmetic.lean#L552}{\lstinline|IdealMulEqCertificate'|} and \href{https://github.com/alainchmt/CertifyingInvariantsNF/blob/v1/IdealArithmetic/IdealArithmetic/IdealArithmetic.lean#L481}{\lstinline|IdealMulEqCertificate|} for certifying equality of ideals, where the latter bundles the encoded generators with their correspondence to ideals in $S$. Proofs of goals such as $I_0 \dots I_k = J$ are again done iteratively: $I_0I_1 = J_1$, then $J_1I_2 = J_2$, etc.

\begin{remark}
In CASs, ideals as usually stored via $\mathbb{Z}$-bases, partly because finding a two-generator representation can be costly. To compute $IJ$, one multiplies the basis elements of $I$ and $J$, giving a spanning set with $r^2$ elements, from which a $\mathbb{Z}$-basis is extracted. 
Thus, certifying $IJ=L$ using $\mathbb{Z}$-bases would require a number of multiplications in $\mathcal{O}_K$ that grows with $r$. However, our approach (since every ideal in $\mathcal{O}_K$ can be described with at most two generators) requires a fixed number of multiplications in $\mathcal{O}_K$ independent of $r$. This is advantageous as the degree of the number field increases.

\end{remark}

	\subsection{Primality certificate} \label{sec : primality}

	Determining the prime ideals of $\mathcal{O}_K$ of norm less than a given bound is crucial for the certification of the class group. Thus, given a prime ideal $I$, it is useful to be able to certify in Lean that $I$ is indeed prime, i.e., to construct a proof of \lstinline|Ideal.IsPrime I|. 
    
    Here, we take $R = \mathbb{Z}$ and assume that $S$ is an integral domain. Suppose that $I$ is a nonzero prime ideal of $S$. It can be shown that $S/I$ is finite, and thus 
    a finite field. Hence, $S/I$ is a finite extension of $\mathbb{F}_p$ for some prime number $p$. This implies that $\# S/I = p^n$ for some $n>0$ and $p \in I$. The extension $ (S/I) / \mathbb{F}_p $ is simple, so there exists an $a \in S / I$ with minimal polynomial $f \in \mathbb{F}_p[X]$ such that $S/I = \mathbb{F}_p(a) \cong \mathbb{F}[X]/(f)$, where $\deg f = n$. This suggests the following certificate. 
\newline 

	\paragraph{Primality certificate for ideals}
	\leavevmode
	\begin{multicols}{2}
		\begin{itemize}
			\item A natural number $n > 0$; 
			\item a prime number $p$; 
			\item a monic polynomial $f \in \mathbb{Z}[X]$ of degree $n$; 
			\item an element $a \in S$.
		\end{itemize}
	\end{multicols}

	Let $\bar{f}$ be the reduction of $f$ modulo $p$. Verification of the certificate amounts to checking that
	\begin{multicols}{2}
		\begin{enumerate}[label=(\roman*)]
			\item $p \in I$; 
			\item $\#(S / I) = p ^ n$; 
			\item $\bar{f}$ is irreducible in $\mathbb{F}_p[X]$; 
			\item $f(a) \in I$. 
		\end{enumerate}
	\end{multicols}
	
	The first two conditions guarantee that $S/I$ is an $\mathbb{F}_p$-algebra of dimension $n$. The condition $f(a) \in I$ implies that $\bar{f}(\bar{a}) = 0$ in $S/I$, with $\bar{a}$ being the residue class of $a$ in $S/I$. From $\bar{f}$ being irreducible in $\mathbb{F}_p[X]$, we conclude that $\bar{f}$ is the minimal polynomial of $\bar{a}$, and thus $\mathbb{F}_p(\bar{a})$ is a finite field extension of $\mathbb{F}_p$ of degree $n$. We have the inclusions $\mathbb{F}_p \subseteq \mathbb{F}_p(\bar{a}) \subseteq S/I$. Since both  $\mathbb{F}_p(\bar{a})$ and $S/I$ are of dimension $n$ as $\mathbb{F}_p$-vector spaces, they are equal. This proves that $S/I$ is a field and thus $I$ is a prime ideal. 
	
	To implement this in Lean, we define a \href{https://github.com/alainchmt/CertifyingInvariantsNF/blob/v1/IdealArithmetic/IdealArithmetic/CertifyPrimeIdeal.lean#L342}{structure} that takes as parameters a multiplication table \lstinline|T|, a collection \lstinline|v| corresponding to the $S$-generators of $I$, a collection \lstinline|w| corresponding to a $\mathbb{Z}$-basis of $I$, a term  \lstinline|A : IdealEqSpanCertificate' T v w| certifying this relation, and the natural number \lstinline|p|,

	\begin{lstlisting}
structure CertifiedPrimeIdeal' {r m : ℕ} [NeZero r] {T : Fin r → Fin r → List ℤ}
 {v : Fin m → Fin r → ℤ} {w : Fin r → Fin r → ℤ} (A : IdealEqSpanCertificate' T v w) (p : ℕ) ...
	\end{lstlisting} 

	This structure includes purely computational content, and so it refers exclusively to tuples and lists over $\mathbb{Z}$ and \lstinline|ZMod p|, the type of integers modulo $p$. 
 Among the fields of this structure, we include the nonzero \lstinline|n : ℕ|, the polynomial \lstinline|f : List ℤ| given as a list of coefficients, and the element \lstinline|a : Fin r → ℤ| given as a tuple of coordinates. Proofs of the verification statements are also included. 
	The proof of the condition $\#(S / I) = p ^ n$ is expressed in terms \lstinline|w|, representing the $\mathbb{Z}$-basis of $I$, as described in Section \ref{sec : certnorm}. For the third verification statement, we rely on an irreducibility certificate based on Rabin's test and implemented in Lean for polynomials over $\mathbb{F}_p$ described in \cite[Section 4]{rings}. For the fourth verification statement, a field \lstinline|c : Fin r → ℤ| representing the coordinates of $f(a)$ is included, together with a proof that this tuple corresponds to $f(a)$. This is done using list-based polynomial evaluation via the multiplication table. The membership statement (iv) is incorporated by including the coefficients of the linear combination of tuples in \lstinline|w| that equals \lstinline|c|. These proof obligations can be discharged automatically with \lstinline|decide|. 
    Under suitable assumptions, a term of type \lstinline|CertifiedPrimeIdeal' A p| proves the primality of the ideal generated by $v$. 

	\begin{example}
		Consider the ideal $I$ from Example \ref{ex:norm}. We prove that it is prime using the certificate. Take $n = 2$, the prime number $p=3$, the element in $\mathcal{O}$
        given by $a = -11b_1 + b_2 -10b_3 - 16b_4 - 45b_5$, and the polynomial $f = X^2 + 1$ in $\mathbb{Z}[X]$.
		We easily see that $3 \in I$. Furthermore, as computed in Example \ref{ex:norm}, $\# \mathcal{O} / I =  3 ^ 2$. So the first two verification statements are satisfied.
		Furthermore, one finds $f(a) = 1598b_1 -1642b_2 + 1331b_3 + 3328b_4 + 9954b_5$. We certify that $f(a) \in I$ by noticing that
		\begin{align*}
			\begin{pmatrix}
				-1464 & -5862 &  1331 & 3328 & 9954
			\end{pmatrix}  \cdot
			\begin{bmatrix}
				3 & 0 & 0& 0& 0 \\
				0 &3 &0 &0& 0 \\
				2 &2& 1 &0 &0 \\
				1 &1 &0& 1 &0 \\
				0& 1& 0 &0 &1
			\end{bmatrix}  = \begin{pmatrix}
				1598 & -1642 & 1331 & 3328 & 9954
			\end{pmatrix}.
		\end{align*}
		Finally, the polynomial $X^2 + 1 \pmod 3$ can be shown to be irreducible using Rabin's test or by noticing that it has no roots in $\mathbb{F}_3$.  We conclude that $I$ is a prime ideal of $\mathcal{O}$. 
        In Lean, assuming that \lstinline|v| represents $S$-generators of $I$ and \lstinline|w| a $\mathbb{Z}$-basis, we make the term
        \href{https://github.com/alainchmt/CertifyingInvariantsNF/blob/v1/IdealArithmetic/Examples/Paper/NormExample.lean#L14}{\lstinline|A : IdealEqSpanCertificate' T v w|}, and use it to construct \href{https://github.com/alainchmt/CertifyingInvariantsNF/blob/v1/IdealArithmetic/Examples/Paper/NormExample.lean#L58}{\lstinline[mathescape]|CP : CertifiedPrimeIdeal' A 3|} which allows us to prove \href{https://github.com/alainchmt/CertifyingInvariantsNF/blob/v1/IdealArithmetic/Examples/Paper/NormExample.lean#L77}{\lstinline| I.IsPrime|}.
	\end{example}

    \begin{remark}
    With $S$ as above (e.g. $S = \mathcal{O}_K$), this primality certificate always exists for any nonzero prime ideal of $S$. In the case that $\#(S/I) = p$, it follows trivially that $I$ is prime by taking $f = X$. 
    \end{remark}

	\begin{remark}\label{remarkkummer}
    In practice, one usually wants to find the prime ideals above a prime number $p$. For a number field $K = \mathbb{Q}(\alpha)$, with $T \in \mathbb{Z}[X]$ the minimal polynomial of $\alpha$ and with $p \nmid [\mathcal{O}_K : \mathbb{Z}[\alpha]]$, the Kummer-Dedekind theorem \cite[Theorem 4.8.13.]{Cohen} states a bijection between the irreducible factors $\bar{g}$ of $\bar{T} \in \mathbb{F}_p[X]$ and prime ideals above $p$, mapping $\bar{g} \mapsto \langle p, g(\alpha)\rangle$, whose norm is $p ^ {\deg g}$.
    This could be used to certify primality. Making the Mathlib \href{https://github.com/leanprover-community/mathlib4/blob/6f1c6456ee863edbcf96c4febc52cd1c8d07487f/Mathlib/NumberTheory/KummerDedekind.lean#L134-L176}{version} of this theorem fully explicit, suitable for computation, and integrated into our framework is part of ongoing work. For primes dividing $[\mathcal{O}_K : \mathbb{Z}[\alpha]]$, Kummer-Dedekind no longer applies and more sophisticated algorithms are required to compute the prime ideals above $p$ \cite[Section 6.2]{Cohen}. In that case, our primality certificate provides a more efficient alternative.

	\end{remark}

\begin{remark}
 We apply the certificates in this section to ideals of $\mathcal{O}_K$. To obtain the witness data, we implemented the necessary routines in SageMath, relying heavily on linear algebra over $\mathbb{Z}$. 
    \end{remark}

	\section{Efficient discriminant computation}\label{sec:disc}
	The discriminant $\operatorname{disc}(K) \in \mathbb{Z}$ of a number field $K$ is an important invariant that measures its arithmetic complexity. In previous work \cite{rings}, discriminants of a few degree-3 number fields were verified, but higher degrees were infeasible. 
    In this section, we formalize a more efficient procedure to verify this quantity, allowing the certification of discriminants for considerably higher degrees. 

	Suppose $(\omega_1, \ldots, \omega_n)$ is an integral basis of $\mathcal{O}_K$. Given an element $x \in \mathcal{O}_K$, the $\mathbb{Z}$-linear map $y \mapsto xy$ can be represented as a $n \times n$ matrix with integer entries. The trace of this matrix (independent of the choice of integral basis) is denoted by $\operatorname{Tr}(x)$ and is known as the trace of $x$ in $\mathcal{O}_K$. The discriminant $\operatorname{disc}(K)$ is defined as the determinant of the so-called trace matrix, given by 
	\begin{align*}
		\operatorname{disc}(K) := \det (\operatorname{Tr}(\omega_i\omega_j))_{ij}.
	\end{align*}

	This definition is in Mathlib as \href{https://github.com/leanprover-community/mathlib4/blob/6f1c6456ee863edbcf96c4febc52cd1c8d07487f/Mathlib/NumberTheory/NumberField/Discriminant/Defs.lean#L35-L36}{\lstinline|NumberField.discr K|}. However, computing $\operatorname{disc}(K)$ directly from its definition is inefficient. Following \cite{rings}, where the formula below was formalized, we use the fact that if $K = \mathbb{Q}(\alpha)$, with $\alpha$ a root of a monic irreducible polynomial $T \in \mathbb{Z}[X]$, then $\operatorname{disc}(K)$ can be found using
	\begin{align}\label{formuladisc}
		\operatorname{disc}(T) = [\mathcal{O}_K : \mathbb{Z}[\alpha]]  ^ 2 \operatorname{disc}(K),
	\end{align}
where $\operatorname{disc}(T)$ denotes the discriminant of the polynomial $T$. 
	In previous work \cite[Section 6]{rings}, the definition of the discriminant of a polynomial over any commutative ring $R$ via resultants was formalized in Lean. 
    These definitions have been refactored and added to Mathlib by Andrew Yang, Kenny Lau, and Anne Baanen. We use the Mathlib \href{https://github.com/leanprover-community/mathlib4/blob/6f1c6456ee863edbcf96c4febc52cd1c8d07487f/Mathlib/RingTheory/Polynomial/Resultant/Basic.lean}{definitions} for this formalization.

	The resultant $\operatorname{Res}(G,H)$ of two polynomials $G,H$ with coefficients over a commutative ring $R$ and of degree $m$ and $n$, respectively, is the determinant of an $(m + n) \times (m + n)$ matrix over $R$ called the Sylvester matrix, formed from the coefficients of $G$ and $H$. When $R$ is an integral domain and $G$ and $H$ are monic, one has $\operatorname{Res}(G,H) = \prod_{i,j} (\alpha_i - \beta_j)$, where the $\alpha_i$ and $\beta_j$ are the roots of $G$ and $H$, respectively, counted with multiplicity in an algebraically closed field. For a monic polynomial $P$, we have that $\operatorname{disc}(P) = (-1)^{n(n-1)/2}\operatorname{Res}(P,P')$, with $P'$ the formal derivative of $P$.

	In principle, since the determinant of an explicitly given matrix with integer entries is computable in Lean, the discriminant of an explicit polynomial with integer coefficients is computable.
	However, there is a major bottleneck that quickly leads to stack overflow. The issue arises because, in Mathlib, the determinant is defined via the Leibniz formula, which is not appropriate for efficient computation. As a consequence, the verifications of discriminants of number fields in \cite{rings} was limited to degree three.
    To overcome this, we replace the determinant-based computation with a more efficient approach. 
    
	Similar to how one computes the GCD of two polynomials using the division algorithm, it is possible to use a pseudo-remainder sequence to efficiently compute their resultant \cite[Algorithm 3.3.1]{Cohen}. Let $R$ be an integral domain and $p$ and $q$ polynomials in $R[X]$.
	\begin{definition}\label{def-psuedo}
		A pseudo-remainder sequence starting at $p$ and $q$ is a sequence of polynomials $P_0$, $P_1$, $\ldots$, $P_{k+1}$ of decreasing degree such that $P_0 = p$, $P_1 = q$, $P_{k+1}$ is a constant polynomial, and for every $0 \leq i < k$ there exists nonzero $e_i, f_i$ in $R$ and $Q_i$ in $R[X]$ such that
		\begin{align}\label{pseudo-rem}
			e_i P_i = Q_i  P_{i+1} - f_i  P_{i+2}.
		\end{align}
	\end{definition}

	A sequence of this type can always be constructed starting from any pair of polynomials. Pseudo-remainder sequences have been formalized in proof assistants before. In Coq/Rocq \cite{assia} they are defined similarly to Definition \ref{def-psuedo} but with $f_i = -1$. Sequences of this form are also used in the formalization of the subresultant algorithm in Isabelle/HOL in \cite{subresultant}. However, the construction in our formalization, which will be discussed in Section \ref{sec : sturm_pseudo}, differs from these by allowing flexibility in both families of constants $f_i$ and $e_i$, as does Cohen in \cite{Cohen}. The reason is that this allows us to restrict the computations to the ring $R$ instead of passing to its fraction field, while also controlling for coefficient growth.

	It is possible to show that if $P_{i+2} \neq 0$ (and thus $P_i \neq 0$ and $P_{i+1} \neq 0$), then
	\begin{align*}
		e_i^{\deg{P_{i+1}}} & \operatorname{Res}(P_i, P_{i+1}) =  \\
		& (-1) ^ {(\deg{P_i} + 1) \deg{P_{i+1}}}  f_i ^ {\deg{P_{i+1}}}
		l(P_{i+1}) ^ {\deg{P_i} - \deg{P_{i+2}}}  \operatorname{Res}(P_{i+1}, P_{i+2}),
	\end{align*}
	where $0 \leq i < k$ and $l(P_{i+1})$ is the leading coefficient of $P_{i+1}$.

	We prove this in Lean in three stages. We first show the equivalent equality over a field $F$, dividing both sides by $e_i^{\deg{P_{i+1}}}$ and assuming that $P_i, P_{i+1}$, and $P_{i+2}$ split completely in $F$. This relies on the fact that $\operatorname{Res}(P, \prod_{j=1}  ^ s (X - c_j)) = (-1)^{s \deg P} \prod_j  P(c_j)$ and $\operatorname{Res}( \prod_{j=1}  ^ t (X - u_j), Q) =  \prod_j Q(u_j)$. Second, we extend the result to polynomials over any field (where they do not necessarily split) by passing to an algebraic closure and applying the first result. Finally, we prove the above equation over any integral domain $R$ by embedding $R$ in its fraction field.

	If $P_{k+1} = 0$, then $p$ and $q$ have a common root (in the algebraic closure of the fraction field of $R$) and so $\operatorname{Res}(p,q) =0$. If $P_{k+1} = c \neq 0$, we can use the previous recursive relation to conclude that
	\begin{align}\label{resformula}
		\operatorname{Res}(p,q) = \frac{\left(\prod_{i = 0} ^ {k-1} (-1) ^ {(\deg{P_i} + 1)\deg{P_{i+1}}} f_i ^ {\deg{P_{i+1}}} l(P_{i+1}) ^ {\deg{P_i} - \deg{P_{i+2}}}\right)} {\prod_{i = 0} ^ {k-1} e_i ^ {\deg{P_{i+1}}}} c ^ {\deg{P_k}}.
	\end{align}

	We formalize this result as \href{https://github.com/alainchmt/CertifyingInvariantsNF/blob/v1/IdealArithmetic/Signature/ResultantRecursive.lean#L249}{\lstinline|resultant_eq_div_of_premainder_sequence|}. To encode the sequence of polynomials, we use a list \lstinline|P : List R[X]. | Alternatively, we could have used a tuple \lstinline|P : Fin (k + 2) → R[X]|. However, lists are better suited for the inductive arguments required in the proofs and perform well in practice for computations.
	Similarly, we encode the constants \lstinline|e : List R| and \lstinline|f : List R| as lists over $R$. Working with lists in Lean requires us to constantly include bounds on the indices in the assumptions to avoid index out-of-bound errors. The pseudo-remainder sequence condition is
	\begin{lstlisting}
∀ i, ∀ h : i + 2 < P.length , C (e[i]) * P[i] = Q[i] * P[i + 1] - C (f[i]) * P[i + 2]
	\end{lstlisting}
	After including the assumption \lstinline|h0 :  P[0] = p| and \lstinline|h1 : P[1] = q|, we express the fact that the last entry in $P$ is a nonzero constant by including \lstinline|hcl : P.getLastD 0 = C c| and \lstinline|hcz : c ≠ 0|.

    Together with the appropriate conditions on the degrees of the polynomials and on the lengths of the lists to avoid index out-of-bounds errors, we prove that \lstinline|resultant p q| equals the right-hand-side of \eqref{resformula}. For example, the numerator of the fraction is written in Lean as:

	\begin{lstlisting}
		((∏ i : Fin (P.length - 2), ((-1) ^ ((P[i].natDegree + 1) * P[↑i + 1].natDegree) *
		f[i] ^ (P[↑i + 1].natDegree) * P[↑i + 1].leadingCoeff ^ (P[i].natDegree - P[↑i + 2].natDegree)))
	\end{lstlisting}

	The bookkeeping provided by the proof assistant was particularly valuable in the formalization of this formula, as it helped us catch indexing errors that we had not noticed on paper.

	To actually compute with this formula, coefficient and degree extraction in polynomials must be computable operations. We prove an alternative version of the previous statements using a computable representation of polynomials (as described in Section \ref{sec : model}) via lists of coefficients. The pseudo-remainder sequence is then encoded as a list of lists \lstinline|P : List (List R)| .

	Our construction of pseudo-remainder sequences in Lean will be discussed in Section \ref{sec : sturm_pseudo}. Such sequences have applications beyond resultant computations. In particular, we will use them to count the number of real roots of a polynomial with real coefficients, employing a theorem due to Sturm. The discriminant computation then arises as a simple by-product. This lets us certify polynomial discriminants, and consequently number field discriminants, of substantially higher degree than in previous work (see Section \ref{sec : lmfdb}), including of degree 10. 

	\section{Certifying the signature of a number field} \label{sec : sign}
	In order to certify the class group of a number field $K$, we will need some information about the unit group of $\mathcal{O}_K$. In 1846, Dirichlet described its structure as
	\begin{align}\label{eq : dirichlet}
		\mathcal{O}_K^{\times} \cong \mu(\mathcal{O}_K) \times \mathbb{Z}^ {r_1 + r_2 - 1}, 
	\end{align}
	where $\mu(\mathcal{O}_K) = \{x \in \mathcal{O}_K^{\times} \mid \exists n \in \mathbb{Z}_{\geq 1}, x ^ n = 1 \}$ is the finite and cyclic group of roots of unity in $\mathcal{O}_K$, and  $(r_1, r_2)$ is the signature of the number field.  Dirichlet's unit theorem \cite[Theorem 4.9.5]{Cohen}, which \href{https://github.com/leanprover-community/mathlib4/blob/6f1c6456ee863edbcf96c4febc52cd1c8d07487f/Mathlib/NumberTheory/NumberField/Units/DirichletTheorem.lean#L457-L458}{exists} in Mathlib, shows that the signature determines the rank of the free part of the unit group $\mathcal{O}_K^{\times}$.

	Suppose $K = \mathbb{Q}(\alpha)$, where $\alpha$ has minimal polynomial $T \in \mathbb{Q}[X]$, so that $K \cong \mathbb{Q}[X]/(T)$. If $\xi \in \mathbb{C}$ is a root of $T$, then there is a unique ring homomorphism $K\to \mathbb{C}$ sending $\alpha \mapsto \xi$. This is a real embedding if $\xi \in \mathbb{R}$,  and a non-real complex embedding otherwise. Every embedding $K \hookrightarrow \mathbb{C}$ arises in this form. Consequently, real embeddings are in bijection with the real roots of $T$, and the non-real complex embeddings are in bijection with the non-real roots of $T$.

	It follows that, in order to compute the signature of $K$, it suffices to determine the number of real roots of a defining polynomial $T$. It turns out that from a pseudo-remainder sequence such as \eqref{pseudo-rem} starting with $T$ and $T'$ with positive constants $e_i$ and $f_i$, it is possible to extract two sequences of integers whose difference in the number of changes in sign gives the total number of real roots of $T$. This is a version of Sturm's theorem, which we formalize in Lean in the next section.

	\subsection{Sturm's theorem} \label{sec-sturm}
	In the 19th century, C. Sturm published a theorem for counting the number of real roots of a polynomial with real coefficients \cite{sturm}. The version we are interested in is the following, appearing in \cite[Theorem 4.1.10]{Cohen}. 

	\begin{theorem}\label{Sturm}
		Let $T \in \mathbb{R}[X]$ be a square-free polynomial. Let $P_0,\ldots, P_{k+1}$ be a pseudo-remainder sequence starting with $T$ and $T'$, where $e_i > 0$ and $f_i > 0$ for all $i=0, \ldots, k-1$. Let $l_i = l(P_i)$ be the leading coefficient of $P_i$ and $d_i = \deg(P_i)$. If $s$ is the number of sign changes in the sequence $[l_0, l_1, \ldots, l_{k+1}]$ and $t$ is the number of sign changes in $[(-1)^{d_0}l_0, \ldots, (-1)^{d_{k+1}}l_{k+1}]$, then $t - s$ equals the total number of real roots of $T$.
	\end{theorem}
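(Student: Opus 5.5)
We follow the classical argument: define the sign-variation count $V(x)$ of the sequence $[P_0(x), \ldots, P_{k+1}(x)]$ for $x \in \mathbb{R}$ not a root of any $P_i$, show that $V$ only changes when $x$ passes a root of $T$ and then drops by exactly one, and evaluate at $\pm\infty$.

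\emph{Step 1 (behaviour at infinity).} For $|x|$ large, $\operatorname{sign} P_i(x)$ equals $\operatorname{sign} l_i$ when $x\to+\infty$ and $\operatorname{sign}((-1)^{d_i} l_i)$ when $x\to-\infty$. Hence $V(+\infty)=s$ and $V(-\infty)=t$. Then the claim becomes $V(-\infty)-V(+\infty)=\#\{\text{real roots of } T\}$.

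\emph{Step 2 (the last term and common roots).} Since $T$ is square-free, $\gcd(T,T')=1$. From \eqref{pseudo-rem} with $e_i,f_i\neq 0$, any common root of $P_{i+1},P_{i+2}$ is a common root of $P_i,P_{i+1}$. Descending the sequence, it is also a common root of $T,T'$, which is impossible. The same reasoning shows $P_{k+1}$ is a nonzero constant, since $P_{k+1}=0$ would give a common root. So no two consecutive $P_i$ vanish simultaneously, and $P_{k+1}$ never changes sign.

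\emph{Step 3 (local analysis).} $V$ is locally constant away from the finite set of roots of all $P_i$. Let $\xi$ be such a root.

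If $P_{i+1}(\xi)=0$ with $1\le i+1\le k$, then \eqref{pseudo-rem} gives $e_iP_i(\xi)=-f_iP_{i+2}(\xi)$. Because $e_i,f_i>0$ and $P_i(\xi)\neq0$, the values $P_i$ and $P_{i+2}$ have strictly opposite signs near $\xi$. The triple $(P_i,P_{i+1},P_{i+2})$ therefore contributes exactly one sign change on both sides of $\xi$, whatever the sign of $P_{i+1}$.

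If $T(\xi)=0$, then $T'(\xi)\neq0$. Near $\xi$, $T$ passes from sign $-\operatorname{sign}T'(\xi)$ to $\operatorname{sign}T'(\xi)$. So the pair $(P_0,P_1)$ has one variation just left of $\xi$ and none just right of it.

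Several $P_j$ may vanish at the same $\xi$ at non-adjacent indices. The variation count splits into contributions of overlapping windows, and these cases combine to give $V(\xi^-)-V(\xi^+)=[T(\xi)=0]$. Summing over all critical points gives $t-s$ equal to the number of real roots.

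\emph{Main obstacle.} The hard part is making Step 3 rigorous and formal: treating zeros of several $P_j$ at the same point, and defining $V$ cleanly, for instance by skipping zeros. I would first try evaluating signs on a small interval $(\xi-\varepsilon,\xi+\varepsilon)$ chosen so that no other root of any $P_i$ lies in it, then use the intermediate value theorem and continuity of polynomials. To stay generic I would state everything over a real closed field. An alternative is to prove the equality by induction on the list length via a ``Cauchy index'' formulation: show $V(a)-V(b)=\operatorname{Ind}_a^b(P_1/P_0)$ and that this index is preserved by the relation \eqref{pseudo-rem} when $e_i,f_i>0$. This avoids the simultaneous-zero case analysis and is better suited to induction over lists.
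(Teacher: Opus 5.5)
Your proposal is correct and follows essentially the same route as the paper. The paper argues by a local analysis on a small interval around each point where some $P_i$ vanishes. There the mean value theorem over a real closed field gives the sign behaviour of $T$ at its own roots, and the relation $e_iP_i = Q_iP_{i+1} - f_iP_{i+2}$ with $e_i,f_i>0$ handles the interior zeros. These local contributions are summed over the finitely many such points and combined with the signs of the leading coefficients at $\pm\infty$.

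The paper packages the summation as an induction on the number of such points, with the hypothesis quantified over all subintervals. Your ``overlapping windows'' step becomes rigorous once you write $V(x)$ as a sum over adjacent pairs $(P_j,P_{j+1})$: the pairs attached to distinct zero indices, which are necessarily non-adjacent, are disjoint.
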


	Note that since $T$ is square-free, its roots are necessarily distinct. Theorem \ref{Sturm} is a corollary of a more general statement: if $s(a)$ denotes the number of sign changes, ignoring zeros, in the sequence $[P_0(a), \ldots, P_{k+1}(a)]$, and assuming $T(a) \neq 0$ and $T(b) \neq 0$, then the number of real roots in the interval $[a,b]$ is given by $s(a) -s(b)$. This is the classical interval version of Sturm's theorem.

	The proof of this theorem relies on fundamental properties of the real numbers which hold in the more general setting of \emph{real closed fields}. In Section \ref{realclosed}, we define these fields in Lean by following \cite{assia}, and formalize the results needed to prove Theorem \ref{Sturm}. Concurrently with our work, an alternative definition of real closed fields, now in Mathlib, was developed independently. However, the results we derive, including Rolle's theorem and the mean value theorem for polynomials over real closed fields, were not previously available in Lean, to the best of our knowledge. 
    We discuss this and other related work, and its relation to ours, in Section \ref{sturmrel}. In Section \ref{sec : sturm_pseudo} we formalize Sturm sequences over an arbitrary linearly ordered commutative ring, using a more general definition than those in previous formalizations we are aware of, which is suitable for computation, and will allow us to keep all arithmetic in $\mathbb{Z}$ and control coefficient explosion. We then describe our formalization of Theorem \ref{Sturm} using these sequences.

	\subsubsection{Real closed fields}\label{realclosed}

	In \cite{assia}, Cohen and Mahboubi defined real closed fields in Coq/Rocq and formalized several results in real algebraic geometry, including the mean value theorem, Cauchy indices, and Tarski-queries, from which the interval version of Sturm's theorem could, in principle, be derived as a corollary. Formalizing Cauchy indices and Tarski-queries was beyond the scope of this project, although we draw substantial inspiration from their work and their main reference \cite{realbook}. 

	There are many equivalent characterizations of a real closed field. In our formalization, we adopt the one in \cite{assia}, which is the definition that most directly yields the properties that we will need later on. An ordered field is a field $F$ equipped with a linear order $\leq$ such that $a \leq b$ implies $a + c \leq b + c$ and $a <b$ implies $a c < b c$ whenever $c>0$. We express this in Lean with \lstinline|(F : Type*) [Field F]|, together with the instance \lstinline|[LinearOrder F]| providing the total order relation, and the typeclass \lstinline|[IsStrictOrderedRing F]| stating that multiplication by a positive constant and addition are strictly monotone operations. A real closed field is defined as an ordered field satisfying the intermediate value theorem for polynomial functions: for every polynomial $P \in F[X]$ and $a,b \in F$ with $a \leq b$ and $t \in F$ such that $P(a) < t < P(b)$ or $P(b) < t < P(a)$, there exists $c \in (a,b)$ with $P(c) =t$. In Lean, we \href{https://github.com/alainchmt/CertifyingInvariantsNF/blob/v1/IdealArithmetic/Signature/RealClosedField.lean#L38}{define}

	\begin{lstlisting}
def IsRealClosedField (F : Type*) [Field F] [LinearOrder F] [IsStrictOrderedRing F] : Prop :=
   ∀ {a b t : F} , ∀ {P : F[X]},
    a ≤ b → t ∈ Set.Ioo (P.eval a) (P.eval b) → ∃ s, s ∈ Set.Ioo a b ∧ P.eval s = t
	\end{lstlisting}
	The term \lstinline|Set.Ioo (P.eval a) (P.eval b)| denotes the open interval $(P(a), P(b))$, which is empty if $P(a) \geq P(b)$. The term \lstinline|Set.Ioo a b| denotes the interval $(a,b)$. It is unnecessary to include a separate condition with \lstinline|Set.Ioo (P.eval b) (P.eval a)| as this follows from the above one by replacing $P$ with $-P$.

	A proof of \lstinline|IsRealClosedField  ℝ| is obtained directly from the intermediate value theorem available in Mathlib for complete linear orders and the continuity of polynomial functions in a topological ring. Other examples of real closed fields include the algebraic real numbers $\overline{\mathbb{Q}} \cap \mathbb{R}$ and the hyperreal numbers, although we did not formalize this.

	A key ingredient for proving Sturm's theorem is
	the mean value theorem for polynomials over a real closed field. It states that for every $a,b \in F$ with $a < b$, there is $c \in (a,b)$ such that $P(b) - P(a) = P'(c)(b - a)$. \href{https://github.com/alainchmt/CertifyingInvariantsNF/blob/v1/IdealArithmetic/Signature/RealClosedField.lean#L309}{Formally},
	\begin{lstlisting}
theorem mean_value_theorem  (hc : IsRealClosedField F) {a b : F} {P : F[X]} (hab : a < b) :
   ∃ c ∈ Ioo a b , P.eval b - P.eval a = ((derivative P).eval c) * (b - a) := by ...
	\end{lstlisting}
	To prove this, we followed the strategy outlined in \cite{assia}. First proving Rolle's theorem in real closed fields by induction on the maximal number of roots in a given interval. Formulating the precise statement for the induction hypothesis is a subtle matter, and a crucial observation is that it must hold for arbitrary intervals rather than just the one we start with. We refer to \cite[pag. 17]{assia} for details.

	\subsubsection{Related work on Sturm's theorem and real closed fields} \label{sturmrel}
	Sturm's theorem has an extensive history of mechanization. It was first formalized by Harrison in HOL in 1997 for squarefree polynomials \cite{harrison}. Manuel Eberl completed a formal proof in Isabelle/HOL in 2014 \cite{eberl}, and Narkawicz, Muñoz, and Dutle provided another formalization in 2015 in the PVS proof assistant \cite{nasa}. All of these developments concern polynomials over the field of real numbers, rather than the more general framework of a real closed field. As previously mentioned, Cohen and Mahboubi formalized results over real closed fields in Coq/Rocq from which Sturm's theorem could be easily derived. 
	Parallel to our development, a \href{https://github.com/artie2000/real\_closed\_field}{repository} dedicated to real closed fields in Lean, with an alternative \href{https://github.com/leanprover-community/mathlib4/blob/6f1c6456ee863edbcf96c4febc52cd1c8d07487f/Mathlib/FieldTheory/IsRealClosed/Basic.lean#L42-L50}{definition} now in Mathlib, started to be developed by Artie Khovanov. Some of our formalized results, such as the mean value theorem, are now part of this repository. Also in parallel to our development, a version of Sturm's theorem, proven using Cauchy indices, was formalized by Pedro Saccomani, Sarah Pereira, and Tomaz Mascarenhas in Lean. However, their \href{https://github.com/artie2000/real_closed_field/blob/fc1481794180f95657e77fd70f2faa8f86294ec4/RealClosedField/SturmTarski/SturmSeq.lean#L17}{ definition} of a so-called \emph{Sturm sequence} is less general and is not suitable for our computational purposes. In the next section, we use a more general definition via pseudo-remainder sequences over a linearly ordered commutative ring, not necessarily a real closed~field. 
\subsubsection{Sturm sequences}\label{sec : sturm_pseudo}
	The finite sequence of polynomials appearing in Sturm's theorem is called a \emph{Sturm sequence}. 
	Let $R$ be a linearly ordered commutative ring. We say that a pseudo-remainder sequence (see Definition \ref{pseudo-rem}) starting from $p \in R[X]$ and $q \in R[X]$ is a \emph{Sturm sequence} if, additionally, all the constants $e_i, f_i$ are positive and the final polynomial $P_{k+1}$ is a nonzero constant. In Lean, we encode such a sequence as a list of polynomials \lstinline|P : List R[X]| together with the distinguished starting polynomials \lstinline|p q : R[X]|, and define the predicate  \href{https://github.com/alainchmt/CertifyingInvariantsNF/blob/v1/IdealArithmetic/Signature/Sturm.lean#L122}{\lstinline|IsSturmSequence P p q|} as
	\begin{lstlisting}
structure IsSturmSequence [LinearOrder R] (P : List R[X]) (p q : R[X])  where
   hlen : 2 ≤ P.length
   h0 : P[0] = p
   h1 : P[1] = q
   hc : ∃ c : R, c ≠ 0 ∧ P.getLastD 0 = C c
   hmono : ∀ i, ∀ h : i + 1 < P.length , P[i + 1].natDegree < P[i].natDegree
   hrem : ∀ i, ∀ h2 : i + 2 < P.length ,
    (∃ e f : R, ∃ Q : R[X], 0 < e ∧ 0 < f ∧ C e * P[i] = Q * P[i + 1] - C f * P[i + 2] )
	\end{lstlisting}
	The polynomials $p$ and $q$ are treated separately from the sequence for easier manipulation. We recall that our definition includes both sets of constants $e_i$ and $f_i$, allowing us to restrict computations to the ring $R$ and control coefficient growth. This differs from other formalizations of Sturm sequences. Harrison employs the relation $P_i = Q_i P_{i+1} - f_iP_{i+2}$ instead, omitting the constant $e_i$, and formulates it only for polynomials over the real numbers. Eberl, when defining an explicit Sturm sequence \cite{eberl}, uses the standard remainder sequence  $P_i = Q_i P_{i+1} - P_{i+2}$. Saccomani, Pereira, and Mascarenhas defined the sequence for polynomials over real closed fields using again the standard remainder sequence. 

	For a linearly ordered \lstinline|R : Type*| equipped with a zero and a list \lstinline|L : List R|, we define \lstinline|signChanges L| by first removing all zeros entries of \lstinline|L| and then counting the number of times a positive element is followed by a negative one and vice versa. 
The number of sign changes of a polynomial sequence $P_i$ at $a \in R$ is denoted by $s(a)$ and is, \href{https://github.com/alainchmt/CertifyingInvariantsNF/blob/v1/IdealArithmetic/Signature/Sturm.lean#L95}{formally}, \lstinline|signChanges| applied to the list $[P_0(a), \ldots, P_{k+1}(a)]$:
	\begin{lstlisting}
def signChangesPolySeq (P : List R[X]) (a : R) : ℕ := signChanges (List.map (fun x => x.eval a) P)
	\end{lstlisting}

	We also define \href{https://github.com/alainchmt/CertifyingInvariantsNF/blob/v1/IdealArithmetic/Signature/Sturm.lean#L101}{\lstinline|signChangesInfty P|} as the number of sign changes  `at positive infinity', the number of sign changes in the sequence of leading coefficients $[l_0, \ldots, l_{k+1}]$. 
	Similarly, \href{https://github.com/alainchmt/CertifyingInvariantsNF/blob/v1/IdealArithmetic/Signature/Sturm.lean#L107}{ \lstinline|signChangesNInfty P|}, the number of sign changes `at negative infinity', is defined as the number of sign changes in $[(-1)^{d_0}l_0, \ldots, (-1)^{d_{k+1}}l_{k+1}]$.

	Let us now consider the case $R = F$, where $F$ is a real closed field and let $a,b \in F$ with $a \leq b$. We follow a strategy similar to \cite{harrison} to prove Sturm's theorem for intervals. We start with a Sturm's sequence $P_0, \ldots, P_{k+1}$ with $P_0 = p$ and $P_1 = p'$. The set of roots of the polynomials $P_i$ in $[a,b]$ is given by $S = \{\xi \in [a,b] \mid \prod_i P_i (\xi) = 0\}$. It is finite and we can order its elements as $a \leq \xi_0 < \xi_1 < \ldots < \xi_{N-1} \leq b$.

	Suppose that $[c,d] \subseteq [a,b]$ is a sub-interval containing exactly one of the roots $\xi \in S$. We formally prove that if $p(\xi) \neq 0$ then  $s(c) = s(d)$, and if $p(\xi) = 0$ then $s(c) = s(d) + 1$. This step relies on the mean value theorem. In our formalization, we additionally assumed that $\xi \neq c$ and $\xi \neq d$ to avoid some technical difficulties when dealing with lists containing zeros. This simplifies the argument and is sufficient for our purposes, but it is not strictly necessary. By induction on $N$, with the induction hypothesis universally quantified over every interval (as done in \cite{assia} for Rolle's theorem) and using the previous result, one can show that if, for every index $i$, $P_i(a) \neq 0$ and $P_i(b) \neq 0$, then the number of zeros of $p$ in the interval $[a,b]$ is equal to $s(a) - s(b)$.
	\begin{remark}
		The assumption that $a$ and $b$ are not roots of any of the polynomials in the Sturm sequence is not strictly necessary (it is enough to require that $p(a) \neq 0$ and $p(b) \neq 0$). The stronger condition arises as a consequence of the assumption described above for a sub-interval $[c,d]$. In any case, this does not affect our ability to prove Theorem \ref{Sturm}.
	\end{remark}

	The sign of a nonzero polynomial $P\in F[X]$ evaluated at a sufficiently large $a \in R$ is equal to the sign of the leading coefficient $l(P)$ of $P$, while the sign of $P(b)$ for a sufficiently small $b$ is equal to the sign of $(-1)^{\deg P} l(P)$. By choosing a large enough interval $[a,b]$ containing all the roots of $p$, we can formally \href{https://github.com/alainchmt/CertifyingInvariantsNF/blob/v1/IdealArithmetic/Signature/Sturm.lean#L1476}{prove} Theorem \ref{Sturm} for polynomials over real closed fields. 
	\begin{lstlisting}
theorem sturm_theorem_total  (hc : IsRealClosedField F) {P : List F[X]} {p : F[X]}
      (hs : IsSturmSequence P p (derivative p)) :
   #(Multiset.toFinset p.roots) = signChangesNInfty P - signChangesInfty P := ...
	\end{lstlisting}
	The term \lstinline|Multiset.toFinset p.roots| is the finite set of roots of $p$ in $F$, and $\#$ denotes its cardinality.

	\subsection{Computing the number of real roots}
	We recall that our goal is to count the number of real roots of a square-free polynomial $T \in \mathbb{Z}[X]$. However, note that we cannot use \lstinline|sturm_theorem_total| with $F = \mathbb{R}$ directly. First, it uses Mathlib polynomials, whose operations are \lstinline|noncomputable| regardless of the base ring. Second, the type of real numbers $\mathbb{R}$ does not have decidable equality. We would like to carry out all of the arithmetic operations in the type $\mathbb{Z}$. To achieve this, we prove a variant of \lstinline|sturm_theorem_total| in the following setting: $R$ is a linearly ordered commutative ring, $F$ is a real closed field, and $f : R \to F$ is a strictly monotone ring homomorphism. For \lstinline|p : R[X]| and \lstinline|P : List R[X]| with \lstinline|h : IsSturmSequence P p (derivative p)| (note that this is defined over the base ring), we prove that \href{https://github.com/alainchmt/CertifyingInvariantsNF/blob/v1/IdealArithmetic/Signature/Sturm.lean#L1568}{\lstinline|#(Multiset.toFinset (map f p).roots) = signChangesNInfty P - signChangesInfty P|}.

	The term \lstinline|map f p| is the polynomial in $F[X]$ obtained by applying $f$ to each of the coefficients of $p$. The Sturm sequence $P$ is entirely defined over $R$. By taking $R = \mathbb{Z}$,  $F = \mathbb{R}$, and $f : \mathbb{Z} \hookrightarrow \mathbb{R}$ we obtain a version that relies only on integer arithmetic.
	Furthermore, in order to build a Sturm sequence and compute the corresponding sign changes, we use the representation of polynomials as lists. We \href{https://github.com/alainchmt/CertifyingInvariantsNF/blob/v1/IdealArithmetic/Signature/Sturm.lean#L1585}{define}
	\begin{lstlisting}
structure SturmBuilderOfList (P : List (List R)) (p : List R) (q : List R) where ...
	\end{lstlisting}
	Here, $P$ represents a sequence of polynomials, with each polynomial being encoded as a list of coefficients, while $p$ and $q$ correspond to the initial polynomials in the sequence. This structure is used to produce a proof that $P$ is a Sturm sequence, while also carrying the data of the constants $e_i$ and $f_i$. Indeed, the fields in this structure include
	\begin{lstlisting}
		hmono :  ∀ i , ∀ h : i + 1 < P.length, P[i + 1].length < P[i].length
		e : List R
		f : List R
		Q : List (List R)
		hrem : ∀ i, ∀ h2 : i + 2 < P.length ,
		   P[i].mulPointwise e[i] = Q[i] * P[i + 1] - P[i + 2].mulPointwise (f[i])
	\end{lstlisting}
	Here, \lstinline|hmono| corresponds to the condition of decreasing degrees of the polynomials in the sequence. The fields \lstinline|e| and \lstinline|f| are lists of positive elements in $R$, corresponding to the constants $e_i, f_i$ in the definition of a pseudo-remainder sequence, and \lstinline|hrem| is the key pseudo-remainder relation between these polynomials written in terms of lists. When $R = \mathbb{Z}$, all the proof obligations can be discharged with \lstinline|decide|. From a term of this type, we construct a term of type \lstinline|IsSturmSequence (List.map (ofList) P) (ofList p) (ofList q)| where \lstinline|ofList p| is the term of type \lstinline|R[X]| with coefficients given by the list \lstinline|p|. Furthermore, since a term \lstinline|SturmBuilderOfList P p q| stores the constants $e_i$ and $f_i$, it can be directly used to compute $\operatorname{Res}(p,q)$ using the formula in \eqref{resformula}, and hence to determine the discriminant of polynomials.

	Finally, by defining analogues of \lstinline|signChangesNInfty| and \lstinline|signChangesInfty| for polynomials represented as lists, we are able to compute the number of real roots of a polynomial defined over $\mathbb{Z}$.

	\begin{example}\label{sturm}
		Consider the irreducible polynomial over the integers $T = X ^ 5 - 3 X ^ 3 + 9 X - 8$. Its derivative is $T' = 5X^4 - 9X^2 + 9$. To construct a pseudo-remainder sequence with $P_0 = T$ and $P_1 = T'$, we note that if $A,B \in \mathbb{Z}[X]$, with $\deg(B) \leq \deg(A)$, then the quotient and remainder of $l(B)^{\deg(A) - \deg(B) + 1} A$ after division by $B$ lie in $\mathbb{Z}[X]$. We can set $e_i = l(P_{i+1})^{\deg(P_i) - \deg(P_{i+1}) + 1}$ and let $R_{i}$ be the remainder of dividing $e_iP_{i}$ by $P_{i+1}$. To reduce the size of the coefficients, we can let $f_i$ be the gcd of the coefficients of $R_i$ and set $P_{i+2} = R_{i}/f_i$ (see \cite[Algorithm 4.1.11]{Cohen} for an alternative way of choosing $f_i$). 
		Using a CAS, we compute 
\[
\begin{aligned}[t]
P_0 &= T &\quad& P_1 = T' &\quad& P_2 = 3X^3 - 18X + 20 \\
P_3 &= -63X^2 + 100X - 27 &\quad& P_4 = 3103X - 4752 &\quad& P_5 = 1
\end{aligned}
\]

		With quotient polynomials $Q = (5X, 15X, -300 - 189X, 10924 - 195489 X)$ and tuples of positive constants $e = (25, 9, 3969, 9628609)$ and $f = (10, 3, 15, 208061595)$. 
	Using these data, we define in Lean a term \href{https://github.com/alainchmt/CertifyingInvariantsNF/blob/v1/IdealArithmetic/Examples/Paper/SturmExample.lean#L11}{\lstinline|S : SturmBuilderOfList P p q|}
	with
		\begin{align*}
			P &\equiv [[-8, 9, 0, -3, 0, 1], [9, 0, -9, 0, 5], [20, -18, 0, 3],
			[-27, 100, -63], [-4752, 3103], [1]]  \\
			p &\equiv [-8, 9, 0, -3, 0, 1] \\
			q &\equiv [9, 0, -9, 0, 5]
		\end{align*}
		For this, we must verify that indeed $e_i P_i = Q_i P_{i + 1} - f_i P_{i + 2}$.
The number of sign changes in the sequence $(1, 5, 3, -63, 3103, 1)$ of leading coefficients of the $P_i$ is equal to $s = 2$, while the number of sign changes in the sequence $(-1, 5, -3, -63, -3103, 1)$ is $t = 3$. By Theorem \ref{Sturm}, we conclude that the total number of real roots of $T$ is $t-s = 3 - 2 =1$. Using \lstinline|S| in Lean, we \href{https://github.com/alainchmt/CertifyingInvariantsNF/blob/v1/IdealArithmetic/Examples/Paper/SturmExample.lean#L40}{prove} that

		\begin{lstlisting}
example : #(Multiset.toFinset (X ^ 5 - 3 * X ^ 3 + 9 * X - 8 : ℝ[X]).roots) = 1 := by ...
		\end{lstlisting}

		Additionally, using the formula in \eqref{resformula}, we find that $\operatorname{disc} T = \operatorname{Res}(T, T') = 2516240$. In \href{https://github.com/alainchmt/CertifyingInvariantsNF/blob/v1/IdealArithmetic/Examples/Paper/SturmExample.lean#L48}{Lean}, 

		\begin{lstlisting}
example : discr T = 2516240 :=  by ...
		\end{lstlisting}

	\end{example}
For a number field $K$ and a \lstinline[mathescape]| $\mathcal{O}$ : Subalgebra ℤ K|, we bundle in the structure \href{https://github.com/alainchmt/CertifyingInvariantsNF/blob/v1/IdealArithmetic/Saturation/CertifyTorsionOrder.lean#L318}{\lstinline[mathescape]|RankUnitsCertificate|} a proof asserting that $\mathcal{O} = \mathcal{O}_K$ together with a Sturm sequence \lstinline|P : List (List ℤ)| starting with a defining polynomial for $K$ and its derivative using the computable expressions described above. We use it to certify the signature of $K$, and hence the rank of the unit group of $\mathcal{O}$.

	\begin{example}
		For the number field $K$ with defining polynomial $T=X ^ 5 - 3 X ^ 3 + 9 X - 8$ from Example \ref{sturm}, after proving an explicit integral basis for its ring of integers $\mathcal{O}$, and computing a Sturm sequence starting from $T$ and its derivative, we obtain efficient \href{https://github.com/alainchmt/CertifyingInvariantsNF/blob/v1/IdealArithmetic/Examples/Paper/UnitRankExample.lean#L33}{Lean proofs} of the value of the discriminant of $K$, its signature, as well as the rank of its unit group modulo torsion.  

		\begin{lstlisting}[mathescape]
theorem K_discr : NumberField.discr K = 157265 := by ...

lemma K_nrComplexPlaces : NumberField.InfinitePlace.nrComplexPlaces K = 2 := by ...

lemma K_nrRealPlaces : NumberField.InfinitePlace.nrRealPlaces K = 1 := by ...

lemma K_units : Module.finrank ℤ (Additive ($\mathcal{O}^{\times}$ / (CommGroup.torsion $\mathcal{O}^{\times}$))) = 2 := by ...
		\end{lstlisting}
	\end{example}

	\section{Certifying generators of the class group}\label{sec : idealgen}

	The previous invariants will play a key role in the certification of the class group. Recall that the class group of a number field $K$ is a finite abelian group. The procedure we introduce goes as follows. 
    \begin{enumerate}
        \item First, certify generators. This consists of two parts: 
        \begin{enumerate}
            \item Ensure that $\operatorname{Cl}(\mathcal{O}_K)$ is generated by the classes of an explicit (finite) collection of prime ideals $\mathfrak{B}$ called a \emph{factor base}. 
            \item Using $\mathfrak{B}$, certify that the classes of a (typically much smaller) collection $\{J_1, \ldots, J_m\}$ of nonzero ideals generate $\operatorname{Cl}(\mathcal{O}_K)$. 
        \end{enumerate}
        \item Secondly, with $n_i$ the order of $[J_i]$ in $\operatorname{Cl}(\mathcal{O}_K)$, prove that $\mathbb{Z}/n_1\mathbb{Z} \times \ldots \times \mathbb{Z}/n_m\mathbb{Z} \cong \operatorname{Cl}(\mathcal{O}_K)$ with the isomorphism being $(\bar{a}_1, \ldots, \bar{a}_m)  \mapsto [J_1]^{a_1} \dots [J_m]^{a_m}$. 
    \end{enumerate}
    This gives us an explicit description of the structure of $\operatorname{Cl}(\mathcal{O}_K)$. 
    Here, the $J_i$ are chosen so that their classes are generators of the cyclic factors of $\operatorname{Cl}(\mathcal{O}_K)$ (although this is only proven in the second step).

	In this section, we focus on the first step. We begin by reviewing how the class group has been defined in Mathlib. Then, we discuss factor base bounds and introduce certificates for part (a) and (b).  

	\subsection{The class group in Lean}\label{sec : class}
	We recall the definition of the class group of an integral domain and go over its current implementation in Mathlib.
	The class group of an integral domain $R$ is an abelian group that measures how far $R$ is from being a principal ideal domain (PID). Its definition was first formalized in Lean 3 (and subsequently ported to Lean 4) in \cite{dedekindforma}, where the finiteness of the class group of the ring of integers of a number field (simply referred to as the class group of the number field) was also formally established.
	To introduce this concept, we recall the notion of fractional ideals. If $F$ is the fraction field of $R$, a subset $I \subseteq F$ is a fractional ideal if there is $x \in R \setminus \{0\}$ such that $xI \subseteq R$ is an ideal of $R$. The collection of fractional ideals has a semiring structure with $R$ being the identity. The units in here are the \emph{invertible} fractional ideals. 
	
    In Mathlib, the fractional ideals of $R$ are denoted by \lstinline|FractionalIdeal (nonZeroDivisors R) F|. There is a map \lstinline[mathescape]|$F^{\times}$ →* (FractionalIdeal (nonZeroDivisors R) F)$^{\times}$|, called \href{https://github.com/leanprover-community/mathlib4/blob/bfc0bff5a036ddf63093b58716f09b074e9e360d/Mathlib/RingTheory/ClassGroup/Basic.lean#L47-L55}{\lstinline|toPrincipalIdeal R F|}, sending $\alpha \in F^{\times}$ to the invertible fractional ideal $R\cdot \alpha \subseteq F$. Its image consists of the nonzero principal fractional ideals (fractional ideals generated, as $R$-modules, by a single nonzero element). The class group $\operatorname{Cl}(R)$ is \href{https://github.com/leanprover-community/mathlib4/blob/6f1c6456ee863edbcf96c4febc52cd1c8d07487f/Mathlib/RingTheory/ClassGroup/Basic.lean#L86-L90}{defined} in Mathlib as the quotient of the group of invertible fractional ideals of $R$ by the subgroup consisting of the nonzero principal fractional ideals. That is, 
	\begin{lstlisting}[mathescape]
def ClassGroup R := (FractionalIdeal (nonZeroDivisors R) (FractionRing R))$^{\times}$ $/$
   (toPrincipalIdeal R (FractionRing R)).range
	\end{lstlisting}

	If $R$ is a Dedekind domain, then every nonzero fractional ideal $I$ is invertible. In that case, 
it is convenient to have a map sending a nonzero ideal $I \subseteq R$ to the corresponding class $[I]$ in $\operatorname{Cl}(R)$. We use the map \href{https://github.com/leanprover-community/mathlib4/blob/bfc0bff5a036ddf63093b58716f09b074e9e360d/Mathlib/RingTheory/ClassGroup/Basic.lean#L253-L255}{\lstinline|ClassGroup.mk0|}, available in Mathlib, of type \lstinline|nonZeroDivisors (Ideal R) →* ClassGroup R|. The subtype \lstinline|nonZeroDivisors (Ideal R)| consists of the ideals of $R$ which are not zero divisors in the semiring of ideals of $R$ (in this context, these are the nonzero ideals of $R$). There is a coercion from \lstinline|nonZeroDivisors (Ideal R)| to \lstinline|Ideal R|. 
    
    For two nonzero ideals $I$ and $J$ of $R$, we have that $[I] = [J]$ in $\operatorname{Cl}(R)$ if and only if there exists $\alpha, \beta \in R \setminus \{0\}$ such that $\langle \alpha \rangle * I = \langle \beta \rangle * J$. This criterion, formulated purely in terms of (integral) ideals of $R$, allows us to prove relations in $\operatorname{Cl}(R)$ without making any reference to fractional ideals.

	\subsection{Minkowski's and other bounds}
    
    We will say that $D \in \mathbb{R}$ is a \emph{factor base bound} for $K$ if the classes of the ideals in the set $\{\mathfrak{p} \subseteq \mathcal{O}_K \mid \mathfrak{p} \text{ is a nonzero prime ideal with } N(\mathfrak{p}) \leq D \}$ generate $\operatorname{Cl}(\mathcal{O}_K)$. We formalize this \href{https://github.com/alainchmt/CertifyingInvariantsNF/blob/v1/IdealArithmetic/Generation/ClassGroupGeneration.lean#L64}{definition} as the predicate
	\begin{lstlisting}[mathescape]
def isFactorBaseBound (K : Type*) [Field K] [NumberField K] (D : ℝ) :=
  Subgroup.closure (ClassGroup.mk0 '' {I : $\uparrow$ (nonZeroDivisors (Ideal $\mathcal{O}_K$))
      | (I : Ideal $\mathcal{O}_K$).IsPrime ∧ Ideal.absNorm (I : Ideal $\mathcal{O}_K$) ≤ D }) = ⊤
	\end{lstlisting}
	Here, $\mathcal{O}_K$ stands for \lstinline|NumberField.ringOfIntegers K|. The notation \lstinline|ClassGroup.mk0 '' S| denotes the image of a set of nonzero ideals $S$ of $\mathcal{O}_K$ in $\operatorname{Cl} (\mathcal{O}_K)$. The notation \lstinline|Subgroup.closure G| with \lstinline[mathescape]|G : Set (ClassGroup $\mathcal{O}_K$)| is the subgroup of $\operatorname{Cl} (\mathcal{O}_K)$ generated by the elements in $G$.  Furthermore, \lstinline|Ideal.absNorm I| refers to the norm of an ideal of a Dedekind domain, defined as the cardinality of $\mathcal{O}_K/I$. Finally, 
    \lstinline|⊤| is the entire group $\operatorname{Cl}(\mathcal{O}_K)$ regarded as a subgroup of itself.

	A well-known theorem of Minkowski is the following \cite[Theorem 10.5]{stevenhagen}.
	\begin{theorem}\label{mink}
		Consider an ideal class $C \in \operatorname{Cl}(\mathcal{O}_K)$. There exists a nonzero (integral) ideal $I$ of $\mathcal{O}_K$ such that $[I] = C$ and
		\begin{align*}
			\operatorname{N}(I) \leq \frac{n!}{n^n} \left(\frac{4}{\pi}\right)^{r_2} \sqrt{|\operatorname{disc}(K)|}
		\end{align*}.
	\end{theorem}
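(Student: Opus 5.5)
We prove the theorem with Minkowski's convex body theorem, applied to the image of a fractional ideal in $\mathbb{R}^{r_1}\times\mathbb{C}^{r_2}\cong\mathbb{R}^n$. Fix a class $C$ and pick a nonzero integral ideal $J$ whose class is $C^{-1}$. Such a $J$ exists: take any invertible fractional ideal representing $C^{-1}$ and clear denominators by multiplying by a nonzero integer, which does not change the class.

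The first step is the geometry. Under the canonical embedding $\sigma\colon K\to\mathbb{R}^{r_1}\times\mathbb{C}^{r_2}$, the ideal $J$ maps to a full lattice. With the standard Lebesgue normalization, its covolume is $2^{-r_2}\sqrt{|\operatorname{disc}(K)|}\,N(J)$. This follows by expressing the covolume of $\sigma(\mathcal{O}_K)$ through the determinant of the matrix $(\sigma_i(\omega_j))$, whose square is $\operatorname{disc}(K)$, and then using $[\mathcal{O}_K:J]=N(J)$.

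The second step chooses the convex body. Instead of a box, take the symmetric convex body
\[
B_t=\Big\{(x,z): \sum_i |x_i| + 2\sum_j |z_j| \le t\Big\}.
\]
Its volume is $2^{r_1}(\pi/2)^{r_2} t^n/n!$, which we obtain by iterated integration, by induction on $r_1$ and $r_2$. Choose $t$ so that $\operatorname{vol}(B_t)=2^n\operatorname{covol}(\sigma(J))$. Minkowski's theorem, in the compact version that allows equality, then gives a nonzero $\alpha\in J$ with $\sigma(\alpha)\in B_t$.

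The third step bounds the norm of $\alpha$. The AM--GM inequality gives $|N_{K/\mathbb{Q}}(\alpha)|\le (t/n)^n$. Substituting the chosen $t$ yields
\[
|N(\alpha)|\le \frac{n!}{n^n}\left(\frac{4}{\pi}\right)^{r_2}\sqrt{|\operatorname{disc}(K)|}\;N(J).
\]
Since $\alpha\in J$, we have $\langle\alpha\rangle\subseteq J$. Because $\mathcal{O}_K$ is a Dedekind domain, this means $\langle\alpha\rangle=JI$ for some nonzero integral ideal $I$. Multiplicativity of the ideal norm gives $N(I)=|N(\alpha)|/N(J)$, which is at most the bound in the statement. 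Finally, $[I]=[J]^{-1}=C$, as required.

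The main obstacle is the second step: computing the volume of $B_t$, and in particular the mixed real/complex integral that produces the factor $(\pi/2)^{r_2}$. Next to that, the covolume identity linking the canonical embedding to $\operatorname{disc}(K)$ needs the most care. In a formal setting I would first try to reuse Mathlib's existing treatment of the mixed embedding and the convex-body volume, which underlies the already formalized finiteness of the class group. That treatment may contain exactly this bound, in which case the proof reduces to the class-inversion argument of the third step.
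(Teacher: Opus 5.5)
Your proof is correct. It is the standard geometry-of-numbers argument, and since the paper gives no proof of its own (it cites Stevenhagen and Roblot's Mathlib formalization), yours matches the route those sources take: covolume $2^{-r_2}\sqrt{|\operatorname{disc}(K)|}\,N(J)$, the body $\sum_i |x_i| + 2\sum_j |z_j| \le t$ of volume $2^{r_1}(\pi/2)^{r_2}t^n/n!$, AM--GM, and then $\langle\alpha\rangle = JI$ with $[J] = C^{-1}$. On the formal side you can go further than you guessed: the paper states that this theorem itself is already formalized in Mathlib by Roblot, so even the class-inversion step need not be redone.
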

	The upper bound in Minkowski's theorem is known as Minkowski's bound, which we denote by $\mathfrak{M}$. The number $r_2$ is the number of pairs of non-real complex embeddings introduced before. 
    Since $\mathcal{O}_K$ is a Dedekind domain, every nonzero proper ideal of $\mathcal{O}_K$ is the product of prime ideals. Consequently, together with Theorem \ref{mink}, this gives that $\mathfrak{M}$ is a factor base bound for $K$.  
	Zimmer \cite{zimmer} later proved a similar theorem to Theorem \ref{mink}, with upper bounds that improve the constant $\frac{n!}{n^n} \left(\frac{4}{\pi}\right)^{r_2}$ in Minkowski's bound. However, the dependency on $\sqrt{\operatorname{disc}(K)}$ remains. There is no known asymptotically better bound for a set of generators of $\operatorname{Cl}(\mathcal{O}_K)$ \cite[Remark 4.3]{magma} unless one is willing to assume unproven conjectures such as the \emph{Generalized Riemann Hypothesis}.

	The Generalized Riemann Hypothesis (GRH) is a generalization of the famous Riemann Hypothesis to a broader class of so-called $L$-functions. The terminology in the literature regarding this conjecture is not completely uniform. Variants of the name include \emph{Extended Riemann Hypothesis} (ERH), and different authors have used these names to mean different things. In this work, as in many contemporary treatments, GRH refers to the open conjecture stating that all Hecke $L$-functions (these extend the notion of Dirichlet $L$-functions from $\mathbb{Q}$ to arbitrary number fields) are zero-free in the half plane $\Re(s) > 1/2$ (see \cite[pag. 360]{bach} for the precise formulation, where this conjecture is referred to as ERH).

	Eric Bach proved in \cite{bach} that, assuming GRH (called ERH in his article), one can take $D = 12\log ^ 2 |\operatorname{disc}(K)| $ as a factor base bound.
    Asymptotically, as the discriminant of $K$ increases, this bound is much smaller than Minkowski's bound. 
	In \cite{belabas}, Karim Belabas, Francisco Diaz y Diaz, and Eduardo Friedman give another conditional bound which is smaller than Bach's bound in practice, but not asymptotically.

	Since we intend to work with explicit number fields, we need an explicit factor base bound. Theorem \ref{mink} has been formalized in Lean by Xavier Roblot and is \href{https://github.com/leanprover-community/mathlib4/blob/6f1c6456ee863edbcf96c4febc52cd1c8d07487f/Mathlib/NumberTheory/NumberField/ClassNumber.lean#L77-L98}{available} in Mathlib. Using it together with the unique factorization in \lstinline[mathescape]|Ideal $\mathcal{O}_K$|, the type of ideals of $\mathcal{O}_K$, we \href{https://github.com/alainchmt/CertifyingInvariantsNF/blob/v1/IdealArithmetic/Generation/ClassGroupGeneration.lean#L71}{prove} that Minkowski's bound $\mathfrak{M}$ is indeed a factor base bound.

	\begin{lstlisting}
lemma minkowskiBoundFB_isFactorBaseBound (K : Type*) [Field K] [NumberField K] :
   isFactorBaseBound K (minkowskiBoundFB K) := by ...
	\end{lstlisting}

	The term \lstinline|minkowskiBoundFB K : ℝ| is the quantity $\mathfrak{M}$ from before. By combining the decimal approximations for $\pi$ provided in Mathlib with the efficient computation of the discriminant of a number field presented in Section \ref{sec:disc}, as well as suitable approximations of its square root, we can employ tactics such as \lstinline|norm_num| to carry out explicit numerical evaluations and prove a decimal upper bound for \lstinline|minkowskiBoundFB K|. 

	\begin{example}
		Let $K$ be the number field with defining polynomial $T = X^4 - X^3 - 80X^2 - 332X - 383 $. In Lean, we \href{https://github.com/alainchmt/CertifyingInvariantsNF/blob/v1/IdealArithmetic/Examples/Paper/MinkowskiBoundExample.lean#L8}{prove}

		\begin{lstlisting}
theorem K_minowski : minkowskiBoundFB K ≤ 691.357086451742  := by ...
		\end{lstlisting}

	\end{example}

	Minkowski's bound is the only explicit and general factor base bound that, to our knowledge, has been formalized in a proof assistant. In this project, we restrict ourselves to using Minkowski's bound for the certification of class groups. Nonetheless, we formalize the relevant theorems (see the next section) for an arbitrary factor base bound. The formalization of GRH-conditional bounds would be a worthwhile direction for future work that could enable certification of the class group, and other invariants, for a larger class of number fields (under the assumption of GRH). We note that to formulate the statement of GRH, one must formalize some of the theory of Hecke characters and Hecke $L$-functions. For the case $K = \mathbb{Q}$, David Loeffler and Michael Stoll have formalized Dirichlet $L$-functions in Lean \cite{loeffler} and formulated the classical Riemann Hypothesis.
	\begin{remark}
		The fastest known procedures to compute the class group of a number field are based on a probabilistic algorithm originally due to Buchmann \cite{buchmann}. It is conditional on GRH in two fundamental ways: first, it starts by determining a factor base relying on Bach's bound, which is proved under GRH; second, the algorithm's stopping criterion is based on error bounds for approximations of certain truncated Euler products, which are known to hold assuming GRH. Even using CASs with highly optimized implementations in a high-performance language, computing class groups unconditionally is computationally prohibitive for number fields of large discriminant.

	\end{remark}

	\subsection{The collection of prime ideals of norm below some bound}\label{sec: collection}
	In this section, we describe the construction of a collection containing all nonzero prime ideals of $\mathcal{O}_K$ with norm less than $C \in \mathbb{N}$ and explain how we certify this in Lean. For $D$ a factor base bound and $D < C$, this, together with the verification of relations between ideals, is required for the first step in Section \ref{sec: redgen}.  Concretely, for a subalgebra $\mathcal{O}$ of $K$ with a multiplication table and such that $\mathcal{O} = \mathcal{O}_K$, we discuss, for an appropriate tuple \lstinline[mathescape] | $\mathfrak{p}$: m → Ideal $\mathcal{O}$|, how to construct a proof of
	\begin{lstlisting}[mathescape]
		{I : Ideal $\mathcal{O}$ | 0 < I.absNorm ∧ I.IsPrime ∧ I.absNorm < C} ⊆ Set.range $\mathfrak{p}$
	\end{lstlisting}
	Given an explicit prime ideal $\mathfrak{p} \subseteq \mathcal{O}$, we can prove that it is prime and of norm below $C$ following the methods in Sections \ref{sec : certnorm} and \ref{sec : primality}. Ensuring that a given collection of ideals contains \emph{all} such ideals requires additional work. We achieve this from a factorization of $p\mathcal{O}$ for every prime number $p < C$.

	Since $\mathcal{O}_K$ is a Dedekind domain, the ideal $p\mathcal{O}_K$ admits a factorization into (not necessarily distinct) prime ideals 
	$p\mathcal{O}_K = \mathfrak{p}_1 \cdots \mathfrak{p}_t.$ Let $\mathcal{F}_p$ be a tuple (with possible repetitions) containing all of these prime ideal factors. Every prime ideal of $\mathcal{O}_K$ that contains $p$ occurs among the factors $\mathfrak{p}_i$ and so is a member of $\mathcal{F}_p$. Moreover, as previously observed, every nonzero prime ideal of $\mathcal{O}_K$ necessarily contains some rational prime. This implies that
	\begin{align*}
		\{I \subseteq \mathcal{O}_K \mid I \text{ is prime with } 0 < N(I) < C  \} \subseteq \bigcup_{p < C} \{\mathfrak{p} \in \mathcal{F}_p \mid N(\mathfrak{p}) < C\},
	\end{align*}
	where $p$ ranges over all prime numbers less than $C$.

For a prime number $p$ and a tuple of nonzero prime ideals  $\mathcal{F} := (\mathfrak{p}_1, \ldots, \mathfrak{p}_t)$ of $\mathcal{O}_K$, the condition that $\mathcal{F}$ contains all the prime ideals (with multiplicity) above $p$ is equivalent to $\prod_i \mathcal{F}_i \subseteq \langle p \rangle $ (since, in $\mathcal{O}_K$, to contain is to divide). 
This leads to the following certificate. 

\paragraph{Certificate of a factor base}
Let $\mathfrak{B}$ be a tuple of ideals of $\mathcal{O}_K$ and $P = (p_1, \ldots, p_m)$ the collection of all prime numbers $p$ with $p < C$. 
The following data certifies that $\mathfrak{B}$ is a factor base. For each $i \in \{1, \ldots, m\}$: 
\leavevmode
\begin{multicols}{2}
	\raggedcolumns
	\begin{itemize}
        \item A natural number $t_i$; 
		\item a $t_i$-tuple of ideals $\mathcal{F}_i$ of $\mathcal{O}_K$; 
        \item a $t_i$-tuple of natural numbers $N_i$. 
	\end{itemize}
\end{multicols}
Verification of the certificate amounts to checking, for each $i \in \{1, \ldots, m\}$, that
\begin{multicols}{2}
	\raggedcolumns
	\begin{enumerate}[label=(\roman*)]
        \item for every $j$, $\# (\mathcal{O}_K /\mathcal{F}_{ij}) = N_{ij}$; \label{itemnorm}
        \item for every $j$, $N_{ij} \neq 0$; \label{itemnezero}
		\item for every $j$, $\mathcal{F}_{ij}$ is a prime ideal; \label{itemp}
        \item $\prod_{j} \mathcal{F}_{ij} \subseteq \langle p_i \rangle$; \label{itemprod}
        \item for every $j$, $N_{ij} < C$ implies $\mathcal{F}_{ij} \in \mathfrak{B}$. 
        
	\end{enumerate}
\end{multicols}
The statement \ref{itemnezero} is included to guarantee, in combination with \ref{itemnorm}, that the ideals in $\mathcal{F}_i$ are nonzero (we use the convention that $\# A$ for an infinite $A$ equals zero). Together with \ref{itemp} and \ref{itemprod}, this implies that each prime ideal above a prime number $p < C$ is a member of the set $\{\mathcal{F}_{ij} | 1 \leq i \leq m, 1 \leq j \leq t_i\}$. 

We collect these data and proofs of the verification statements in the \href{https://github.com/alainchmt/CertifyingInvariantsNF/blob/v1/IdealArithmetic/Generation/ClassGroupGeneration.lean#L544}{structure}
	\begin{lstlisting}[mathescape]
structure PrimesBelowBoundCertificate ($\mathcal{O}$ : Type*) [CommRing $\mathcal{O}$] [Nontrivial $\mathcal{O}$] (C : ℕ) ...
    \end{lstlisting}
    
 It is parametrized over a general commutative ring $\mathcal{O}$ and bound $C \in \mathbb{N}$. In our applications, $\mathcal{O}$ will be a subalgebra of $K$ equipped with a multiplication table and propositionally equal to $\mathcal{O}_K$. 
 As part of its fields, we include 
\begin{lstlisting}[mathescape]
    t : Fin m → ℕ
    P : Fin m → ℕ
    F : Π i,  Fin (t i) → Ideal $\mathcal{O}$
    N :  Π i,  Fin (t i) → ℕ
    β : Fin r → Ideal $\mathcal{O}$
\end{lstlisting}
 Here \lstinline|P| is the tuple of the prime numbers below $C$. The dependent functions \lstinline|F| and \lstinline|N| correspond to the collections of ideals and norms, respectively. Since \lstinline|F| and \lstinline|N| have the same dependent shape, their indexing type coincides definitionally. This is convenient for relating ideals with their corresponding norms. Furthermore, the factor base $\mathfrak{B}$ is given as the tuple  \lstinline|β|. 
 
We bundle the statements \ref{itemp} and \ref{itemprod} with the predicate \href{https://github.com/alainchmt/CertifyingInvariantsNF/blob/v1/IdealArithmetic/Generation/ClassGroupGeneration.lean#L538}{ \lstinline|ContainsPrimesAboveP (P i) (F i)|}. We prove a goal of the form \lstinline|(F i j).IsPrime| using the certificate in Section \ref{sec : primality}. Furthermore, we verify \lstinline|∏ j, F i j ≤ Ideal.span {↑(P i)}| by using a chain of containments as described in Section \ref{sec : idealmul}.

 In practice, we choose \lstinline|F i| to be precisely the collection of prime ideals whose product equals $\langle p_i\rangle$, however, note that we only certify one of the containments and we can avoid the reverse containment entirely, reducing the amount of computations needed for verification.

	\begin{remark}
		Suppose $K = \mathbb{Q}(\alpha)$, with $\alpha \in \mathcal{O}_K$ of minimal polynomial $T \in \mathbb{Z}$. 
        For $ p \nmid [\mathcal{O}_K : \mathbb{Z}[\alpha]]$, the Kummer-Dedekind theorem (see Remark \ref{remarkkummer}) reduces the certification of the factorization of $p\mathcal{O}_K$ to that of the factorization of $T \pmod p$ in $\mathbb{F}_p[X]$, while providing the norms of the resulting prime ideals.  As previously noted, incorporating this approach into our framework is part of ongoing work.
	\end{remark}

	Given a commutative ring $\mathcal{O}$ with $\mathcal{O} \cong \mathcal{O}_K$ and a term \lstinline[mathescape]| A : PrimesBelowBoundCertificate $\mathcal{O}$ C| we prove the following \href{https://github.com/alainchmt/CertifyingInvariantsNF/blob/v1/IdealArithmetic/Generation/ClassGroupGeneration.lean#L934}{lemma}.  
	\begin{lstlisting}[mathescape]
lemma le_primes_below_bound_of_PrimesBelowBoundCertificate_le_of_eq {$\mathcal{O}$ : Type*} 
    [CommRing $\mathcal{O}$] [IsDedekindDomain $\mathcal{O}$] [Module.Free ℤ $\mathcal{O}$]  {C : ℕ} (φ : $\mathcal{O}$ ≃+* $\mathcal{O}_K$) 
    (A : PrimesBelowBoundCertificate $\mathcal{O}$ C) :
  {I : Ideal $\mathcal{O}$ | 0 < I.absNorm ∧ I.IsPrime ∧ I.absNorm < C} ⊆ Set.range (A.β) := by ...
	\end{lstlisting}

	We wrote a SageMath routine that, given a subalgebra $\mathcal{O}$  equipped with a multiplication table and equal to $\mathcal{O}_K$, produces, for every prime number $p < C$, Lean code defining a tuple \lstinline|F| of the prime ideals above $p\mathcal{O}$, and a proof of \lstinline|ContainsPrimesAboveP p F|.

	If the bound $C$ is large, this results in a substantial number of primality, multiplication, and containment proofs. Assembling them to produce a term of type \lstinline[mathescape]|PrimesBelowBoundCertificate $\mathcal{O}$ C| leads to timeout errors when the number of ideals is large. For this reason, we introduce the intermediate certificate \href{https://github.com/alainchmt/CertifyingInvariantsNF/blob/v1/IdealArithmetic/Generation/ClassGroupGeneration.lean#L570}{\lstinline[mathescape]|PrimesBelowBoundCertificateInterval  $ \hspace{0.1cm} \mathcal{O} \hspace{0.1cm} l$ u C|}, which collects all the prime ideals appearing in the factorization of $p\mathcal{O}$ for $ l < p \leq u$ with norm less than $C$.
	The fields of this structure are almost identical to \lstinline[mathescape]|PrimesBelowBoundCertificate $\mathcal{O}$ C|, with the difference being that \lstinline|P| is now the tuple of primes in the interval $(l, u]$, as asserted by \lstinline[mathescape]|hP : ∀ p, p ∈ Set.range P ↔ Nat.Prime p ∧ $l$ < p ∧ p ≤ u|.
We therefore split the primes in the interval $(0,C)$ into intervals $(0, e_0], (e_0, e_1], \ldots, (e_{r-1}, e_{r}]$ with $e_r = C - 1$ and construct an interval certificate for each. All of these are independent of each other, which allows parallelization in the verification. To combine them, we concatenate the corresponding tuples.

    In the next section, we discuss the problem of certifying that a specified list consists of all the prime numbers within a given interval.

	\subsection{Efficient retrieval of primes}
	A key component of the certificates discussed in the previous section is solving goals of the form
	\begin{lstlisting}[mathescape]
		∀ p, p ∈ L ↔ Nat.Prime p ∧ $l$ < p ∧ p ≤ u
	\end{lstlisting}
	with \lstinline|L : List ℕ|, \lstinline[mathescape]|$l$ : ℕ| and \lstinline|u : ℕ|. Since such goals arise repeatedly, we handle them by precomputing and proving the list of prime numbers up to 20,000 and storing it in a way that allows for efficient retrieval of the primes in a given interval by using a binary tree.

Our initial method for generating the prime numbers up to $20,000$ followed a pattern commonly presented in functional programming literature as a version of the ``sieve of Eratosthenes'' where, starting from an initial list, multiples of each known prime $p$ are repeatedly removed by testing $x \mod p = 0$ for every remaining $x$. As noted by O'Neill in \cite{sieve}, this implementation is more accurately viewed as a form of trial division rather than a genuine sieve. We are thankful to Bhavik Mehta for pointing us to this reference and for suggesting several optimizations, many of which we adopted. 

We define the function \href{https://github.com/alainchmt/CertifyingInvariantsNF/blob/v1/IdealArithmetic/Computation/PrimeSieve.lean#L29}{ \lstinline|primesBetween A B|}, which implements trial division more efficiently. Given a list of integers $P$, it constructs the list of natural numbers in the interval $(A , B]$
and filters out all elements $x$ for which there is $p \in P$ with $p < x$ and $x \mod p = 0$. If the list \lstinline|P| contains all prime numbers less than $\sqrt{B} + 1$, satisfies $1 \notin P$ (e.g. the list of all primes less than $\sqrt{B} + 1$), and $A \geq 1$,  then \lstinline|primesBetween A B P| is guaranteed to return the list of prime numbers in the interval $(A, B]$.

One of the limitations encountered when working with large lists in Lean is the recursion depth limit. The type \lstinline|List α| is an inductive type and is represented internally by a nested sequence of \lstinline|List.cons| applications. For sufficiently long lists, elaborating such a term exceeds Lean's \lstinline|macRecDepth| internal parameter, which limits the depth of recursive elaboration. Although there are ways around this, like manually increasing this limit,
we prefer to partition the interval $(1, 20000]$ into subintervals of length 300 and certify the list of primes in each of these with \lstinline|primesBetween|. This results in a more modular approach, with the possibility of parallelization. We then prove that the concatenation of all of these lists yields the list of primes up to $20,000$. The resulting list $L$ has length $2,262$.

	Given $L$ of prime numbers up to $20,000$, we want to quickly retrieve the primes $p \in L$ such that $ l < p \leq u$. Even assuming that the list is sorted, this would require a linear traversal of the list. A more efficient approach is to store $L$ in a balanced tree.

	Mathlib's type \href{https://github.com/leanprover-community/mathlib4/blob/6f1c6456ee863edbcf96c4febc52cd1c8d07487f/Mathlib/Data/Ordmap/Ordnode.lean#L70-L77}{\lstinline|Ordnode α|} denotes the type of finite sets of elements of \lstinline|α|, represented as a binary tree. The constant \lstinline|nil : Ordnode α| denotes the empty tree. For trees \lstinline|t1 t2 : Ordnode α|, an element \lstinline|x : α|, and a natural number \lstinline|n : ℕ|, the term \lstinline|node n t1 x t2| denotes the tree of size \lstinline|n| whose root node is labeled by \lstinline|x|, with left and right subtrees given by \lstinline|t1| and \lstinline|t2|, respectively.

	The property \lstinline|t.Balanced| states that, for each node of \lstinline|t|, either the left and right subtrees have at most one element between them, or their sizes differ by a factor of at most $3$. This guarantees that a balanced tree of size $n$ has height $O(\log n)$. If \lstinline|α| has an order $<$ defined, the property \lstinline|t.Bounded ⊥ ⊤| means that for each node \lstinline|t| labeled \lstinline|x|, the elements on the left subtree are all smaller than \lstinline|x|, and the elements in the right subtree are all bigger than \lstinline|x|. Mathlib includes a function \href{https://github.com/leanprover-community/mathlib4/blob/6f1c6456ee863edbcf96c4febc52cd1c8d07487f/Mathlib/Data/Ordmap/Ordnode.lean#L1377}{\lstinline|Ordnode.ofList|} which sends a list over \lstinline|α| to the corresponding tree of type \lstinline|Ordnode α|. If the order in \lstinline|α| is total, then the resulting tree is balanced and bounded.
	\begin{example}
		Consider the list $L = [2,3,5,7,11,13,17]$ of primes less than $18$. We have

		\begin{lstlisting}
Ordnode.ofList L = node 7 (node 3 (node 1 nil 2 nil) 3 (node 1 nil 5 nil) ) 7 (node 3 (node 1 nil 11 nil) 13 (node 1 nil 17 nil))
		\end{lstlisting}
		which corresponds to the balanced tree:
		\begin{center}
\scalebox{0.7}{
			\begin{tikzpicture}[
				level distance=1.4cm,
				sibling distance=3cm,
				every node/.style={circle,draw,minimum size=5mm},
				level 2/.style={sibling distance=1.5cm},
				level 3/.style={sibling distance=8mm}
				]

				\node {7}
				child {
					node {3}
					child { node {2} }
					child { node {5} }
				}
				child {
					node {13}
					child { node {11} }
					child { node {17} }
				};

			\end{tikzpicture}}

		\end{center}

	\end{example}

	Given a type \lstinline|α| with a \lstinline|[LinearOrder α]| instance, we define the function \href{https://github.com/alainchmt/CertifyingInvariantsNF/blob/v1/IdealArithmetic/Computation/PrimeSieve.lean#L265}{\lstinline|Ordnode.extractRangeTree|} recursively. It takes as input a balanced tree \lstinline|t : Ordnode α| and upper and lower bounds \lstinline[mathescape]|$l$₁ $l$₂ : α|, and extracts a (not necessarily balanced) tree containing the elements of \lstinline|t| in the interval $[l_1, l_2]$.

	Let $L$ be the certified list of primes up to a 20,000.  We define \href{https://github.com/alainchmt/CertifyingInvariantsNF/blob/v1/IdealArithmetic/Computation/PrimeSieve.lean#L542}{\lstinline|PTreeE : Ordnode α|}, by explicitly writing down the balanced tree with all the primes up to 20,000, and then prove that \lstinline|Ordnode.ofList L = PTreeE| by \lstinline|rfl|. This proof does require that we increase the \lstinline|maxRecDepth|, but this one-time increase is an acceptable trade-off for efficient repeated retrieval. We then formalize the following \href{https://github.com/alainchmt/CertifyingInvariantsNF/blob/v1/IdealArithmetic/Computation/PrimeSieve.lean#L552}{result}
	\begin{lstlisting}
lemma primes_range (l₁ l₂ : ℕ) (hle : l₂ ≤ 20000) {p : ℕ} :
	 p ∈ (Ordnode.extractRangeTree PTreeE (l₁ + 1) l₂).toList  ↔ Nat.Prime p ∧ l₁ < p ∧ p ≤ l₂ := ...
	\end{lstlisting}
	with \lstinline|toList| being the map that sends a tree to the corresponding list (in increasing order). This allows us to prove statements such as the following \href{https://github.com/alainchmt/CertifyingInvariantsNF/blob/v1/IdealArithmetic/Examples/Paper/RetrievalOfPrimes.lean#L5}{example}
	\begin{lstlisting}
example {p} : p ∈ [13877, 13879, 13883, 13901, 13903, 13907, 13913, 13921, 13931, 13933, 13963,
      -- 52 explicit primes omitted 
      14537, 14543, 14549, 14551, 14557, 14561, 14563]  ↔ Nat.Prime p ∧ 13876 < p ∧ p ≤ 14565 := 
    primes_range 13876 14565 (by omega)
	\end{lstlisting}
	in a fraction of a second (less than 0.05s in our hardware), without encountering recursion depth errors.

\subsection{An alternative generating set}\label{sec: redgen}
	After enumerating the prime ideals of norm below a certain factor base bound, we obtain a set of prime ideals $\mathfrak{B}$, which is called a \emph{factor base}, whose classes are guaranteed to generate the whole class group. Our objective in this section is to prove another generating set, which is typically much smaller. The main strategy is to use relations in the class group that involve the ideals in the factor base. 
	Finding enough of these relations is one of the most time-consuming parts of the algorithms for computing class groups. 
    The approaches to do this rely on involved subroutines such as algorithms for factoring ideals of $\mathcal{O}_K$ \cite[Section 6.5]{zassenhaus} and the LLL lattice reduction algorithm \cite{cohenalgo}.

	In our certification-based framework, however, we can delegate the task of \emph{finding} these relations to an efficient CAS, while limiting the computations inside the proof assistant to \emph{verifying} that the relations indeed hold. We consider the following certificate: 

\paragraph{Certificate of an alternative set of generators}
Let $\mathfrak{B} = (\mathfrak{p}_1, \ldots, \mathfrak{p}_t)$ be a factor base and $J = (J_1, \ldots, J_m)$ a collection of ideals of $\mathcal{O}_K$. The following data certifies that $\operatorname{Cl}(\mathcal{O}_K) = \langle [J_1], \ldots [J_m]\rangle$. 
\leavevmode
	\begin{itemize}
        \item For each $i \in \{1, \ldots, t\}$, a pair $(\alpha_i, \beta_i) \in \mathcal{O}_K \times \mathcal{O}_K$ of nonzero elements;   
        \item a $t \times m$ matrix $M$ with coefficients in $\mathbb{Z}_{\geq 0}$. 
	\end{itemize}
Verification simply involves checking that 
\begin{enumerate}[label=(\roman*)]
\item $\langle \alpha_i \rangle \cdot \mathfrak{p}_i = \langle \beta_i \rangle \cdot \prod_k J_k ^ {M_{ik}}$ for every $i \in \{1, \ldots, t\}$. \label{relations}
\end{enumerate}

The previous identity is written purely in terms of integral ideals, and implies that $[\mathfrak{p}_i] = \prod_k [J_k]^{M_{ik}}$ in $\operatorname{Cl}(\mathcal{O}_K)$ for every $i$. It follows that every element in the factor base is generated by $\{[J_1], \ldots, [J_m]\}$. This proves $\operatorname{Cl}(\mathcal{O}_K) = \langle [J_1], \ldots [J_m]\rangle$.

To state the underlying lemma that allows us to use the previous certificate, we do as follows. We encode the tuples $\mathfrak{B}$ and $J$ as \lstinline[mathescape]|$\mathfrak{p}$ : m → Ideal $\mathcal{O}_K$| and \lstinline[mathescape]|J : n → Ideal $\mathcal{O}_K$|, respectively, and the matrix \lstinline|M : Matrix m n ℕ|, with \lstinline|m| and \lstinline|n| finite types.
	In order to use \lstinline|ClassGroup.mk0| to map their image to their respective classes in $\operatorname{Cl}(\mathcal{O}_K)$, the inputs must be of type \lstinline[mathescape]|nonZeroDivisors (Ideal $\mathcal{O}_K$)|. We handle this by introducing \lstinline[mathescape]|$\mathfrak{p}$' : m → nonZeroDivisors (Ideal $\mathcal{O}_K$)| and \lstinline[mathescape]|J' : n → nonZeroDivisors (Ideal $\mathcal{O}_K)$| together with the coercion hypothesis \lstinline[mathescape]|hg' : ∀ i, ↑($\mathfrak{p}$' i) = $\mathfrak{p}$ i| and \lstinline|hx' : ∀ i, ↑(J' i) = J i|, which in particular ensures that all the $J_i$ and $\mathfrak{p}_i$ are nonzero ideals. Additionally, we fix a factor base bound \lstinline|D : ℝ| and a natural number \lstinline|C : ℕ| with \lstinline|hB : D < C|. Under the previous assumptions, we \href{https://github.com/alainchmt/CertifyingInvariantsNF/blob/v1/IdealArithmetic/Generation/ClassGroupGeneration.lean#L373}{prove} 
\enlargethispage{\baselineskip}
\refstepcounter{equation}
\label{eq:subgroup-closure}
\begin{center}
\begin{minipage}{0.95\linewidth}
   \begin{lstlisting}[mathescape]
lemma subgroup_closure_eq_classGroup_bound' (hbg : isFactorBaseBound K D)
    (hg : {I : Ideal $\mathcal{O}_K$ | 0 < I.absNorm ∧ I.IsPrime ∧ I.absNorm < C} ⊆ Set.range $\mathfrak{p}$) 
    (hgmul : ∀ i, ∃ (α β : $\mathcal{O}_K$), ∃ (_ : α ≠ 0), ∃ (_ : β ≠ 0),
       Ideal.span {α} * ($\mathfrak{p}$ i) = Ideal.span {β} * ∏ j, (J j) ^ (M i j)) :
  Subgroup.closure (Set.range (fun i => ClassGroup.mk0 (J' i))) = ⊤ := by
   \end{lstlisting}
\end{minipage}
\hfill(\theequation)
\end{center}
\vspace{-0.2cm}
	The hypothesis \lstinline|hg| guarantees that classes of the ideals $\mathfrak{p}_i$ generate $\operatorname{Cl}(\mathcal{O}_K)$ and \lstinline|hgmul| implies that each class $[\mathfrak{p}_i]$ is generated by the classes of ideals in $J$. For decidability reasons, it is convenient to work with a natural number $C$ bounding $D$, instead of using directly the real number $D$. The lemma concludes that the set of classes $[J_i]$ in $\operatorname{Cl}(\mathcal{O}_K)$ generates the full group. 

	In practice, we choose $D$ to be the Minkowski bound of $K$, $C$ the floor of $D + 1$, $\mathfrak{p}$ the prime ideals of norm below $C$, and $J$ a set of ideals whose classes generate the cyclic factors of $\operatorname{Cl}(\mathcal{O}_K)$, which we find using a CAS. We also use a CAS to find the matrix $M$ that expresses each $\mathfrak{p}_i$ in terms of the $J_j$ and the corresponding pair $(\alpha_i, \beta_i) \in \mathcal{O}_K \times \mathcal{O}_K$.  
    Goals of the form given in \lstinline|hgmul| are proven using the ideal arithmetic described in Section \ref{sec:ideal}. In fact, the elements $\alpha_i \in \mathcal{O}_K$ can always be chosen in $\mathbb{Z}$, which simplifies the computation. For more details on how this is handled, we refer to Section \ref{sec: verrel}.

	There is, however, a mismatch with our computation framework. Recall that we work with an explicitly given subalgebra \lstinline[mathescape]|$\mathcal{O}$ : Subalgebra ℤ K| equipped with a multiplication table. 
    This subalgebra is propositionally equal to \lstinline|integralClosure ℤ K|, but not definitionally equal. Since we carry out all computations relative to the type \lstinline[mathescape]|$\mathcal{O}$|, using the previous lemma written in terms of $\mathcal{O}_K$, which is \lstinline|integralClosure ℤ K| coerced into a type, will lead to type errors.

	More generally, if $S$ is any commutative ring isomorphic to $\mathcal{O}_K$, the previous result still holds. We therefore prove a \href{https://github.com/alainchmt/CertifyingInvariantsNF/blob/v1/IdealArithmetic/Generation/ClassGroupGeneration.lean#L435}{version} of \lstinline|subgroup_closure_eq_classGroup_bound'| where $\mathcal{O}_K$ is replaced with a general \lstinline|{S : Type*} [CommRing S]| together with a ring isomorphism \lstinline[mathescape]| φ : S ≃+* $\mathcal{O}_K$|. The instances \lstinline|[Nontrivial S] [IsDedekindDomain S] [Module.Free ℤ S]|, are also included because they are required by \lstinline|ClassGroup.mk0| and \lstinline|Ideal.absNorm|. The proof proceeds by transporting the statement along an explicit isomorphism $\phi : \operatorname{Cl}(S) \xrightarrow{\sim} \operatorname{Cl}(\mathcal{O}_K)$ induced by $\varphi : S \xrightarrow{\sim} \mathcal{O}_K$.  We briefly discuss explicit induced maps between class groups, which we formalize.  

	\subsubsection{The induced map between class groups}\label{sec : iso class}
	Let \lstinline[mathescape]|$\mathcal{O}$ : Subalgebra ℤ K| with \lstinline[mathescape]|h : $\mathcal{O}$ = integralClosure ℤ K|. Suppose we have an explicit morphism \lstinline[mathescape]|φ : ZMod 3  →+ Additive (ClassGroup $\mathcal{O}$)|, where \lstinline|Additive| views the class group as an additive group. That is, we know an explicit \lstinline[mathescape]|J : Ideal $\mathcal{O}$| such that $1 \mapsto  [J]$ under $\varphi$.
	We would like to obtain the corresponding map \lstinline[mathescape]|$\psi$ : ZMod 3  →+ Additive (ClassGroup $\mathcal{O}_K$)|. 
    However, simply rewriting with \lstinline|h| in the term $\varphi$ would result in a term involving a cast along \lstinline|h|, from which we cannot recover an explicit ideal of $\mathcal{O}_K$ whose class is $\psi(1)$.
	A better approach is to compose $\varphi$ with an explicit map $\operatorname{Cl}(\mathcal{O}) \to \operatorname{Cl}(\mathcal{O}_K)$ induced by the inclusion $\mathcal{O} \hookrightarrow \mathcal{O}_K$ coming from the equality $\mathcal{O} = \mathcal{O}_K$. 
    More generally, consider the integral domains $R$ and $S$ together with an injective map $\varphi : R \to S$. By defining appropriate maps between fractional ideals and groups of units, we construct an explicit map \href{https://github.com/alainchmt/CertifyingInvariantsNF/blob/v1/IdealArithmetic/ClassGroupEquiv/ClassGroupEquiv.lean#L214}{$\operatorname{Cl}(R) \to \operatorname{Cl}(S)$} induced by $\varphi$:
	\begin{lstlisting}
noncomputable def ClassGroup.map [IsDomain R] [IsDomain S] {φ : R →+* S} 
    (hinj : Function.Injective φ) : ClassGroup R →* ClassGroup S := by ...
	\end{lstlisting}
    \vspace{-0.3cm}
	If \lstinline|φ : R ≃+* S| is an isomorphism, we prove that \lstinline|φ.symm|, the inverse of $\varphi$, induces the inverse map on class groups. This yields an isomorphism \footnote{After this work was carried out, an isomorphism \href{https://github.com/leanprover-community/mathlib4/blob/6f1c6456ee863edbcf96c4febc52cd1c8d07487f/Mathlib/RingTheory/ClassGroup/Basic.lean\#L486-L495}{\lstinline|ClassGroup.mulEquiv|} induced by a ring isomorphism was added to Mathlib, constructed by composing quotient isomorphisms. Ours, however, is obtained from the more general \lstinline|ClassGroup.map| we define for any injective morphism, and comes with an explicit description on integral ideals.} between class groups which we name \href{https://github.com/alainchmt/CertifyingInvariantsNF/blob/v1/IdealArithmetic/ClassGroupEquiv/ClassGroupEquiv.lean#L325}{\lstinline|ClassGroup.congr φ|}.

	The map \lstinline|ClassGroup.map| can be explicitly described. In our main case of interest, namely over Dedekind domains, where every nonzero ideal is invertible, the following \href{https://github.com/alainchmt/CertifyingInvariantsNF/blob/v1/IdealArithmetic/ClassGroupEquiv/ClassGroupEquiv.lean#L262}{lemma} states that the class of a nonzero (integral) ideal $I$ of $R$ is mapped to the class of the ideal $S\varphi(I)$, denoted by \lstinline|Ideal.map φ I|, 
	\begin{lstlisting}
lemma ClassGroup.map_apply' [IsDedekindDomain R] [IsDedekindDomain S] (φ : R →+* S) 
    (hinj : Function.Injective φ) (I : nonZeroDivisors    (Ideal R)) (J : nonZeroDivisors (Ideal S))
	  (hI : ↑J = Ideal.map φ I) : ClassGroup.map hinj (ClassGroup.mk0 I) = ClassGroup.mk0 J:= ...
	\end{lstlisting}

	\subsubsection{Verification of relations} \label{sec: verrel}
	Our SageMath script, given a prime ideals $\mathfrak{p}_i$ of a factor base and the ideals $J_j$ whose classes generate the cyclic factors of the class group, finds the corresponding $\alpha_i \in \mathbb{Z}$ and $\beta_i \in \mathcal{O}_K$ and row of the matrix $M$ in the certificate. It then writes Lean proofs of the relations in the verification statement \ref{relations} using the ideal multiplications described in Section \ref{sec:ideal} in combination with the structure \href{https://github.com/alainchmt/CertifyingInvariantsNF/blob/v1/IdealArithmetic/IdealArithmetic/IdealArithmetic.lean#L846}{\lstinline|RelationCertificate|} to certify equalities of the form $\langle a \rangle * I = \langle \beta\rangle * J$ with $a \in \mathbb{Z}$ and $\beta \in \mathcal{O}_K$. If the ideal $\mathfrak{p}_i$ is given using a single generator, this step is omitted as the relation is trivial. 

	The number of ideals $J_i$ is typically small as class groups rarely have high $p$-ranks. This is predicted, under suitable assumptions on the number field and the prime $p$, by the refined version of the Cohen-Lenstra-Martinet heuristics \cite[Proposition 2.2]{malle} (note, however, that enumerating number fields by absolute discriminant can pose problems, and other orderings have been considered \cite{lenstra}). We observe that over 98\% of all degree 3 number fields of absolute discriminant at most $10^6$ in the LMFDB (complete list) have cyclic class group. This number is 97\% in degree 4. For both degree 5 and 6 number fields of absolute discriminant at most $10^7$, more than 99\% have cyclic class group. In all of these cases, the class group requires at most three generators. The story is different for quadratic number fields, where the 2-rank of the class group, according to genus theory \cite[Section 7]{genus}, increases with the number of distinct prime factors of the discriminant.

 On the other hand, the number of ideals $\mathfrak{p}_i$ is large if the Minkowski bound is large, and collecting the ideals and the relation proofs in a single Lean term leads to timeouts. We handle this by processing the ideals in batches using the interval certificates \lstinline[mathescape]|PrimesBelowBoundCertificateInterval $\mathcal{O}$ $l$ $u$ C|, introduced in Section \ref{sec: collection}, each describing the prime ideals of norm in the subinterval $(e_{k}, e_{k+1}]$ contained in $(0, C]$. We collect the prime ideals and proofs of the relations separetely for each interval, then concatenate the resulting tuples to produce \lstinline[mathescape]|$\mathfrak{p}$ : m → Ideal $\mathcal{O}$| and a proof of the version of lemma \eqref{eq:subgroup-closure} with $\mathcal{O}$.

\section{Certifying saturation of the class group}\label{sec:satclass}
Recall the two steps of our certification procedure for the class group, described at the start of Section \ref{sec : idealgen}. 
At this point, we assume that we have certified a factor base $\mathfrak{B}$ and the relations proving that a collection of ideals $J_i$ satisfy $\operatorname{Cl}(\mathcal{O}_K) = \langle [J_1], \ldots, [J_m]\rangle$. In this section, we focus on the second step, which certifies the structure of the class group. We introduce a certificate, based on the concept of $p$-saturation, proving that
$\mathbb{Z}/n_1\mathbb{Z}\times \ldots \times \mathbb{Z}/n_m\mathbb{Z}  \cong \operatorname{Cl}(\mathcal{O}_K)$, where the factor $\mathbb{Z}/n_i\mathbb{Z}$ corresponds to the subgroup generated by $ [J_i]$.

Suppose $n_i \in \mathbb{Z}_{>0}$ is such that $[J_i] ^ {n_i} = 1 $ in $\operatorname{Cl}(\mathcal{O}_K)$. This means that the surjective homomorphism 
\begin{align*}
    \psi : \mathbb{Z}^m &\to \operatorname{Cl}(\mathcal{O}_K) \\ (a_1, \ldots, a_m)  &\mapsto [J_1]^{a_1} \dots [J_m]^{a_m}
\end{align*}
factors through $\varphi : \mathbb{Z}/n_1\mathbb{Z} \times \ldots \times \mathbb{Z}/n_m\mathbb{Z} \to \operatorname{Cl}(\mathcal{O}_K) $. The kernel of $\varphi$ catches any extra relations between the generators $[J_i]$ beyond those coming from $[J_i]^{n_i} = 1$. 

For a prime number $p$, we will say that the collection of pairs $([J_i], n_i)$ is \emph{$p$-saturated} if $p \nmid \# \ker \varphi$. This condition is equivalent to that of $n_1 \mathbb{Z} \times \ldots \times n_m\mathbb{Z}$ being a $p$-saturated subgroup of $\ker \psi$, explaining our choice of terminology. An equivalent formulation, and the one we will use, can be stated at the level of integral ideals. The collection $\{([J_i], n_i)\}_{i}$ is $p$-saturated if and only if 
\begin{align}\label{eq : psat_ideal}
	\forall a \in \mathbb{N}^m \text{ and } \forall b \in \mathcal{O}_K \text{, if  } \prod_{i = 1} ^ m J_i ^ {a_i} = \langle b \rangle \text{, then } \forall i, n_i \mid pa_i \implies \forall i, n_i \mid a_i.
\end{align}

Our strategy consists of showing that $\{([J_i], n_i)\}_{i}$ is $p$-saturated for every prime number $p$, and thus $\ker \varphi$ must be trivial. In that case, $\varphi$ is an isomorphism and $\operatorname{Cl}(\mathcal{O}_K) \cong \mathbb{Z}/n_1\mathbb{Z}\times \ldots \times \mathbb{Z}/n_m\mathbb{Z}$. Note that if $p$ does not divide $\prod_{i} n_i$, then $\{([J_i], n_i)\}_{i}$ is automatically $p$-saturated. 

The above argument applies more generally to a finite commutative group $G$ and a collection of generators. We \href{https://github.com/alainchmt/CertifyingInvariantsNF/blob/v1/IdealArithmetic/Saturation/ClassGroupSaturation.lean#L173}{ formalize} it in Lean in this generality, and then specialize to the case of the class group using the integral ideal formulation of $p$-saturation in \eqref{eq : psat_ideal}. Using this technique on the explicit order $\mathcal{O} = \mathcal{O}_K$ we will be able to construct an explicit isomorphism 
\begin{lstlisting}[mathescape]
  (∀ i : ι , (ZMod (n i))) ≃+ Additive (ClassGroup $\mathcal{O}$)
\end{lstlisting}
with $\mathbb{Z}/n_1\mathbb{Z} \times \ldots \times \mathbb{Z}/n_m\mathbb{Z}$ represented as the dependent function type \lstinline|∀ i : ι , (ZMod (n i))|, which \lstinline|i : ι| to an element in \lstinline|ZMod (n i)|, the integers modulo $n_i$. Furthermore, \lstinline[mathescape]|Additive $\mathcal{O}$| is the type \lstinline[mathescape]|$\mathcal{O}$| with the multiplicative structure turned additive. This isomorphism is then composed with the induced $\operatorname{Cl}(\mathcal{O
}) \xrightarrow{\sim} \operatorname{Cl}(\mathcal{O}_K)$ to get an explicit isomorphism between $\mathbb{Z}/n_1\mathbb{Z} \times \ldots \times \mathbb{Z}/n_m\mathbb{Z}$ and $\operatorname{Cl}(\mathcal{O}_K)$.

\subsection{The discrete logarithm approach} \label{sec : disclog}
In this section, we discuss the tools we use to certify that a $\{([J_i], n_i)\}_{i}$ is $p$-saturated. These are based on a technique that uses discrete logarithms in residue fields, and which underlies some of the methods implemented in various CASs, including PARI/GP and Magma \cite[Section 7]{magma}. For the underlying theory, our main reference is \cite[Section 3.6]{hess} which uses a different formulation in the language of $S$-units. Here, we adapt the relevant definitions and results to our setting. We first introduce some machinery to identify $p$-th powers in $\mathcal{O}_K$. 

Let $\mathfrak{q}$ be a prime ideal of $\mathcal{O}_K$ such that $p \mid N(\mathfrak{q}) - 1$. The quotient $\mathcal{O}_K/\mathfrak{q}$ is a finite field and its unit group $(\mathcal{O}_K/\mathfrak{q}) ^ {\times}$ is cyclic and of order $N(\mathfrak{q}) - 1$. Let $\xi \in (\mathcal{O}_K/\mathfrak{q}) ^ {\times}$ be a primitive root (i.e. $\xi$ generates $(\mathcal{O}_K/\mathfrak{q}) ^ {\times}$). Take $x \in \mathcal{O}_K\setminus \mathfrak{q}$. Since $\xi$ is a primitive root, there exists a natural number $k$ such that $\xi ^ k = \bar{x}$ in $\mathcal{O}_K / \mathfrak{q}$. Because of the condition $p \mid N(\mathfrak{q})-1$, the number $k$ is well-defined modulo $p$. We define 
\begin{align*}
	\log_{\xi,p} : \mathcal{O}_K \setminus \mathfrak{q} &\to \mathbb{F}_p \\
	x &\mapsto \bar{k},
\end{align*}
where $\bar{k}$ is $k$ modulo $p$. The map satisfies $\log_{\xi,p}(xy) = \log_{\xi,p}(x) + \log_{\xi,p}(y)$ for all $x,y \in \mathcal{O}_K \setminus \mathfrak{q}$. In particular, $\log_{\xi,p}(x ^ m) = m\log_{\xi,p}(x)$ for $m \in \mathbb{N}$. Consequently, $\log_{\xi,p} (x ^ p) = 0$ and thus this map provides a way to detect $p$-th powers.

\begin{remark}
	For $x \in \mathcal{O}_K$ not a $p$-th power and a randomly chosen prime ideal $\mathfrak{q}$ with $p \mid N (\mathfrak{q}) - 1$, it can be taken as reasonable heuristic that $\log_{\xi,p}(x) \neq 0$ with probability $(p-1)/p$ (see Section \ref{heuristic}). 
\end{remark}

To define this map in Lean, we first introduce a more general version for a finite commutative ring $R$ with cyclic unit group generated by an element \lstinline|ξ : R|. This is asserted with the assumption \lstinline[mathescape]|h : IsPrimitiveRoot ξ (Fintype.card R$^{\times}$)|. We define a total map \href{https://github.com/alainchmt/CertifyingInvariantsNF/blob/v1/IdealArithmetic/Saturation/LogMatrix.lean#L72}{ \lstinline|LogFiniteRing h p : R → ZMod p|} that sends a unit $x = \xi ^ k$ to $\bar{k} \in \mathbb{Z}/p\mathbb{Z}$ and nonunits to $0$.

For a commutative ring $S$ and a ring homomorphism \lstinline|φ : S →+* R|, the composition
\lstinline|fun i => LogFiniteRing h p (φ i)| defines a map of type \lstinline|S → ZMod p|. In the special case where $S = \mathcal{O}_K$ and $\varphi : \mathcal{O}_K \to \mathcal{O}_K/\mathfrak{q}$ is the quotient map, this composition corresponds to $\log_{\xi, p}$ extended to a total function on $\mathcal{O}_K$ sending elements in $\mathfrak{q}$ to $0$. To certify statements of the form $\log_{\xi, p}(x) = \overline{t}$ 
 we need to verify identities of the form $\xi^k = x$ in the finite field $\mathcal{O}_K/\mathfrak{q}$. However, working directly with this quotient in Lean is inconvenient for automatic computation since equality depends on ideal membership.
An alternative is to transport the corresponding identities to another type \lstinline|(F : Type*) [CommRing F]| such that $F \cong \mathcal{O}_K/\mathfrak{q}$ and which has better computational support. Currently, Mathlib supports automatic arithmetic in finite fields of prime order via the type \lstinline|ZMod q|. Therefore, we restrict ourselves to prime ideals $\mathfrak{q}$ with inertia degree one (i.e. $N(\mathfrak{q}) = q$, with $q$ a prime number), so that $\mathcal{O}_K / \mathfrak{q} \cong \mathbb{Z}/q\mathbb{Z}$.

\begin{remark}
	This restriction is not a big limitation: the proportion of prime ideals (ordered by norm) with inertia degree one among all prime ideals $\mathfrak{q}$  such that $p \mid N(\mathfrak{q}) -1$ tends to $1$ (see Section \ref{sec:existancesat}).
\end{remark}

Let $\mathfrak{q}$ be a prime ideal with $N(\mathfrak{q}) = q$ a prime number, so that $\mathcal{O}_K / \mathfrak{q} \cong \mathbb{Z}/q\mathbb{Z}$. Using the \lstinline|LogFiniteRing| construction with \lstinline|ZMod q| as the finite ring, and composing with the reduction map $\mathcal{O}_K \to \mathbb{Z}/q\mathbb{Z}$, we \href{https://github.com/alainchmt/CertifyingInvariantsNF/blob/v1/IdealArithmetic/Saturation/PrincipalityCertificate.lean#L71}{define} a specialized version of $\log_{\xi, p} : \mathcal{O}_K \to \mathbb{F}_p$ as 

\begin{lstlisting}[mathescape]
def LogFiniteZMod {$\mathcal{O}$ : Type*} [CommRing $\mathcal{O}$] {q ξ : ℕ} [hq : Fact (Nat.Prime q)] {$\mathfrak{q}$ : Ideal $\mathcal{O}$} 
    (hcard : Nat.card ($\mathcal{O}$ / $\mathfrak{q}$) = q) (h : IsPrimitiveRoot ($\xi$ : ZMod q) (q - 1)) (p : ℕ) :
	   $\mathcal{O}$ → ZMod p := by ...
\end{lstlisting}
The definition is valid for any commutative ring $\mathcal{O}$ and ideal $\mathfrak{q}$ of $\mathcal{O}$ with a residue ring of prime order. Observe that the primality of $\mathfrak{q}$ is an immediate consequence of its residue ring having prime order. Furthermore, since we include the instance condition that $q$ is prime, we know that the cardinality of \lstinline[mathescape]|(ZMod q)$^\times$| is $q - 1$. The assumption \lstinline|h| thus asserts that $\bar{\xi}$ generates $(\mathbb{Z}/q\mathbb{Z})^{\times}$. For an element $x \in \mathcal{O} \setminus \mathfrak{q}$, the following certifies that $\log_{\xi,p}(\bar{x}) = t$ with $ t \in \mathbb{F}_p$.

\paragraph{Certificate of discrete logarithms}
\leavevmode
\begin{multicols}{2}
	\begin{itemize}
		\item An integer $m$;
		\item a natural number $k$.
	\end{itemize}
\end{multicols}

Verification of the certificate amounts to checking that
\begin{multicols}{3}
	\begin{enumerate}[label=(\roman*)]
		\item $x - m \in \mathfrak{q} $ \label{ver : mem};
        \item $\xi ^ k = \bar{m} $ in $\mathbb{Z}/q\mathbb{Z}$; \label{ver : mem2}
        \item $ \bar{k} = t$ in $\mathbb{F}_p$. \label{ver : mem3}
	\end{enumerate}
\end{multicols}
The first verification statement proves that $x \to \bar{m}$ under the reduction map $\mathcal{O} \to \mathbb{Z}/q\mathbb{Z}$. Statement \ref{ver : mem2} shows that $k$ is the discrete logarithm of $\bar{m}$ with base $\xi$, and \ref{ver : mem3} verifies the reduction modulo $p$. 

In order to prove a goal of the form \lstinline|LogFiniteZMod hcard h p x = t | with \lstinline[mathescape]|x : $\mathcal{O}$| and \lstinline|t : ZMod p|, we introduce the structure \href{https://github.com/alainchmt/CertifyingInvariantsNF/blob/v1/IdealArithmetic/Saturation/PrincipalityCertificate.lean#L84}{ \lstinline|DiscreteLogCertificate hcard h p x t|}, which 
includes in its fields the previous certificate with proofs of the verification statements. The coordinate tuples of the relevant elements are included, and the membership statement \ref{ver : mem} can be proven using a membership certificate for ideals. 

\begin{remark}
	To specify the exponent $k$ such that $\xi ^ k = m$ in $\mathbb{Z}/q\mathbb{Z}$, it is necessary to solve a discrete logarithm problem. We delegate this task to a CAS. Although the security of many cryptographic systems relies on the intractability of this computation \cite[Section 3.6]{handbook}, this will not be an issue for us since the values for $q$ are of moderate size in practice.
\end{remark}

A key step in the above discussion is proving \lstinline[mathescape]|IsPrimitiveRoot $\xi$ (q - 1)|, i.e., that \lstinline[mathescape]|$\xi$ : ZMod q| has multiplicative order $q - 1$ in \lstinline|ZMod q|. To automate this, we introduce a structure \href{https://github.com/alainchmt/CertifyingInvariantsNF/blob/v1/IdealArithmetic/Saturation/PrincipalityCertificate.lean#L33}{ \lstinline|IsOrderOf x n|}, where \lstinline|x : G| lies in a monoid $G$ and \lstinline|n : ℕ|. This structure certifies that the order of $x$ in $G$ is $n$ by bundling together the factorization data of $n$ together with proofs that $x^{n/p} \neq 1$ for every prime $p$ dividing $n$. In our setting, $G$ is \lstinline|ZMod q| with its multiplicative structure.
To handle expressions like \lstinline|(a : ZMod q) ^ n|, we define in Lean the function \href{https://github.com/alainchmt/CertifyingInvariantsNF/blob/v1/IdealArithmetic/Computation/ExponentiationZMod.lean#L70}{\lstinline|squareAndMultiply|}, implementing the tail-recursive \cite[Chapter 8.1]{funprog} square-and-multiply exponentiation algorithm in \lstinline|ZMod q|, and a tactic \href{https://github.com/alainchmt/CertifyingInvariantsNF/blob/v1/IdealArithmetic/Computation/ExponentiationZMod.lean#L81}{ \lstinline|zmod_pow|} that automatically simplifies such exponentiations in goals. For \href{https://github.com/alainchmt/CertifyingInvariantsNF/blob/v1/IdealArithmetic/Computation/ExponentiationZMod.lean#L102}{example}, it quickly solves:
\begin{lstlisting}
	example : (3 : ZMod 10020313) ^ 11002324556787 ≠ 1  := by zmod_pow
\end{lstlisting}

\subsection{The logarithmic matrix}
Our objective in this section is to prove that the collection $\{([J_i], n_i)\}_{i}$ is $p$-saturated. We combine information on the unit group of $\mathcal{O}_K$ with the logarithmic map introduced in the previous section. 

Let $(r_1, r_2)$ be the signature of $K$ and $\omega(K)$ the size of the torsion subgroup $\mu(\mathcal{O}_K)$ of $\mathcal{O}_K ^ {\times}$. From Dirichlet's unit theorem \eqref{eq : dirichlet}, we know $\mathcal{O}_K ^ {\times}$ decomposes as the direct product of a cyclic finite group of size $\omega(K)$ and a free part of rank $r_1 + r_2  - 1$. The quotient $\mathcal{O}_K ^ {\times} / (\mathcal{O}_K ^ {\times}) ^ p $ can be regarded as an $\mathbb{F}_p$-vector space. Its dimension equals $r_1 + r_2 -1$ when $p \nmid \omega(K)$, and $r_1 + r_2$ when $p \mid \omega(K)$. Let $\bar{u}_1, \ldots, \bar{u}_t$ be an $\mathbb{F}_p$-basis of $\mathcal{O}_K ^ {\times} / (\mathcal{O}_K ^ {\times}) ^ p $. In particular, this means that every unit $\gamma \in \mathcal{O}_K^{\times}$ can be written as $\gamma = \eta ^ p\prod_{i}^t u_i^{m_i}$ with $m_i \in \mathbb{Z}$ and $\eta \in \mathcal{O}_K^{\times}$. The following theorem underlies a certificate of $p$-saturation.

\begin{theorem}\label{theo: log}
	Let $(J_1, \ldots, J_k)$ be nonzero ideals of $\mathcal{O}_K$ such that $J_i^{n_i} = \langle \alpha_i\rangle $ for some $n_i \in \mathbb{Z}_{>0}$ and $\alpha_i \in \mathcal{O}_K$. Consider the subcollection $\alpha_{\psi_1}, \ldots, \alpha_{\psi_d}$ consisting of those $\alpha_{\psi_i}$ for which $p$ divides $n_{\psi_i}$.

	Let $u_1, \ldots, u_t$ be a collection of units in $\mathcal{O}_K$, and let $\mathfrak{q}_1, \ldots, \mathfrak{q}_{r}$ be a collection of prime ideals of $\mathcal{O}_K$ with $r \geq t + d$ and with the property that $\alpha_{\psi_i} \not \in \mathfrak{q}_l$ for every $i$ and every $l$. For each $l$, pick a primitive root $\xi_l \in \mathcal{O}_K/\mathfrak{q_l}$, and form the $r \times (t + d)$ matrix over $\mathbb{F}_p$ given by 
	\begin{align}\label{logmatrix}
		\begin{pmatrix}
			\log_{\xi_1, p} (u_1) & \dots & \log_{\xi_1, p} (u_{t}) & \log_{\xi_1, p} (\alpha_{\psi_1}) & \dots & \log_{\xi_1, p} (\alpha_{\psi_d}) \\
			\vdots & \ddots & \vdots & \vdots & \ddots & \vdots\\
			\log_{\xi_{r}, p} (u_1) & \dots & \log_{\xi_{r}, p} (u_{t}) & \log_{\xi_{r}, p} (\alpha_{\psi_1}) & \dots & \log_{\xi_{r}, p} (\alpha_{\psi_d})
		\end{pmatrix}.
	\end{align}
	Suppose that the units $u_i$ are chosen so that every $\gamma \in \mathcal{O}_K^{\times}$ can be written as $\gamma = \eta ^ p\prod_{i}^t u_i^{m_i}$ with $m_i \in \mathbb{Z}$ and $\eta \in \mathcal{O}_K^{\times}$. Under this assumption, if the logarithmic matrix \eqref{logmatrix} has full rank, then the collection $\{([J_i], n_i)\}_{i}$ is $p$-saturated. 
\end{theorem}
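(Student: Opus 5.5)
We verify the ideal-level formulation \eqref{eq : psat_ideal} directly. Let $a \in \mathbb{N}^k$ and $b \in \mathcal{O}_K$ with $\prod_i J_i^{a_i} = \langle b \rangle$, and assume $n_i \mid p a_i$ for all $i$. We must show $n_i \mid a_i$ for all $i$.

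First we dispose of the indices with $p \nmid n_{i}$. For such $i$, $\gcd(n_i,p)=1$, so $n_i \mid pa_i$ already forces $n_i \mid a_i$. For the indices $\psi_j$ with $p \mid n_{\psi_j}$, write $a_{\psi_j} = c_j \, n_{\psi_j}/p$ with $c_j \in \mathbb{N}$. The goal becomes $p \mid c_j$ for every $j$.

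Next we produce a $p$-th power relation among elements. Set $a_i = c_i' n_i$ for the coprime indices. Then $\langle b\rangle^p = \prod_i J_i^{p a_i}$. For coprime indices $J_i^{p a_i} = \langle \alpha_i \rangle^{p c_i'}$, and for the $\psi_j$ we have $J_{\psi_j}^{p a_{\psi_j}} = J_{\psi_j}^{c_j n_{\psi_j}} = \langle \alpha_{\psi_j}\rangle^{c_j}$. Hence
\[
\langle b^p \rangle = \Big\langle \beta^{p} \prod_{j=1}^d \alpha_{\psi_j}^{c_j} \Big\rangle, \qquad \beta = \prod_{p\nmid n_i} \alpha_i^{c_i'}.
\]
Since $\mathcal{O}_K$ is a domain, two generators of the same nonzero principal ideal differ by a unit. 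So $b^p = \gamma\, \beta^p \prod_j \alpha_{\psi_j}^{c_j}$ for some $\gamma \in \mathcal{O}_K^\times$. By hypothesis $\gamma = \eta^p \prod_{i=1}^t u_i^{m_i}$, which gives
\[
b^p = (\eta\beta)^p \prod_{i=1}^t u_i^{m_i} \prod_{j=1}^d \alpha_{\psi_j}^{c_j}.
\]
Exponents $m_i$ may be negative, so we move those factors to the left-hand side to stay in $\mathcal{O}_K$. Alternatively, note that $\log_{\xi,p}$ extends multiplicatively to units, since units are never in $\mathfrak{q}_l$.

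The key step is to apply $\log_{\xi_l,p}$ to this identity for each $l$. This requires every factor to lie outside $\mathfrak{q}_l$. Units and the $\alpha_{\psi_j}$ are outside $\mathfrak{q}_l$ by hypothesis, and the right-hand side is therefore outside $\mathfrak{q}_l$. The identity then forces $b \notin \mathfrak{q}_l$. Because $p\mid N(\mathfrak{q}_l)-1$, which is implicit in defining $\log_{\xi_l,p}$, we have $b^p = b^{p}$ and likewise $\beta$, $\eta$ outside $\mathfrak{q}_l$.

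Here lies a subtlety: $\beta$ itself might lie in $\mathfrak{q}_l$. To avoid this, we cancel instead of evaluating. Rewrite $(b/(\eta\beta))^p$ as an element $x^p$ of $K$, with $x\in\mathcal{O}_K$ since it is integral. Integrality holds because $x^p$ is a product of units and the $\alpha_{\psi_j}$, hence lies in $\mathcal{O}_K$. Then $x \notin \mathfrak{q}_l$.

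Applying additivity of $\log$ gives $0 = \sum_i m_i \log_{\xi_l,p}(u_i) + \sum_j c_j \log_{\xi_l,p}(\alpha_{\psi_j})$ in $\mathbb{F}_p$ for every $l$. So the vector $(\bar m_1,\dots,\bar m_t,\bar c_1,\dots,\bar c_d)$ lies in the kernel of the matrix \eqref{logmatrix}. Full rank with $r \ge t+d$ means full column rank, so the kernel is trivial. In particular $p \mid c_j$ for all $j$, which completes the proof.

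I expect the main obstacle to be the bookkeeping around possibly negative exponents $m_i$ and the element $\beta$ that may lie in some $\mathfrak{q}_l$. The cleanest route is to work in $K^\times$ and extend $\log_{\xi_l,p}$ to the multiplicative group of $\mathfrak{q}_l$-units, i.e.\ the localization $(\mathcal{O}_K)_{\mathfrak{q}_l}^\times$, through the residue field. The alternative is the integrality argument for $x$ described above.
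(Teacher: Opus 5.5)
Your proof is correct and follows essentially the paper's route: you absorb the unit $\gamma$ using the $p$-th-power generation hypothesis, use that $\mathcal{O}_K$ is integrally closed to get $x = b/(\eta\beta) \in \mathcal{O}_K$ with $x^p = \prod_i u_i^{m_i}\prod_j \alpha_{\psi_j}^{c_j}$, and read off a kernel vector of the logarithmic matrix (you argue directly where the paper argues by contradiction, which is immaterial). The only blemish is the paragraph before your ``subtlety'' remark: its claims that the right-hand side and $b$ avoid $\mathfrak{q}_l$ are false in general, since $\beta$ may lie in $\mathfrak{q}_l$, so that paragraph should be deleted in favour of the integrality argument you then give.
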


\begin{proof}
	Recall the characterization of $p$-saturation in \eqref{eq : psat_ideal}. Suppose $a \in \mathbb{N}^k$ and $b \in \mathcal{O}_K$ satisfy $\prod_i J_i ^{a_i} = \langle b \rangle$ and $n_i \mid pa_i$ for every $i$. For the sake of contradiction, assume there is $j$ such that $n_j \nmid a_j$.

	We have that $\langle b \rangle ^ p = \prod_i J_i^{pa_i} = \prod \langle \alpha_i\rangle^{r_i}$ with $r_i$ such that $pa_i = r_i n_i$. This implies that $b^p = \gamma \prod_i \alpha_i^{r_i}$ with $\gamma \in \mathcal{O}_K^{\times}$. Note that if $p \nmid n_i$ then $p \mid r_i$. Furthermore, $\gamma = \eta ^ p \prod _i u_i ^ {m_i} $ for some $m_i \in \mathbb{Z}$ and $\eta \in \mathcal{O}_K^{\times}$. Therefore, we have $b^p = y ^ p \prod_i u_i ^ {m_i} \prod \alpha_{\psi_i}^{r_{\psi_i}}$ with some element $y \in \mathcal{O}_K$. This implies that $(b/y) ^ p \in \mathcal{O}_K$, and since $\mathcal{O}_K$ is integrally closed, we can conclude that $b/y \in \mathcal{O}_K$. Thus, $\prod_i u_i ^ {m_i} \prod \alpha_{\psi_i}^{r_{\psi_i}} = x ^ p$ for some element $x \in \mathcal{O}_K$. Taking logarithms, we see that for $i \in \{1, \ldots, r\}$ it holds that
	\begin{align*}
		m_1 \log_{\xi_i, p}(u_1) + \ldots + m_t \log_{\xi_i, p}(u_t) + r_{\psi_1} \log_{\xi_i, p}(\alpha_{\psi_1}) + \ldots + r_{\psi_d} \log_{\xi_i, p}(\alpha_{\psi_d}) = 0.
	\end{align*}
	Thus, $(m_1, \ldots, m_t,r_{\psi_1}, \ldots, r_{\psi_d}) \in \mathbb{F}_p ^ {t + d}$ is in the kernel of matrix \eqref{logmatrix}. Since $n_j \nmid a_j$ we must have that  $p \nmid r_j$ and $p \mid n_j$, hence this tuple is not the zero tuple. We conclude that the kernel of matrix \eqref{logmatrix} is nontrivial, which is a contradiction because we assumed the matrix is of full rank.
\end{proof}

\begin{remark}
	The assumption that $\mathcal{O}_K$ is integrally closed is not required if one imposes additional conditions on the prime ideals by demanding that $\alpha_i \notin \mathfrak{q}_l$ for all $i$ and all $l$. Nevertheless, this would make our search for a certificate more restrictive and would introduce additional proof obligations.
\end{remark}

For a commutative ring $\mathcal{O}$, a collection of prime numbers $q_i$, ideals $\mathfrak{q}_i$ of $\mathcal{O}$ such that $\# (\mathcal{O} / \mathfrak{q}_i) = q_i$, and primitive roots $\xi_i \in \mathbb{Z}/q_i\mathbb{Z}$, we define the logarithmic matrix \href{https://github.com/alainchmt/CertifyingInvariantsNF/blob/v1/IdealArithmetic/Saturation/PrincipalityCertificate.lean#L163}{ \lstinline|MatrixLogZMod|} associated to a collection of elements $x_j \in \mathcal{O}$ as the matrix over $\mathbb{F}_p$ whose $ij$-th entry the value of $\log_{\xi_i, p}(x_j)$, where $\log_{\xi_i, p}: \mathcal{O} \to \mathbb{F}_p$ is the total map implemented by \lstinline|LogFiniteZMod|.

We formalize the corresponding version of Theorem \ref{theo: log} as \href{https://github.com/alainchmt/CertifyingInvariantsNF/blob/v1/IdealArithmetic/Saturation/PrincipalityCertificate.lean#L192}{ \lstinline|pSaturated_of_full_rank_matrixLogZMod|}, for an integral domain $\mathcal{O}$ which is integrally closed and $p$ is such that \lstinline|∀ i, p ∣ q i - 1|. The tuple of ideals is given as \lstinline[mathescape]|J : κ → Ideal $\mathcal{O}$| with the property that \lstinline|h : ∀ i, (J i) ^ (n i) = Ideal.span {a i}| with tuples \lstinline|n : κ → ℕ| and \lstinline[mathescape]|a : κ → $\mathcal{O}$|. The function \lstinline|ψ : γ → κ| extracts the subcollection of indices such that $p \mid n_{\psi_i}$. The collection of units is given as a tuple \lstinline[mathescape]|u : ι → $\mathcal{O}$| with the assumption \lstinline|hu : ∀ i, IsUnit (u i)|. The property that they generate the unit group $\mathcal{O}^{\times}$ up to $p$-th powers is stated as:
\begin{lstlisting}[mathescape]
	hugen : ∀ (w : $\mathcal{O}^{\times}$), (∃ (e : ι → ℕ) , (∃ (t : $\mathcal{O}^{\times}$) , w = (∏ (i : ι), (u i) ^ (e i)) * t ^ p))
\end{lstlisting}
The tuple \lstinline|(Sum.elim (fun i => u i) (fun i => a (ψ i)))| corresponds to $(u_1, \ldots, u_t, a_{\psi_1}, \ldots, a_{\psi_d})$ and has type \lstinline[mathescape]|ι ⊕ γ → $\mathcal{O}$|, indexed by disjoint union of \lstinline|ι| and \lstinline|γ|. The resulting logarithmic matrix has type \lstinline|Matrix τ (ι ⊕ γ) (ZMod p)| and being full-rank means it has rank $\# \iota + \# \gamma$.

On the way to formally proving the theorem above, we first prove a more general \href{https://github.com/alainchmt/CertifyingInvariantsNF/blob/v1/IdealArithmetic/Saturation/LogMatrix.lean#L439}{version} of Theorem \ref{theo: log} for an integrally closed domain $S$, using the \lstinline|LogFiniteRing| function introduced before. Rather than working with a family of prime ideals with prime-order residue fields, we instead consider a family of types \lstinline|F : τ → Type*|, where each type \lstinline|F i| is a finite commutative ring with cyclic unit group and with a morphism \lstinline|φ i : S →+* F i|. This abstracts the role of the quotient rings and the reduction maps, allowing greater flexibility in how they are represented.

\subsection{A certificate for \texorpdfstring{$p$}{p}-saturation}\label{sec:certunits}
To apply Theorem \ref{theo: log}, we must show that a collection of units $u_i$ generates $\mathcal{O}_K^{\times}$ modulo $p$-th powers. This can be deduced from the full-rank condition on the logarithmic matrix \eqref{logmatrix} as follows.

Let $\mu(\mathcal{O}_K)$ be the torsion subgroup of $\mathcal{O}_K^{\times}$ so that $U_K = \mathcal{O}_K^{\times}/\mu(\mathcal{O}_K)$ is free abelian of rank $r_1 + r_2 - 1$. Assume for the moment that $p \nmid \omega(K)$, so that every torsion unit is a $p$-th power. Let $t = r_1 + r_2 -1$ and suppose that a collection $u_1, \ldots, u_t$ in $\mathcal{O}_K^{\times}$ is such that the matrix \eqref{logmatrix} is of full rank. This, in particular, implies that the submatrix consisting of the first $t$ columns
\begin{align} \label{logmatrixunits}
	\begin{pmatrix}
		\log_{\xi_1, p} (u_1) & \dots & \log_{\xi_1, p} (u_{t}) \\
		\vdots & \ddots & \vdots\\
		\log_{\xi_{r}, p} (u_1) & \dots & \log_{\xi_{r}, p} (u_{t})
	\end{pmatrix}
\end{align}
is also of full rank. This can be used to show that the collection $u_i$ is $\mathbb{Z}$-linearly independent in $U_K$. In Lean, under weaker assumptions on an integral domain $\mathcal{O}$, and assuming that \lstinline[mathescape]| ¬ p ∣ Nat.card (CommGroup.torsion $\mathcal{O}^{\times}$)| and \lstinline|(MatrixLogZMod p hcard u hr).rank = Fintype.card ι|, with units \lstinline[mathescape]|u : ι → $\mathcal{O}$| such that \lstinline|hu : ∀ i, IsUnit (u i)|, we \href{https://github.com/alainchmt/CertifyingInvariantsNF/blob/v1/IdealArithmetic/Saturation/PrincipalityCertificate.lean#L219}{establish} that
\begin{lstlisting}[mathescape]
	LinearIndependent ℤ 
     (fun i => Additive.ofMul (QuotientGroup.mk (s := (CommGroup.torsion $\mathcal{O}^{\times}$)) (hu i).unit))
\end{lstlisting}
The term \lstinline|(hu i).unit| is the unit \lstinline|u i| as a term of type $\mathcal{O}^{\times}$. The map \lstinline|QuotientGroup.mk| sends it to its image in $\mathcal{O}^{\times}/\mu(\mathcal{O})$, and \lstinline|Additive.ofMul| identifies this multiplicative group with its additive version.

The image of $\{u_1, \ldots, u_t\}$ in $U_K$ generates a subgroup $W$ of rank $r_1 + r_2 - 1$, so the index $[U_k : W]$ is finite. From the full rank condition of \eqref{logmatrixunits}, it can be shown that $p \nmid [U_k : W]$, from which it follows that the images of the $u_i$ generate $\mathcal{O}_K^{\times}/ (\mathcal{O}_K ^ {\times})^p$.

Instead of employing linear algebra algorithms for computing the rank of a matrix, we certify that a $r \times t$ matrix has full rank by exhibiting an invertible $t \times t$ submatrix and providing its inverse. The previous discussion, together with this last observation, gives rise to the following certificate. 

\paragraph{Certificate for unit group generators modulo $p$-th powers}
For an integral domain $\mathcal{O}$, let $\mu(\mathcal{O})$ be the torsion subgroup of $\mathcal{O}^{\times}$. Let $p$ be a prime number with $p \nmid \# \mu(\mathcal{O})$. Suppose that $\mathcal{O}^{\times}/ \mu(\mathcal{O})$ has rank $t$ as a $\mathbb{Z}$-module. Let $u_1, \ldots, u_t$ be elements of $\mathcal{O}$. A certificate that the $u_i$ are units such that their image generates $\mathcal{O}^{\times}/ (\mathcal{O}^{\times}) ^ p$ consists of the following data:
\leavevmode
\begin{multicols}{2}
	\begin{itemize}
		\item A natural number $r$ with $t \leq r$; 
		\item a tuple of prime numbers $(q_1, \ldots, q_r)$;
		\item a tuple of natural numbers $(\zeta_1, \ldots, \zeta_r)$;
		\item a tuple of ideals $(\mathfrak{q}_1, \ldots, \mathfrak{q}_r)$ of $\mathcal{O}$;
		\item an $r \times t$ matrix $M$ over $\mathbb{F}_p$;
		\item a $t \times t$ matrix $M'$ over $\mathbb{F}_p$.
	\end{itemize}
\end{multicols}

Verification of the certificate amounts to checking that 
\begin{multicols}{2}
	\begin{enumerate}[label=(\roman*)]
		\item the element $u_i$ is a unit for every $i$;
		\item $p \mid q_i - 1$ for every $i$;
		\item $\bar{\zeta}_i$ is a primitive root of $\mathbb{Z}/q_i\mathbb{Z}$ for every $i$;
		\item $\# \mathcal{O}/\mathfrak{q}_i = q_i$ for every $i$;
		\item $M_{ij} = \log_{\zeta_i, p}(u_j)$ for every $i$ and $j$;
		\item $M_{[t]} * M' = \operatorname{Id}_t$. 
	\end{enumerate}
\end{multicols}
Here, $M_{[t]}$ denotes the submatrix of $M$ formed by the first $t$-rows and $\operatorname{Id}_t$ the identity $t \times t$ matrix. Statement (iv) directly proves that the $\mathfrak{q}_i$ are prime ideals.
Statements (ii), (iii), and (iv) together assert that $\log_{\zeta_i,p}(u_j)$ is defined for every $i$. Statement (v) states that $M$ is the logarithmic matrix \eqref{logmatrixunits}. Verifying (vi) amounts to a simple matrix multiplication and proves that $M$ is of full rank without the need of gaussian elimination-type algorithms.  From the discussion above, this shows that the $u_i$ is a system of independent units, the subgroup generated by the images of the $u_i$ in $\mathcal{O}^{\times} / \mu(\mathcal{O})$ is $p$-maximal, and they generate $\mathcal{O}^{\times}$ modulo $p$-th powers.
We \href{https://github.com/alainchmt/CertifyingInvariantsNF/blob/v1/IdealArithmetic/Saturation/PrincipalityCertificate.lean#L306}{formalize} this certification scheme in Lean as
\begin{lstlisting}[mathescape]
	structure pMaximalUnitsCertificateNDvdT ($\mathcal{O}$ : Type*) [CommRing O] (p : ℕ)   where ...
\end{lstlisting}

It includes as part of its fields a proof of \lstinline[mathescape]| ¬p ∣ Nat.card (CommGroup.torsion $\mathcal{O}^{\times}$)|, which is the subject of Section \ref{sec:torsiondiv}, a proof that the $\mathbb{Z}$-rank of \lstinline[mathescape]|Additive ($\mathcal{O}^{\times}$ /(CommGroup.torsion $\mathcal{O}^{\times}$))| is bounded by $t$, together with the tuple of units $u_i$ and the certifying data and verification statements above.

For a subring $\mathcal{O}$ of a number field $K$ equipped with a multiplication table and such that $\mathcal{O}= \mathcal{O}_K$, we can construct a term of type \lstinline[mathescape]|pMaximalUnitsCertificateNDvdT $\mathcal{O}$ p| by first finding a system of $\mathbb{Z}$-independent units $u_i$ using a CAS, and representing them in Lean as a tuple of coordinates with respect to an integral basis. Statement (i) can be proven by providing the inverse $u_i ^{-1}$ and verifying $u_i u_i ^{-1} = 1$ using the multiplication table.
Statements (ii) and (vi) can be solved by \lstinline|decide|, while (iii) and (v) are handled with the \lstinline|IsOrderOf| and \lstinline|DiscreteLogCertificate| structures, respectively, discussed in Section \ref{sec : disclog}. The statement (iv) is verfied as described in Section \ref{sec : certnorm}. Mathlib already includes the fact that the ring of integers of a number field modulo torsion is a finite and free $\mathbb{Z}$-module of rank equal to the number of infinite places minus one, and transporting this to $\mathcal{O}$ is straightforward.

Given a term of type \lstinline[mathescape]|C : pMaximalUnitsCertificateNDvdT $\mathcal{O}$ p|, a proof that the units $u_i$ are $\mathbb{Z}$-linearly independent in $\mathcal{O}^{\times}/\mu(\mathcal{O})$ is given as \href{https://github.com/alainchmt/CertifyingInvariantsNF/blob/v1/IdealArithmetic/Saturation/PrincipalityCertificate.lean#L344}{
\lstinline|units_linear_independent_of_pMaximalUnitsCertificateNDvdT C|}, and a proof that every unit of $\mathcal{O}$ can be written as a $p$-th power multiplied by a product of the units $u_i$ is given by \href{https://github.com/alainchmt/CertifyingInvariantsNF/blob/v1/IdealArithmetic/Saturation/PrincipalityCertificate.lean#L364}{ \lstinline|units_of_pMaximalUnitsCertificateNDvdT C|}.

\begin{remark}
	The rows of a full-rank $r \times t$ matrix can always be reordered so that the first $t$ form an invertible matrix. Consequently, the ideals \(\mathfrak{q}_i\) may be reordered so that \(M_{[t]}\) is invertible.
\end{remark}

We introduce the parameter $r$ in the certificate above, rather than fixing $r = t$, so that we can have the option to augment the columns of $M$ to form the matrix in \eqref{logmatrix}, obtaining a certificate that the collection $\{([J_i], n_i)\}_{i}$ is $p$-saturated. Indeed, we extend the previous certificate to one for $p$-saturation.

\paragraph{Certificate for $p$-saturation}
Let $(J_1, \ldots, J_m)$ be a tuple of ideals of $\mathcal{O}$ and $(n_1, \ldots, n_m) \in \mathbb{Z}_{>0}^m$. Let $d$ be the number of indices $i$ such that $ p \mid n_i$.
To certify that $\{([J_i], n_i)\}_{i}$ is $p$-saturated, extend the previous certificate for unit group generators modulo $p$-th powers by setting $r = t + d$ and including the following data:
\leavevmode
\begin{multicols}{2}
	\raggedcolumns
	\begin{itemize}
		\item A tuple $(\alpha_1, \ldots, \alpha_m) \in \mathcal{O}^m$;
		\item a tuple $(\psi_1, \ldots, \psi_d) \in \{1,\ldots,m\}^d$;
		\item a tuple $(\nu_1, \ldots, \nu_m) \in \{1, \ldots, d\} ^ m$;
		\item an $r \times d$ matrix $N$ over $\mathbb{F}_p$;
		\item an $r \times r$ matrix $N'$ over $\mathbb{F}_p$. 
	\end{itemize}
\end{multicols}
Recall that $\mathcal{O}/\mathfrak{q}_i \cong \mathbb{Z}/q_i\mathbb{Z}$. For $x \in \mathcal{O}$, let $\bar{x}$ be its image in $\mathbb{Z}/q_i\mathbb{Z}$. Verification of the certificate amounts to additionally checking that 
\begin{multicols}{2}
	\raggedcolumns
	\begin{enumerate}[label=(\roman*)]
		\item $J_i^{n_i} = \langle \alpha_i\rangle$ for every $i$;
		\item for every $i$ with $p \mid n_i$, $\psi_{\nu_i} = i$;
		\item $\overline{\alpha_{\psi_j}} \neq 0$ in $\mathbb{Z}/q_i\mathbb{Z}$ for every $i$ and $j$;
		\item $\alpha_i \neq 0$ for every $i$ with $p \nmid n_i$;
		\item $\log_{\zeta_i, p}(\alpha_{\psi_j}) = N_{ij}$ for every $i$ and $j$;
		\item $(M || N) * N' = \operatorname{Id}_r$.
	\end{enumerate}
\end{multicols}
The notation $(M || N)$ denotes the $r \times r$ matrix resulting from concatenating the matrix $M$, from the previous certificate, with the columns of $N$. By (v), it corresponds to the logarithmic matrix \eqref{logmatrix}. Statement (ii) ensures that $\psi$ indexes all $i$ with $p \mid n_i$. Statement (iii) proves that $\alpha_{\psi_j} \not \in \mathfrak{q}_i$, and thus $\log_{\zeta_i, p}(\alpha_{\psi_j})$ is defined. Additionally, it proves that $\alpha_{\psi_i} \neq 0$. Together with statements (i) and (iv), this guarantees that $J_i^{n_i}$, and hence $J_i$, is nonzero. Statement (vi) asserts that this matrix has full rank. Theorem \ref{theo: log} then implies that $\{([J_i], n_i)\}_{i}$ is $p$-saturated. We \href{https://github.com/alainchmt/CertifyingInvariantsNF/blob/v1/IdealArithmetic/Saturation/PrincipalityCertificate.lean#L562}{ implement} this certification scheme in Lean by extending \lstinline|pMaximalUnitsCertificateNDvdT|. 

\begin{lstlisting}[mathescape]
	structure pSaturatedClassGroupCertificateNDvdT {$\mathcal{O}$ : Type*} [CommRing $\mathcal{O}$] (p : ℕ) {m : ℕ}
	(J : Fin m → Ideal $\mathcal{O}$) (n : Fin m → ℕ) extends pMaximalUnitsCertificateNDvdT O p where ...
\end{lstlisting}

Given a term \lstinline|C : pSaturatedClassGroupCertificateNDvdT p J n|, if $\mathcal{O}$ is an integrally closed domain with $\mathcal{O}^{\times}/\mu(\mathcal{O)}$ free and finite as a $\mathbb{Z}$-module (all required as instance hypothesis), then the term \href{https://github.com/alainchmt/CertifyingInvariantsNF/blob/v1/IdealArithmetic/Saturation/PrincipalityCertificate.lean#L599}{\lstinline|pSaturated_of_CertificateNDvdT C|} is a proof of $p$-saturation of the collection $\{([J_i], n_i)\}_{i}$ using the formulation in \eqref{eq : psat_ideal}:
\begin{lstlisting}[mathescape]
	∀ a : Fin m → ℕ, ∀ b : $\mathcal{O}$, ∏ i, (J i) ^ (a i) = Ideal.span {b} → (∀ i, n i ∣ p * (a i)) → ∀ i, n i ∣ a i
\end{lstlisting}

For a subring $\mathcal{O}$ of a number field $K$ equipped with a multiplication table and such that $\mathcal{O}= \mathcal{O}_K$, we can prove statement (i) in the previous certification scheme with ideal multiplication described in Section \ref{sec : idealmul}. The statements (ii) and (vi) are proven with \lstinline|decide|. Statements (iii) and (v) are solved using the \lstinline|DiscreteLogCertificate|, which stores the value of $\overline{\alpha_{\psi_i}}$ in $\mathbb{Z}/q_j\mathbb{Z}$. Finally, (iv) is solved using the coordinate representation of the $\alpha_i$.

\begin{remark}
	Let $K$ be a number field of signature $(r_1, r_2)$ and $\{u_1, \ldots, u_{t}\}$, with $t = r_1 + r_2 - 1$, a collection of units whose image generate $U_K$. Suppose $\operatorname{Cl}(\mathcal{O}_K) = \langle [J_1] \rangle \times \ldots \times \langle [J_k] \rangle $ and that the order of $[J_i]$ in $\operatorname{Cl}(\mathcal{O}_K)$ is equal to $n_i$.
In Section \ref{sec:existancesat}, we see that the certificate described above always exists. By employing a CAS to search among the prime ideals of inertia degree one lying above primes $q$ satisfying $p \mid q - 1$, we efficiently find these certificates in practice .
\end{remark}

In the previous discussion, we focused on the case where $p \nmid \omega(K)$, so that every torsion unit is a $p$-th power. However, the certificates can be readily adapted to the case where $p \mid \omega(K)$. The main difference is that the matrix \eqref{logmatrixunits} is constructed with $t = r_1 + r_2$, and the last unit $u_t$ is taken to be a  generator $v$ of the cyclic group $\mu(\mathcal{O}_K)/\mu(\mathcal{O}_K)^p$. More generally, for a commutative ring $\mathcal{O}$ and prime $p$, the certifying structure \href{https://github.com/alainchmt/CertifyingInvariantsNF/blob/v1/IdealArithmetic/Saturation/PrincipalityCertificate.lean#L388}{ \lstinline[mathescape]|pMaximalUnitsCertificateDvdT $\mathcal{O}$ p|} incorporates the data of an element $v$ of $\mu(\mathcal{O})$ and a proof of $v^m = 1$ for an $m > 0$, and \href{https://github.com/alainchmt/CertifyingInvariantsNF/blob/v1/IdealArithmetic/Saturation/PrincipalityCertificate.lean#L639}{ \lstinline|pSaturatedClassGroupCertificateDvdT|} extends this to a certificate of $p$-saturation. The lemma \href{https://github.com/alainchmt/CertifyingInvariantsNF/blob/v1/IdealArithmetic/Saturation/PrincipalityCertificate.lean#L691}{ \lstinline|pSaturated_of_CertificateDvdT|} then proves $p$-saturation from such a certificate, where additional instance hypothesis such as \lstinline[mathescape]|[IsCyclic (CommGroup.torsion $\mathcal{O}^{\times}$)]|, asserting that $\mu(\mathcal{O})$ is cyclic, are required.

\subsection{The torsion divisibility condition}\label{sec:torsiondiv}
The structure \lstinline | pMaximalUnitsCertificateNDvdT| includes as part of its fields a proof of \lstinline[mathescape]| ¬p ∣ Nat.card (CommGroup.torsion $\mathcal{O}^{\times}$)|. We describe how we verify this.

Firstly, consider the case where $\mathcal{O}$ is the ring of integers of a number field $K$ of odd degree. Any defining polynomial of $K$ has a real root, and thus $K$ admits a real embedding $\sigma : K \to \mathbb{R}$. Any torsion unit in $K$ is injectively mapped to a torsion unit in $\mathbb{R}$. Consequently, $\# \mu(\mathcal{O}_K) = 2$. 
Within this context and when $p \neq 2$, we can immediately prove \lstinline[mathescape]| ¬p ∣ Nat.card (CommGroup.torsion $\mathcal{O}^{\times}$)|.

When $p = 2$ or when the degree of the number field is even, the above argument does not apply. We certify the divisibility condition using an ideal witness.
Let $\mathcal{O}$ be an integral domain with $\mu(\mathcal{O})$ finite. Suppose that $I$ is an ideal of $\mathcal{O}$ such that $\#\mathcal{O}/I = q$, with $q$ a prime number, and $p \neq q$. If $p \mid \# \mu(\mathcal{O})$, then there is an element $x \in \mathcal{O}^{\times}$ of order $p$. Since $q \neq p$, it can be shown that the multiplicative order of $\bar{x}$ in $\mathcal{O}/I$ remains equal to $p$. Therefore, $p \mid q -1$ as $\# (\mathcal{O}/I)^{\times} = q -1$.
This gives a certification strategy for $p \nmid \#\mu(\mathcal{O})$: provide an ideal $I$ of $\mathcal{O}$ and a prime number $q$, verify that $\#\mathcal{O}/I = q$ and that $p \neq q$ and $p \nmid q -1$. We formalize this as \href{https://github.com/alainchmt/CertifyingInvariantsNF/blob/v1/IdealArithmetic/Saturation/CertifyTorsionOrder.lean#L169}{ \lstinline|prime_not_dvd_torsion_of_not_dvd|} (together with a slightly more general variant), and apply it to the case where $\mathcal{O}$ is the ring of integers of a number field.

It can be shown (see Remark \ref{torsionexists}) that whenever $p \nmid \omega(K)$, an ideal $I$ of $\mathcal{O}_K$ with the required properties always exists. 

\section{Existence of the saturation certificate}\label{sec:existancesat}
Let $K$ be a number field of signature $(r_1, r_2)$, and let $u_1, \ldots, u_{t}$ be units whose image forms an $\mathbb{F}_p$-basis of $\mathcal{O}_K^{\times}/(\mathcal{O}_K^{\times}) ^ p$ (for instance, take a $\mathbb{Z}$-basis of $U_K = \mathcal{O}_K^{\times}/\mu (\mathcal{O}_K)$ with the inclusion of a generator of $\mu(\mathcal{O}_K)$ in case $p \mid \omega(K)$ ). Suppose $\operatorname{Cl}(\mathcal{O}_K) = \langle [J_1] \rangle \times \ldots \times \langle [J_k] \rangle $ and that the order of $[J_i]$ in $\operatorname{Cl}(\mathcal{O}_K)$ is equal to $n_i$. In particular $J_i ^ {n_i} = \langle \alpha_i \rangle $ for some $\alpha_i \in \mathcal{O}_K$ . In this section, we sketch a proof that, under the above assumptions, a collection of prime ideals $\mathfrak{q}_i$ satisfying the conditions of Theorem \ref{theo: log} making the matrix \eqref{logmatrix} full-rank always exists. Moreover, the $\mathfrak{q}_i$ may be chosen of inertia degree one. This shows that the certificates of Section \ref{sec:certunits} always exist.
The existence of such a collection was already established in \cite[Theorem 3.6.3]{hess} using a formulation in terms of $S$-units. Here, we give an alternative proof (both relying fundamentally on a theorem by Frobenius) that fits better with our presentation and offers explicit density estimates, and include a heuristic argument that estimates the complexity of finding such a collection. We did not formalize this result in Lean, but rather offer a meta-level guarantee that a formal certificate always exists. We begin with an observation.

\begin{lemma}\label{lemma:powers}
With $\alpha_i$ and $u_j$ as above. Suppose that $\prod_{i, p \mid n_i} \alpha_i ^ {m_i} \prod_{j} u_j^{t_j}$ is a $p$-th power in $\mathcal{O}_K$ for some $m_i \in \mathbb{N}$ and $t_j\in \mathbb{N}$. Then $p \mid m_i$ for all $i$ with $p \mid n_i$, and $p \mid t_j$ for all $j$.
\end{lemma}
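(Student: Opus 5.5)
Suppose $\prod_{i,\,p\mid n_i}\alpha_i^{m_i}\prod_j u_j^{t_j}=x^p$ with $x\in\mathcal{O}_K$. The plan is to pass to ideals to get the divisibility of the $m_i$, and then use the $\mathbb{F}_p$-independence of the $\bar u_j$ to get the divisibility of the $t_j$.

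First, the $m_i$. Taking principal ideals, the units disappear and we get $\prod_{i,\,p\mid n_i} J_i^{n_i m_i} = \langle x\rangle^p$. For each such $i$ write $n_i = p n_i'$. Since $\mathcal{O}_K$ is a Dedekind domain, nonzero ideals cancel, so $\bigl(\prod_i J_i^{n_i' m_i}\bigr)^p = \langle x\rangle^p$ gives $\prod_i J_i^{n_i' m_i} = \langle x\rangle$. (Alternatively, compare prime factorizations: exponents agree after multiplying by $p$, hence agree.) Passing to $\operatorname{Cl}(\mathcal{O}_K)$ gives $\prod_i [J_i]^{n_i' m_i}=1$. By the direct product decomposition $\operatorname{Cl}(\mathcal{O}_K)=\langle [J_1]\rangle\times\cdots\times\langle [J_k]\rangle$, each factor is trivial, so $[J_i]^{n_i'm_i}=1$. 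Since $[J_i]$ has order $n_i$, this means $n_i \mid n_i' m_i$, i.e. $p n_i'\mid n_i' m_i$, hence $p\mid m_i$.

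Second, the $t_j$. Write $m_i = p m_i'$, so $\prod \alpha_i^{m_i} = y^p$ with $y=\prod\alpha_i^{m_i'}\in\mathcal{O}_K$, nonzero because the $J_i$ are nonzero. Then $\prod_j u_j^{t_j} = (x/y)^p$. The element $x/y$ lies in $K$, and its $p$-th power is a unit of $\mathcal{O}_K$, so it is integral. Since $\mathcal{O}_K$ is integrally closed, $z:=x/y\in\mathcal{O}_K$. Because $z^p$ is a unit, $z$ is a unit as well. Hence $\prod_j u_j^{t_j}\in(\mathcal{O}_K^\times)^p$. This says $\sum_j t_j \bar u_j = 0$ in the $\mathbb{F}_p$-vector space $\mathcal{O}_K^\times/(\mathcal{O}_K^\times)^p$, and since the $\bar u_j$ form a basis, $p\mid t_j$ for all $j$.

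The only delicate step is the first one: extracting $p$-th roots of ideals in a Dedekind domain, and then using that $\operatorname{Cl}(\mathcal{O}_K)$ is an internal direct product of the cyclic groups $\langle [J_i]\rangle$, so a relation among the $[J_i]$ forces each component to vanish. The hypothesis that the $[J_i]$ have exact orders $n_i$ is used only there. The rest is the same integral-closure argument as in the proof of Theorem~\ref{theo: log}.
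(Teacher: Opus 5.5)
Your proof is correct and follows the same route as the paper's: you pass to ideals, extract the $p$-th root in the Dedekind domain, use the direct-product structure of $\operatorname{Cl}(\mathcal{O}_K)$ to get $p \mid m_i$, and then deduce $p \mid t_j$ from the $\mathbb{F}_p$-independence of the $\bar u_j$. You also spell out two steps the paper leaves implicit. First, integral closedness shows that $x/y$ lies in $\mathcal{O}_K$ and is in fact a unit. Second, the ideal root extraction follows from comparing prime factorizations; that, rather than mere cancellation of nonzero ideals, is the correct justification.
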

\begin{proof}
Suppose $\prod_{i, p \mid n_i} \alpha_i ^ {m_i} \prod_{j} u_j^{t_j} = y^p$ for some $y \in \mathcal{O}_K$. Then $\prod_{i, p \mid n_i} J_i ^ {n_im_i} = \langle y \rangle ^ p$. For $p \mid n_i$, write $n_i = p s_i$, and thus $(\prod_{i, p \mid n_i} J_i ^ {s_im_i}) ^ p = \langle y \rangle ^ p$. It follows that $\prod_{i, p \mid n_i} J_i ^ {s_im_i} = \langle y \rangle $, which is principal. Consequently, from the assumed class group structure, we have $n_i \mid s_im_i$ for all $i$ with $p \mid n_i$. Hence, $p \mid m_i$ for all $i$. This implies that $\prod_j u_j ^ {t_j} = x^p$ for some $x \in \mathcal{O}_K$. Hence, $\prod_j \bar{u_j} ^ {t_j} = 1$ in $\mathcal{O}_K^{\times}/(\mathcal{O}_K^{\times}) ^ p$. It follows, from $\bar{u_1}, \ldots, \bar{u_{t}}$ being an $\mathbb{F}_p$-basis of $\mathcal{O}_K^{\times}/(\mathcal{O}_K^{\times}) ^ p$, that $p \mid t_j$ for every $j$.
\end{proof}

For a set of prime ideals $S$ of $\mathcal{O}_K$, the \emph{natural density} of $S$ is defined as
\begin{align*}
d(S) = \lim_{x \to \infty} \frac{\#\{ \mathfrak{q} \mid N(\mathfrak{q}) \leq x \text{ and } \mathfrak{q} \in S\}}{\#\{\mathfrak{q} \mid N(\mathfrak{q}) \leq x \}},
\end{align*}
if this limit exists. Strictly speaking, this is not a probability measure. It fails, for instance, to be $\sigma$-additive. However, intuitively, it might be thought of as the probability that a randomly selected prime ideal belongs to $S$. If a set has positive natural density, then it is necessarily infinite. A tool used to calculate the density of sets of prime ideals of certain shape is the Frobenius density theorem \cite{frob}. We refer to Appendix \ref{app:A} for a full statement of this theorem. We remark that one can also consider the \emph{analytic} (Dirichlet) density of a set of prime ideals, which equals the natural density if the latter exists. However, in this text we only use the natural density. 

\begin{lemma}\label{lemma:inerden}
	The natural density of the set of prime ideals of $\mathcal{O}_K$ of inertia degree one is equal to $1$.
\end{lemma}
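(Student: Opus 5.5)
We show that the prime ideals of inertia degree at least two are negligible: they have natural density $0$. Complementation then gives density $1$ for the primes of inertia degree one. Write $\pi_K(x) = \#\{\mathfrak{q} \mid N(\mathfrak{q}) \leq x\}$ for the denominator in the definition of natural density. By the prime ideal theorem (Landau), $\pi_K(x) \sim x/\log x$. This is the one analytic input we need, and we take it as known; alternatively it follows from the Frobenius density theorem in Appendix \ref{app:A}, applied to the trivial extension.

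The counting step bounds the numerator, the number of prime ideals $\mathfrak{q}$ with $f_{\mathfrak{q}} \geq 2$ and $N(\mathfrak{q}) \leq x$. Each such $\mathfrak{q}$ lies above a unique rational prime $p$, and $N(\mathfrak{q}) = p^{f_\mathfrak{q}} \geq p^2$, so $p \leq \sqrt{x}$. Above each $p$ there are at most $n = [K:\mathbb{Q}]$ prime ideals, since $\sum_{\mathfrak{q} \mid p} e_\mathfrak{q} f_\mathfrak{q} = n$ follows from $\#\mathcal{O}_K/\langle p\rangle = p^n$ and the factorization of $\langle p \rangle$ recalled in Section \ref{sec : prelim}. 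Hence the count is at most $n \cdot \pi(\sqrt{x}) \leq n\sqrt{x}$.

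Dividing by $\pi_K(x) \sim x/\log x$ gives a ratio of order $n\log x/\sqrt{x}$, which tends to $0$. So the set of degree-$\geq 2$ primes has natural density $0$, and its complement has density $1$. No real obstacle arises beyond quoting the prime ideal theorem correctly. In fact we do not need the full asymptotic: the lower bound $\pi_K(x) \geq \pi(x)/n$, up to finitely many exceptions, already suffices. Here is why. Each rational prime $p \le x$ has at least one prime ideal above it, of norm $p^{f}$. If that norm is $\le x$ it is counted; if not, then $f \ge 2$, and such $p$ number at most $\sqrt{x}$. Combining this with Chebyshev's bound $\pi(x) \gg x/\log x$ makes the argument nearly elementary. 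That bookkeeping is the only step that needs care.
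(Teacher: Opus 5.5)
Your main argument is correct and is essentially the paper's proof. The paper only cites the Prime Ideal Theorem for counting prime ideals of $\mathcal{O}_K$ against rational primes, and your estimate is exactly that: the prime ideals of inertia degree $\geq 2$ and norm $\leq x$ number at most $n\,\pi(\sqrt{x})\leq n\sqrt{x}$, and you divide this by $\pi_K(x)\sim x/\log x$.

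Both of your side remarks, however, are wrong and should be removed.

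First, the Frobenius density theorem as stated in Appendix~\ref{app:A} is a statement about ratios whose denominator is $\pi_K(x)$ itself. Applied to a trivial extension, it only says that the set of all prime ideals has density $1$, and it tells you nothing about the size of $\pi_K(x)$. A form of Chebotarev with an explicit $\operatorname{Li}(x)$ main term does reduce to the prime ideal theorem for the trivial extension, but then it is the same input, not an alternative.

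Second, in your ``nearly elementary'' variant, the step ``if not, then $f\geq 2$, and such $p$ number at most $\sqrt{x}$'' reverses the inequality you used in the main count. The conditions $p^{f}>x$ and $p\leq x$ put no upper bound $p\leq\sqrt{x}$ on $p$. Rational primes $p\in(\sqrt{x},x]$ all of whose prime ideals have inertia degree $\geq 2$ contribute nothing to $\pi_K(x)$. These primes can form a positive proportion of all primes: the inert primes in a quadratic field are half of all primes, and when $K/\mathbb{Q}$ is Galois of degree $n$ they are a proportion $1-1/n$.

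So $\pi_K(x)\geq \pi(x)/n$ does not follow from Chebyshev's bound plus your bookkeeping. The inequality is true for large $x$, since $\pi_K(x)\sim\pi(x)$, but your argument does not prove it. To get it you need an independent lower bound for $\pi_K(x)$, such as the prime ideal theorem or a Chebyshev-type estimate proved for $K$ itself. That is precisely the input you were trying to avoid.
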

\begin{proof}
This is a standard fact that follows from estimations using the Prime Ideal Theorem \cite[Chapter XV, Theorem 4]{lang3} to count prime ideals of $\mathcal{O}_K$ and primes in $\mathbb{Z}$. 
\end{proof}

 Let $p$ be a prime number and $\zeta_p$ a primitive $p$-th root of unity in an algebraic closure of $K$. Set $\delta_p = [K(\zeta_p) : K]$. It can be shown that $\delta_p$ divides $p-1$. 
 
\begin{lemma}\label{lemma:dens1}
Let $\alpha \in \mathcal{O}_K \setminus  (\mathcal{O}_K) ^ p$. Then the natural density of prime ideals of $\mathcal{O}_K$ for which $\bar{\alpha} \in \mathcal{O}_K/\mathfrak{q}$ is a $p$-th power is equal to $1-(p-1)/(p\delta_p)$.
\end{lemma}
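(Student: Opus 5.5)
We will reduce the statement to a Chebotarev/Frobenius density computation in the Kummer-type extension $L = K(\zeta_p, \alpha^{1/p})$. By Lemma \ref{lemma:inerden}, it suffices to compute the density over prime ideals $\mathfrak{q}$ of inertia degree one. We also discard the finitely many primes that divide $p\alpha$ or ramify in $L$; this changes neither the density nor the answer. For such $\mathfrak{q}$, with $N(\mathfrak{q}) = q$, the residue field is $\mathbb{F}_q$ and $\bar\alpha \neq 0$. We then split into two cases.

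In the first case $p \nmid q-1$. Then $x \mapsto x^p$ is a bijection of $\mathbb{F}_q^\times$, so $\bar\alpha$ is automatically a $p$-th power. In the second case $p \mid q-1$. Here $\bar\alpha$ is a $p$-th power if and only if $X^p - \alpha$ has a root modulo $\mathfrak{q}$. Since $\mathbb{F}_q$ contains the $p$-th roots of unity, this happens if and only if $X^p-\alpha$ splits completely modulo $\mathfrak{q}$. By Dedekind--Kummer-type reasoning (valid away from the excluded primes), this is equivalent to $\mathfrak{q}$ splitting completely in $L$. Moreover, $p \mid q-1$ is equivalent to $\mathfrak{q}$ splitting completely in $K(\zeta_p)$, again because $\mathfrak{q}$ has degree one.

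Next we compute degrees. Since $\alpha \notin (\mathcal{O}_K)^p$ and $\mathcal{O}_K$ is integrally closed, $\alpha$ is not a $p$-th power in $K$. Kummer theory over $K(\zeta_p)$ then gives $[L : K(\zeta_p)] = p$, provided $\alpha$ is not a $p$-th power in $K(\zeta_p)$. This holds because $[K(\zeta_p):K] = \delta_p$ is coprime to $p$: if $\alpha = \beta^p$ with $\beta \in K(\zeta_p)$, taking norms gives $\alpha^{\delta_p} = N(\beta)^p$, and since $\gcd(\delta_p, p) = 1$ this makes $\alpha$ a $p$-th power in $K$. Hence $[L:K] = p\delta_p$.

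Finally we apply the Frobenius (or Chebotarev) density theorem from Appendix \ref{app:A}. Primes splitting completely in a Galois extension $M/K$ have density $1/[M:K]$. So the density of primes with $p \mid q-1$ is $1/\delta_p$, and among them the density of those splitting completely in $L$ is $1/(p\delta_p)$. The set where $\bar\alpha$ is a $p$-th power therefore has density
\[
\Bigl(1 - \frac{1}{\delta_p}\Bigr) + \frac{1}{p\delta_p} = 1 - \frac{p-1}{p\delta_p}.
\]
We expect the main obstacle to be justifying natural rather than analytic density. This needs the natural-density form of Chebotarev, which we take from the appendix statement. The remaining work is the careful bookkeeping showing that the finitely many excluded primes and the higher-degree primes do not matter.
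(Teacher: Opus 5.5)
Your proof is correct, but it is organized differently from the paper's.

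\textbf{The paper's route.} It applies the Frobenius density theorem once, directly to $X^p-\alpha$. Since $\bar\alpha$ is a $p$-th power exactly when $X^p-\bar\alpha$ has a linear factor, the density equals the proportion of elements of $\operatorname{Gal}(L/K)$ fixing at least one root. The paper asserts this proportion is $(p(\delta_p-1)+1)/(p\delta_p)$, with $[L:K]=p\delta_p$ coming from irreducibility of $X^p-\alpha$ and $\gcd(p,\delta_p)=1$.

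\textbf{Your route.} You split according to whether $p\mid N(\mathfrak{q})-1$. The first case follows from the elementary fact that $x\mapsto x^p$ is bijective on the residue field. In the second case, the presence of $\mu_p$ in the residue field upgrades ``has a root'' to ``splits completely''. So you only need the trivial-class special case of Frobenius/Chebotarev (completely split primes have density $1/[M:K]$), applied to $K(\zeta_p)$ and to $L$.

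\textbf{How they match.} The two computations agree term by term. Identify $\operatorname{Gal}(L/K)$ with affine maps $i\mapsto ai+b$ on the roots $\zeta_p^i\alpha^{1/p}$. The $p(\delta_p-1)$ elements with $a\neq 1$ each have exactly one fixed point, and they are the Frobenius classes of primes with $p\nmid N(\mathfrak{q})-1$; this is your $1-1/\delta_p$. Among the translations ($a=1$), only the identity has a fixed point; this is your $1/(p\delta_p)$.

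\textbf{What each buys.} Your version makes explicit the group-theoretic count that the paper leaves as ``one checks''. Your norm argument also gives a self-contained proof that $[L:K]=p\delta_p$. The paper's version is shorter once the full theorem from the appendix is available.

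\textbf{Minor points.} Restricting to inertia-degree-one primes via Lemma \ref{lemma:inerden} is harmless but unnecessary: both cases work verbatim with $\mathbb{F}_{N(\mathfrak{q})}$ in place of $\mathbb{F}_q$. The step ``$X^p-\alpha$ splits completely modulo $\mathfrak{q}$ iff $\mathfrak{q}$ splits completely in $L$'' is most cleanly justified by letting Frobenius act on the roots of $X^p-\alpha$, which generate $L$. This is valid for all $\mathfrak{q}\nmid p\alpha$.
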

\begin{proof}
Apply the Frobenius density theorem to the polynomial $X^p - \alpha$. See Appendix \ref{app:A}. 
\end{proof}

\begin{lemma}\label{lemma:rootun}
The natural density of the set  $ \{\mathfrak{q} \mid p | N(\mathfrak{q}) - 1\}$ of prime ideals of $\mathcal{O}_K$ equals $1/\delta_p$.
\end{lemma}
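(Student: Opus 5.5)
The plan is to reduce the condition $p \mid N(\mathfrak{q}) - 1$ to a splitting condition in the extension $K(\zeta_p)/K$, and then apply the Frobenius density theorem (Appendix~\ref{app:A}) to the polynomial $\Phi_p$ (or simply $X^p - 1$) over $K$, in the same spirit as Lemma~\ref{lemma:dens1}.

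\textbf{Step 1 (unramified primes).} Discard the finitely many prime ideals $\mathfrak{q}$ lying above $p$ or ramified in $K(\zeta_p)$; a finite set has natural density zero. For the remaining $\mathfrak{q}$, the residue field $\mathcal{O}_K/\mathfrak{q}$ has characteristic different from $p$, and $X^p-1$ is separable modulo $\mathfrak{q}$.

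\textbf{Step 2 (reformulation).} The multiplicative group $(\mathcal{O}_K/\mathfrak{q})^\times$ is cyclic of order $N(\mathfrak{q})-1$. It therefore contains an element of order $p$ if and only if $p \mid N(\mathfrak{q})-1$. This holds exactly when $X^p - 1$ splits into distinct linear factors over $\mathcal{O}_K/\mathfrak{q}$, equivalently when $\Phi_p$ has a root there. By Kummer--Dedekind (or Hensel lifting), applied to the extension $K(\zeta_p)/K$ at unramified $\mathfrak{q}$, this is equivalent to $\mathfrak{q}$ splitting completely in $K(\zeta_p)$. Indeed, that extension is Galois (abelian, with group embedding into $(\mathbb{Z}/p\mathbb{Z})^\times$), so any one degree-one prime above $\mathfrak{q}$ forces all of them to have degree one.

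\textbf{Step 3 (density).} By the Frobenius/Chebotarev density statement, the set of $\mathfrak{q}$ whose Frobenius is the identity in $\mathrm{Gal}(K(\zeta_p)/K)$ has natural density $1/[K(\zeta_p):K] = 1/\delta_p$. The Frobenius density theorem as stated in the appendix counts the primes for which the polynomial factors with a given cycle type. For the Galois group $G$ of $\Phi_p$ over $K$, acting regularly on its $\delta_p$ roots in each orbit, only the identity element fixes all roots. So the proportion of group elements with cycle type $(1,\dots,1)$ is $1/\#G = 1/\delta_p$.

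The main obstacle is exactly this bookkeeping: making precise that complete splitting of $\Phi_p$ modulo $\mathfrak{q}$ corresponds to the identity cycle type, and that the relevant Galois group has order $\delta_p$. Natural (rather than Dirichlet) density is available here because Chebotarev holds with natural density. Steps 1 and 2 are routine.
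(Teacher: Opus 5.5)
Your proposal is correct and takes essentially the paper's route: for $\mathfrak{q}$ not above $p$, you rewrite $p \mid N(\mathfrak{q})-1$ as $\Phi_p$ having a root modulo $\mathfrak{q}$. You then apply the Frobenius density theorem to $\Phi_p$, whose Galois group $\operatorname{Gal}(K(\zeta_p)/K)$ has order $\delta_p$.

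The only difference is cosmetic. The paper notes that the identity is the only element fixing \emph{some} root. You instead note that one root forces complete splitting, and then count the all-ones cycle type. Your detour through Kummer--Dedekind and complete splitting in $K(\zeta_p)$ is harmless but not needed for this argument.
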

\begin{proof}
Apply the Frobenius density theorem to the $p$-th cyclotomic polynomial. See Appendix \ref{app:A}.
\end{proof}

\begin{remark}\label{torsionexists}
The previous lemma together with Lemma \ref{lemma:inerden} shows that if $p \nmid \omega(K)$, then the certificate of the torsion divisibility condition from Section \ref{sec:torsiondiv} always exists.
\end{remark}

\begin{lemma}\label{lemma:pth}
Let $\alpha \in \mathcal{O}_K$. Suppose that $\bar{\alpha} \in \mathcal{O}_K/\mathfrak{q}$ is a $p$-th power for all but a set of prime ideals of density $0$. Then $\alpha$ is a $p$-th power in $\mathcal{O}_K$.
\end{lemma}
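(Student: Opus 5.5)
Lemma \ref{lemma:pth} follows by contraposition from Lemma \ref{lemma:dens1}. Suppose $\alpha$ is not a $p$-th power in $\mathcal{O}_K$. We will show that the set of prime ideals $\mathfrak{q}$ for which $\bar{\alpha}$ is \emph{not} a $p$-th power in $\mathcal{O}_K/\mathfrak{q}$ has positive natural density. This contradicts the hypothesis that this set has density $0$.

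First we dispose of the degenerate case $\alpha = 0$, since $0 = 0^p$ is a $p$-th power. So assume $\alpha \neq 0$. Then $\alpha \in \mathcal{O}_K \setminus (\mathcal{O}_K)^p$, and Lemma \ref{lemma:dens1} applies. It says that the set $S$ of primes $\mathfrak{q}$ with $\bar{\alpha}$ a $p$-th power in $\mathcal{O}_K/\mathfrak{q}$ has natural density $1-(p-1)/(p\delta_p)$.

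Next we handle the complement. The set of all prime ideals has density $1$, and both densities exist as limits, so the complement $S^c$ has natural density
\begin{align*}
d(S^c) = 1 - d(S) = \frac{p-1}{p\delta_p}.
\end{align*}
Since $p \geq 2$ and $\delta_p \geq 1$ (indeed $\delta_p \mid p-1$), we get $d(S^c) > 0$.

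The hypothesis says that $S^c$ is contained in a set of density $0$. If that set has density $0$, the counting functions give an upper density of $S^c$ equal to $0$, which contradicts $d(S^c)>0$. Hence $\alpha$ is a $p$-th power. One should check that $\alpha$ being a $p$-th power in $K$ is equivalent to being one in $\mathcal{O}_K$, by integral closedness. This does not matter here, since the statement of Lemma \ref{lemma:dens1} is already phrased in $\mathcal{O}_K$.

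The main obstacle is that all the real content sits in Lemma \ref{lemma:dens1}, that is, in the Frobenius density computation for $X^p-\alpha$. That computation needs $X^p - \alpha$ to be irreducible over $K$ when $\alpha$ is not a $p$-th power in $K$. This holds for prime exponent $p$ by the classical criterion: $X^p-a$ is irreducible if and only if $a$ is not a $p$-th power. Given the earlier lemma, the present statement only requires the routine check that densities of complementary sets add to $1$. A set of density $0$ cannot contain a set of positive density.
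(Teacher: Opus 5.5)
Your argument is correct and is the same as the paper's: the paper also deduces the lemma directly from Lemma~\ref{lemma:dens1}, observing that the set of $\mathfrak{q}$ with $\bar{\alpha}$ a $p$-th power has density $1$, which is incompatible with the value $1-(p-1)/(p\delta_p)<1$ forced when $\alpha\notin(\mathcal{O}_K)^p$. Your passage to the complement and the positivity check $(p-1)/(p\delta_p)\geq 1/p>0$ just make explicit what the paper leaves implicit.
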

\begin{proof}
Since the set of prime ideals $\mathfrak{q}$ of $\mathcal{O}_K$ such that $\bar{\alpha}$ is a $p$-th power is a set of density $1$, from Lemma \ref{lemma:dens1} we have that $\alpha \in (\mathcal{O}_K)^p$.
\end{proof}

\begin{theorem}\label{theo:exists}
Let $x_1, \ldots, x_n$ be a collection of elements in $\mathcal{O}_K$ with the property that if $\prod_{j} x_j ^ {m_j}$ is a $p$-th power in $\mathcal{O}_K$, with $m_j \in \mathbb{N}$, then $p \mid m_j$ for all $j$. Let $S$ be a set of prime ideals of $\mathcal{O}_K$ of natural density equal to $1$. Then, with $r \geq n$,  there exists a collection of prime ideals $\mathfrak{q}_1, \ldots, \mathfrak{q}_r$ in $S$ that do not contain any $x_j$ and with $p \mid N(\mathfrak{q}_i) - 1$ for all $i$, such that for any choice of primitive roots $\zeta_i \in (\mathcal{O}_K/\mathfrak{q}_i)^{\times}$, the logarithmic matrix $M = (\log_{\zeta_i, p} (x_j))_{i,j} \in \mathbb{F}_p ^ {r \times n}$ has full rank.
\end{theorem}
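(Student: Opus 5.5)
The strategy is a greedy construction: add prime ideals one at a time so that the rank of the logarithmic matrix grows by one at each step, until it reaches $n$. After that, pad with further admissible primes; extra rows cannot lower the rank. First we fix the admissible set
\[
T = \{\mathfrak{q} \in S \mid p \mid N(\mathfrak{q})-1,\ x_j \notin \mathfrak{q} \text{ for all } j\}.
\]
The hypothesis with all $m_j=0$ except one shows no $x_j$ is a $p$-th power, so in particular $x_j \neq 0$. Each nonzero $x_j$ lies in only finitely many prime ideals. By Lemma \ref{lemma:rootun} and $d(S)=1$, $T$ therefore has density $1/\delta_p>0$; in particular it is infinite. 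We also note that for $\mathfrak{q}\in T$ and $x\notin\mathfrak{q}$, we have $\log_{\zeta,p}(x)=0$ if and only if $\bar x$ is a $p$-th power in $(\mathcal{O}_K/\mathfrak{q})^\times$. This holds because that group is cyclic of order divisible by $p$. So whether a row vanishes on a given vector does not depend on the choice of primitive root $\zeta$; only a nonzero rescaling does.

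The inductive step is as follows. Suppose $\mathfrak{q}_1,\dots,\mathfrak{q}_s\in T$ have been chosen with the $s\times n$ log matrix of rank $s<n$. Pick a nonzero vector $v=(\bar m_1,\dots,\bar m_n)\in\mathbb{F}_p^n$ orthogonal to all current rows. This is possible since the row space has dimension $<n$. Lift $v$ to representatives $m_j\in\{0,\dots,p-1\}$, not all divisible by $p$, and set $\alpha=\prod_j x_j^{m_j}$. By hypothesis $\alpha$ is not a $p$-th power in $\mathcal{O}_K$. For any $\mathfrak{q}\in T$, additivity of $\log$ gives $\sum_j m_j\log_{\zeta,p}(x_j)=\log_{\zeta,p}(\alpha)$. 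Hence the new row for $\mathfrak{q}$ is not orthogonal to $v$, and so lies outside the current row space (raising the rank to $s+1$), exactly when $\bar\alpha$ is not a $p$-th power in $\mathcal{O}_K/\mathfrak{q}$. This equivalence holds for any primitive root, which handles the ``any choice of $\zeta_i$'' clause, since rank is unaffected by scaling rows.

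The main point, and the only step needing real input, is to show that such a $\mathfrak{q}$ exists in $T$. I would use a density count. By Lemma \ref{lemma:dens1}, the set $A$ of primes where $\bar\alpha$ is a $p$-th power has density $1-(p-1)/(p\delta_p)$. Then $A^c$ has density $(p-1)/(p\delta_p)$. Since $d(S)=1$ and finite sets have density zero, $A^c\cap S$ minus the finitely many primes containing some $x_j$ still has density $(p-1)/(p\delta_p)$.

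We must also intersect with the condition $p\mid N(\mathfrak{q})-1$. The cleanest route is to observe that if $p\nmid N(\mathfrak{q})-1$, then $x\mapsto x^p$ is a bijection on $(\mathcal{O}_K/\mathfrak{q})^\times$. So every unit residue is a $p$-th power, which means $A^c\setminus\{\mathfrak{q}\ni\alpha\}$ is automatically contained in $\{\mathfrak{q}: p\mid N(\mathfrak{q})-1\}$. Hence $T\cap A^c$ has density $(p-1)/(p\delta_p)>0$ and is infinite; in particular it is nonempty. (If one prefers to avoid this observation, one can instead apply the Frobenius density theorem to $X^p-\alpha$ and the cyclotomic polynomial jointly, as in Appendix \ref{app:A}.) Iterating $n$ times gives rank $n$, and adding arbitrary further elements of $T$ reaches any $r\ge n$, because $T$ is infinite and we may choose distinct primes.
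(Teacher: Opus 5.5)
Your proof is correct, but it takes a different route from the paper's.

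The paper argues non-constructively. It enumerates the admissible set $\{\mathfrak{q}\in S \mid p\mid N(\mathfrak{q})-1,\ x_j\notin\mathfrak{q}\}$ in an arbitrary order. The kernels of the successive truncations $M_1, M_2,\ldots$ form a descending chain of subspaces of $\mathbb{F}_p^n$, which must stabilize. The paper then shows the stable kernel is zero. For any $v$ in it, $\prod_j x_j^{v_j}$ is a $p$-th power modulo every prime of $S$ outside a finite set, using the same observation you make about primes with $p\nmid N(\mathfrak{q})-1$. Lemma~\ref{lemma:pth} then makes it a global $p$-th power, and the hypothesis forces $v=0$.

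You instead build the matrix greedily, one rank-increasing row at a time. At each step you apply Lemma~\ref{lemma:dens1} directly, rather than its density-one corollary Lemma~\ref{lemma:pth}. This produces a prime at which the product $\alpha$ attached to the current kernel vector is not a residual $p$-th power. Both arguments rest on the same Frobenius-density input.

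Your version gives somewhat more:
\begin{itemize}
\item It shows that exactly $n$ primes suffice, giving a square invertible logarithmic matrix. This is precisely the shape used in the certificates of Section~\ref{sec:certunits}, where $r=t+d$ and an inverse is exhibited.
\item It covers every $r\ge n$ by padding.
\item At each step it exhibits a set of good primes of density $(p-1)/(p\delta_p)$, that is, relative density $(p-1)/p$ inside the admissible set. This is the quantitative fact behind the heuristic of Section~\ref{heuristic}.
\end{itemize}

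The paper's argument is shorter but gives no bound on $r$. To obtain a square certificate it needs a row selection afterwards. In exchange, it applies verbatim to any fixed enumeration, so it directly justifies the practical procedure of adding primes in increasing norm until the matrix reaches full rank.
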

\begin{proof}
Note that for every $j$, we have $x_j \neq 0$. The number of ideals containing $x _j$ is finite (they are in one-to-one correspondence with the ideals of the finite ring $\mathcal{O}_K / \langle x_j  \rangle$ ). Enumerate the elements in the (infinite) set $ \{\mathfrak{q} \in S  \mid p | N(\mathfrak{q}) - 1 \text{ and } \forall j, x_j \not \in \mathfrak{q} \}$ as $\mathfrak{q}_1, \mathfrak{q}_2, \ldots$ and for each $i$ choose a primitive root $\zeta_i \in (\mathcal{O}_K/\mathfrak{q}_i)^{\times}$. Consider the sequence of matrices given by $M_k = (\log_{\zeta_i, p} (x_j))_{1 \leq i \leq k, j} \in \mathbb{F}_p ^ {k \times n}$ .  We have that $\ker M_k \subseteq \mathbb{F}_p ^ n $ and
\begin{align*}
\ker M_0 \supseteq \ker M_1 \supseteq \dots .
\end{align*}
Since $\ker M_i$ is a linear subspace of $\mathbb{F}_p^{n}$, the above chain of containments must stabilize. Let $r$ be such that $\ker M_r = \bigcap_{k = 1} ^ {\infty} \ker M_k$.  Let $v \in \ker M_r$. We have
\begin{align*}
 v \in \ker M_k \text{ for all }k  & \iff \sum_{j} v_j \log_{\zeta_i, p} (x_j) = 0 \text{ for all } i \\ & \iff \log_{\zeta_i, p}(\prod_{j} x_j ^ {v_j}) = 0 \text{ for all } i  \iff \overline{\prod_{j}x_j ^ {v_j}} \in (\mathcal{O}_K/\mathfrak{q}_i)^p \text{ for all } i.
\end{align*}
Furthermore, if $\mathfrak{q}$ is a prime ideal with $p \nmid N(\mathfrak{q}) -1$, then any element in $\mathcal{O}_K/\mathfrak{q}$ is a $p$-th power. Hence, we have that $ \overline{\prod_{j}x_j ^ {v_j}}$ is a $p$-th power modulo every prime ideal $\mathfrak{q} \in S$ such that $\forall j, x_j \not \in \mathfrak{q}$.  Since the set of prime ideals $\mathfrak{q}$ such that $\exists j, x_j \in \mathfrak{q}$ is finite and the complement $S^{c}$ has density $0$, we have by Lemma \ref{lemma:pth} that $\prod_{j}x_j ^ {v_j}$ is a $p$-th power in $\mathcal{O}_K$. By the assumption on the $x_j$, this implies that $p \mid v_j$ for every $j$. Thus $v=0$ and we conclude that $\ker M_{r}$ is trivial and thus the rank of $M_r$ is equal to $n$.
\end{proof}

\begin{theorem}
For any set $S$ of prime ideals of natural density $1$, there exists a collection of prime ideals contained in $S$ satisfying the conditions of Theorem \ref{theo: log} making the matrix \eqref{logmatrix}  full-rank.
\end{theorem}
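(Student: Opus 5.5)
The plan is to apply Theorem \ref{theo:exists} to the family formed by the units and the relevant generators, and to use Lemma \ref{lemma:powers} to supply its hypothesis. Work in the setting fixed at the start of this section. The units $u_1, \ldots, u_t$ have images forming an $\mathbb{F}_p$-basis of $\mathcal{O}_K^{\times}/(\mathcal{O}_K^{\times})^p$, and $\operatorname{Cl}(\mathcal{O}_K) = \langle [J_1] \rangle \times \dots \times \langle [J_k] \rangle$ with $[J_i]$ of order $n_i$ and $J_i^{n_i} = \langle \alpha_i \rangle$. Let $\psi_1, \ldots, \psi_d$ enumerate the indices $i$ with $p \mid n_i$, and set
\[
(x_1, \ldots, x_n) = (u_1, \ldots, u_t, \alpha_{\psi_1}, \ldots, \alpha_{\psi_d}), \qquad n = t + d.
\]

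First I will verify the hypothesis of Theorem \ref{theo:exists}. Suppose $\prod_j x_j^{m_j}$ is a $p$-th power in $\mathcal{O}_K$ for some $m_j \in \mathbb{N}$. This is precisely the product $\prod_{i,\, p \mid n_i} \alpha_i^{m_i} \prod_j u_j^{t_j}$ treated in Lemma \ref{lemma:powers}, with exponents relabelled. That lemma gives $p \mid m_j$ for every $j$.

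Next I will apply Theorem \ref{theo:exists} to the given set $S$ of density $1$. Since the paper restricts to $\mathfrak{q}$ of inertia degree one, I will first replace $S$ by $S \cap \{\mathfrak{q} : f_{\mathfrak{q}} = 1\}$. This intersection still has density $1$ by Lemma \ref{lemma:inerden}, because the complement of an intersection of two density-one sets has upper density $0$. The theorem then produces prime ideals $\mathfrak{q}_1, \ldots, \mathfrak{q}_r \in S$ with $r \geq n = t + d$. These satisfy $p \mid N(\mathfrak{q}_i) - 1$, contain none of the $x_j$ (in particular $\alpha_{\psi_i} \notin \mathfrak{q}_l$), and make the matrix $(\log_{\zeta_i, p}(x_j))$ have full rank for any choice of primitive roots.

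Finally I will observe that this matrix is exactly the logarithmic matrix \eqref{logmatrix}, with the units in the first $t$ columns and the $\alpha_{\psi_i}$ in the last $d$. All remaining hypotheses of Theorem \ref{theo: log} then hold. The $u_i$ generate $\mathcal{O}_K^{\times}$ modulo $p$-th powers because they span $\mathcal{O}_K^{\times}/(\mathcal{O}_K^{\times})^p$. The theorem requires $r \ge t+d$, which we have. The condition that the $\mathfrak{q}_l$ avoid the $\alpha_{\psi_i}$ is part of the output of Theorem \ref{theo:exists}.

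The only genuinely delicate point is the bookkeeping in the first step: making sure the exponent conventions of Lemma \ref{lemma:powers} ($m_i, t_j \in \mathbb{N}$) match those of Theorem \ref{theo:exists}. Both quantify over natural-number exponents, so they match directly. The rest is assembly of earlier results.
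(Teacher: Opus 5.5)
Your proposal is correct and follows the paper's proof: apply Theorem~\ref{theo:exists} to $(u_1,\ldots,u_t,\alpha_{\psi_1},\ldots,\alpha_{\psi_d})$, with Lemma~\ref{lemma:powers} supplying its hypothesis. Intersecting $S$ with the inertia-degree-one primes is not needed for this statement (the paper uses Lemma~\ref{lemma:inerden} only in the following corollary), but it does no harm.
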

\begin{proof}
Follows from Lemma \ref{lemma:powers} and Theorem  \ref{theo:exists}.
\end{proof}

\begin{corollary}
The certificate from Section \ref{sec:certunits} always exists.
\end{corollary}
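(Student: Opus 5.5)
The plan is to assemble the certificate of Section \ref{sec:certunits} from the existence results just proved, one component at a time. I distinguish the two cases $p \nmid \omega(K)$ and $p \mid \omega(K)$ as the paper does.

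\textbf{Step 1: the logarithmic matrix.} Fix a $\mathbb{Z}$-basis of $U_K$, lifted to units, and add a generator of $\mu(\mathcal{O}_K)$ when $p \mid \omega(K)$. This gives $u_1,\ldots,u_t$ whose images form an $\mathbb{F}_p$-basis of $\mathcal{O}_K^{\times}/(\mathcal{O}_K^{\times})^p$. Take $J_1,\ldots,J_k$ generating the cyclic factors of $\operatorname{Cl}(\mathcal{O}_K)$, with orders $n_i$, and $\alpha_i$ such that $J_i^{n_i}=\langle\alpha_i\rangle$.
\begin{itemize}
\item By Lemma \ref{lemma:powers}, the family $x = (u_1,\ldots,u_t,\alpha_{\psi_1},\ldots,\alpha_{\psi_d})$ satisfies the hypothesis of Theorem \ref{theo:exists}.
\item Apply Theorem \ref{theo:exists} with $S$ the set of prime ideals of inertia degree one, which has density $1$ by Lemma \ref{lemma:inerden}.
\item This gives primes $\mathfrak{q}_1,\ldots,\mathfrak{q}_r$ with $N(\mathfrak{q}_i)=q_i$ prime, $p \mid q_i-1$, no $x_j$ in any $\mathfrak{q}_i$, and a full-rank log matrix for every choice of primitive roots.
\end{itemize}

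\textbf{Step 2: trimming to a square matrix.} The matrix has full column rank $t+d$, so I select $t+d$ rows forming an invertible submatrix and discard the other primes. I reorder them as in the remark on reordering rows. Taking $\zeta_i$ to be an integer lift of a generator of $(\mathbb{Z}/q_i\mathbb{Z})^\times \cong (\mathcal{O}_K/\mathfrak{q}_i)^\times$ then fills in the data $q_i,\zeta_i,\mathfrak{q}_i,M,N$. The inverse of the square matrix $(M\|N)$ supplies $N'$. Because $(M\|N)$ has full column rank, its first $t$ columns are independent, so $M$ has $t$ linearly independent rows. The mismatch is that $M'$ must invert the submatrix $M_{[t]}$ formed by the first $t$ rows. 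To handle it, I would first order the chosen primes so that the first $t$ rows already give an invertible $t\times t$ block of $M$. This is possible by picking $t$ independent rows of $M$ and extending them to $t+d$ rows that make $(M\|N)$ invertible, which the full rank of $(M\|N)$ allows.

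\textbf{Step 3: the remaining conditions.} Each of these is immediate:
\begin{itemize}
\item units are units;
\item $\overline{\alpha_{\psi_j}}\neq 0$ because no $\alpha_{\psi_j}$ lies in any $\mathfrak{q}_i$;
\item $\alpha_i\neq0$ since $J_i\neq0$;
\item $\psi,\nu$ are index bookkeeping;
\item $J_i^{n_i}=\langle\alpha_i\rangle$ holds by construction.
\end{itemize}

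\textbf{Step 4: torsion divisibility.} When $p\nmid\omega(K)$, the field \lstinline|pMaximalUnitsCertificateNDvdT| also needs the torsion divisibility witness. Remark \ref{torsionexists} supplies it, via Lemmas \ref{lemma:inerden} and \ref{lemma:rootun}. The rank bound on the units comes from Dirichlet's theorem.

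\textbf{Main obstacle.} The only real delicacy is the row-ordering compatibility in Step 2, namely that the same prime ordering works for both the unit certificate and its $p$-saturation extension. The linear-algebra extension argument handles it.
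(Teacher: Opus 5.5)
Your proposal is correct and is essentially the paper's argument. Lemma \ref{lemma:powers} verifies the hypothesis of Theorem \ref{theo:exists} for $(u_1,\ldots,u_t,\alpha_{\psi_1},\ldots,\alpha_{\psi_d})$, which is then applied with $S$ the set of inertia-degree-one primes (density $1$ by Lemma \ref{lemma:inerden}). Your extra bookkeeping is sound and makes explicit what the paper leaves to its remarks: trimming to $t+d$ rows in an order that also makes $M_{[t]}$ invertible, and supplying the torsion witness via Remark \ref{torsionexists}.
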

\begin{proof}
Use the previous theorem and Lemma \ref{lemma:inerden}.
\end{proof}

\subsection{A heuristic analysis} \label{heuristic}
In what follows, we interpret the natural density of a set $A$ of prime ideals as the probability that a randomly selected prime ideal belongs to $A$, and write $P(A)$ for $d(A)$. If all the densities involved exist, this "probability" measure is finitely additive albeit not $\sigma$-additive. For simplicity, we will assume that the natural density of all the considered sets exists. Furthermore, for sets $A$ and $B$, we write $P(A \mid B)$ for the conditional probability defined as $P(A \cap B)/P(B)$.

Consider a set $S$ of prime ideals of density $1$ and fix $\alpha \in \mathcal{O}_K \setminus (\mathcal{O}_K)^p$. Using Lemma \ref{lemma:dens1} and Lemma \ref{lemma:rootun}, we have, for a randomly selected prime ideal $\mathfrak{q}$, that
\begin{multline}\label{eq:condprob}
P(\bar{\alpha} \in (\mathcal{O}_K/\mathfrak{q})^p \mid \mathfrak{q} \in S \text{ and } p | N(\mathfrak{q}) - 1)  = P(\bar{\alpha} \in (\mathcal{O}_K/\mathfrak{q})^p \mid p | N(\mathfrak{q}) - 1) \\ = \frac{P(\bar{\alpha} \in (\mathcal{O}_K/\mathfrak{q})^p ) - P (\bar{\alpha} \in (\mathcal{O}_K/\mathfrak{q})^p \mid p \nmid N(\mathfrak{q}) - 1) P(p \nmid N(\mathfrak{q}) - 1) }{P(p | N(\mathfrak{q}) - 1)}  \\
= \frac{(1- \frac{p-1}{p\delta_p}) - 1 \cdot (1 - \frac{1}{\delta_p})}{\frac{1}{\delta_p}} = \frac{1}{p}.
\end{multline}
With $P(\bar{\alpha} \in (\mathcal{O}_K/\mathfrak{q})^p \mid p \nmid N(\mathfrak{q}) - 1) = 1$, since when $p \nmid N(\mathfrak{q}) - 1$, every element of $\mathcal{O}_K/\mathfrak{q}$ is a $p$-th power.

Now fix a collection of elements $x_1, \ldots, x_n$ in $\mathcal{O}_K$ with the property that $\prod_i x_i ^ {m_i} \in (\mathcal{O}_K) ^ p$, with $m_i \in \mathbb{N}$, implies that $p \mid m_i$ for all $i$. In particular, if  $v \in \mathbb{F}_p^{n} \setminus \{0\}$, then $\prod_i x_i ^ {v_i} $ is not a $p$-th power in $\mathcal{O}_K$.
Let $S$ be a set of prime ideals with density $1$ and randomly sample independently $m$ prime ideals $\mathfrak{q}_1, \ldots, \mathfrak{q}_m$ from the set  $ \{\mathfrak{q} \in S  \mid p | N(\mathfrak{q}) - 1 \text{ and } \forall j, x_j \not \in \mathfrak{q} \}$. 
For any choice of primitive roots $\zeta_i \in (\mathcal{O}_K/\mathfrak{q}_i)^{\times}$ form the logarithmic matrix $M = (\log_{\zeta_i, p} (x_j))_{i,j} \in \mathbb{F}_p ^ {m \times n}$. We find that
\begin{align*}
P(\operatorname{rank} M < n) &= P(\exists v \in \mathbb{F}_p^n \setminus \{0\} \text{ such that }  \forall i,  \prod_{j}x_j ^ {v_j} \in (\mathcal{O}_K/\mathfrak{q}_i)^p )  \\
&  \stackrel{(i)}{\leq} \sum_{v \in \mathbb{F}_p ^ n \setminus \{0\}} P( \forall i,  \prod_{j}x_j ^ {v_j} \in (\mathcal{O}_K/\mathfrak{q}_i)^p) \\
&  \stackrel{(ii)}{=}  \sum_{v \in \mathbb{F}_p ^ n \setminus \{0\}} \prod_i P(\prod_{j}x_j ^ {v_j} \in (\mathcal{O}_K/\mathfrak{q_i})^p)  \stackrel{(iii)}{=}   \sum_{v \in \mathbb{F}_p ^ n \setminus \{0\}} \prod_i  \left(\frac{1}{p}\right) = \frac{p^n - 1}{p^m}.
\end{align*}
The inequality $(i)$ could be refined with a more careful handling of the linear subspaces of $\mathbb{F}_p^n$.  For $(ii)$ we crucially use that the sampling of prime ideals is independent, and for $(iii)$ we use (\ref{eq:condprob}).

The probability that $M$ is of full rank is then bounded below by $1 - \frac{p^n - 1}{p^m}$, and rapidly approaches $1$ as $m$ increases. Note that already for $m = n + 2$ the probability of success is larger than $1- 1/p^2 \geq 3/4 $. In this case, we need, on average, less than two independent attempts to get a matrix of full rank.

For the certificate in Section \ref{sec:certunits}, we choose $S$ to be the set of prime ideals of inertia degree one. In practice, to compute this certificate, the selection of the prime ideals $\mathfrak{q}_i$ is not truly random. We select them in increasing order with respect to their norm and continue adding rows to the matrix until it is of full rank. This seems to work well in practice.

\section{Formally verifying LMFDB entries} \label{sec : lmfdb}
The \emph{$L$-functions and modular forms database} \cite{lmfdb} (LMFDB) is a popular data base that contains data of, among other things, invariants of number fields. These include, for a collection of number fields, an integral basis, discriminant, signature, the class number, and the structure of the class group. The framework developed in this work allows us to formally verify some of these database entries.

In \cite{rings}, the ring of integers was certified in Lean for a collection of degree-3 and degree-5 number fields. 
Additionally, the discriminant of the degree-3 examples was also verified, but degree-5, however, was computationally infeasible at that time. We will illustrate our current methods by revisiting this degree-5 collection of number fields. In particular, we can now verify their discriminant efficiently, as well as certify other invariants such as the class group structure. 

We wrote a \href{https://github.com/alainchmt/CertifyingInvariantsNF/blob/v1/InvariantsNFLean.sage}{SageMath script} (which reuses some of the functions defined in \cite{rings} for the ring of integers verification), with entrypoint \texttt{generate\_invariants\_proof\_lean}, that, given a defining polynomial (integral and monic) for a number field $K$, an integral basis, and a naming string, automatically generates Lean files containing full proofs of the class group structure and the class number, together with proofs of the discriminant and signature of $K$. The files defining and certifying the prime ideals of norm below the Minkowski bound, as well as the relations with the set of ideals whose classes are generators of the cyclic factors of the class group, are divided into multiple independent batches, each dealing with the prime ideals above a certain interval of prime numbers. This enables parallelized type-checking if multiple cores are available, and considerably speeds up verification.

To test our tools in a variety of class group structures, we randomly selected from the LMFDB some degree-3 number fields unramified outside primes $p > 200$ and with absolute discriminant between 20,000 and 1,000,000: 100 with trivial class group, 100 with non-trivial cyclic class group, and 100 with non-cyclic class group. Using a MacBook Pro running a 4.6 GHz Apple M5 processor, 10-core CPU/10-core GPU and 24 GB of RAM, we successfully verified the listed invariants for all. The verification timings (wall-clock), are shown in Figure 1. All of these proofs are \lstinline|sorry|-free, only using the three standard axioms of Mathlib: \texttt{propext, choice, Quot.sound}. 

The degree 5 examples of \cite{rings} are the number fields in the LMFDB unramified outside 2,3, and 5, and with a non-monogenic ring of integers.
The Minkowski bound of these ranges between 24 and 18,792, resulting in verification timings (wall-clock) ranging from 30 seconds to 23 minutes (Figure 2)\footnote{The larger examples involve thousands of prime ideals split into hundreds of batches. Assembling the per-batch \lstinline|PrimesBelowBoundCertificateInterval| into \lstinline|PrimesBelowBoundCertificate| may still exceed Lean's preset maximum recursion depth and heartbeat limits, so we raise them both in the relevant file whenever the batch count exceeds 70. }. 
All these 142 degree-5 number fields have cyclic class groups of orders 1,2,3,4, or 5.

\begin{figure}[ht]
  \centering
  \begin{minipage}{0.49\linewidth}
    \centering
    \small{Figure 1}
    \includegraphics[width=\linewidth]{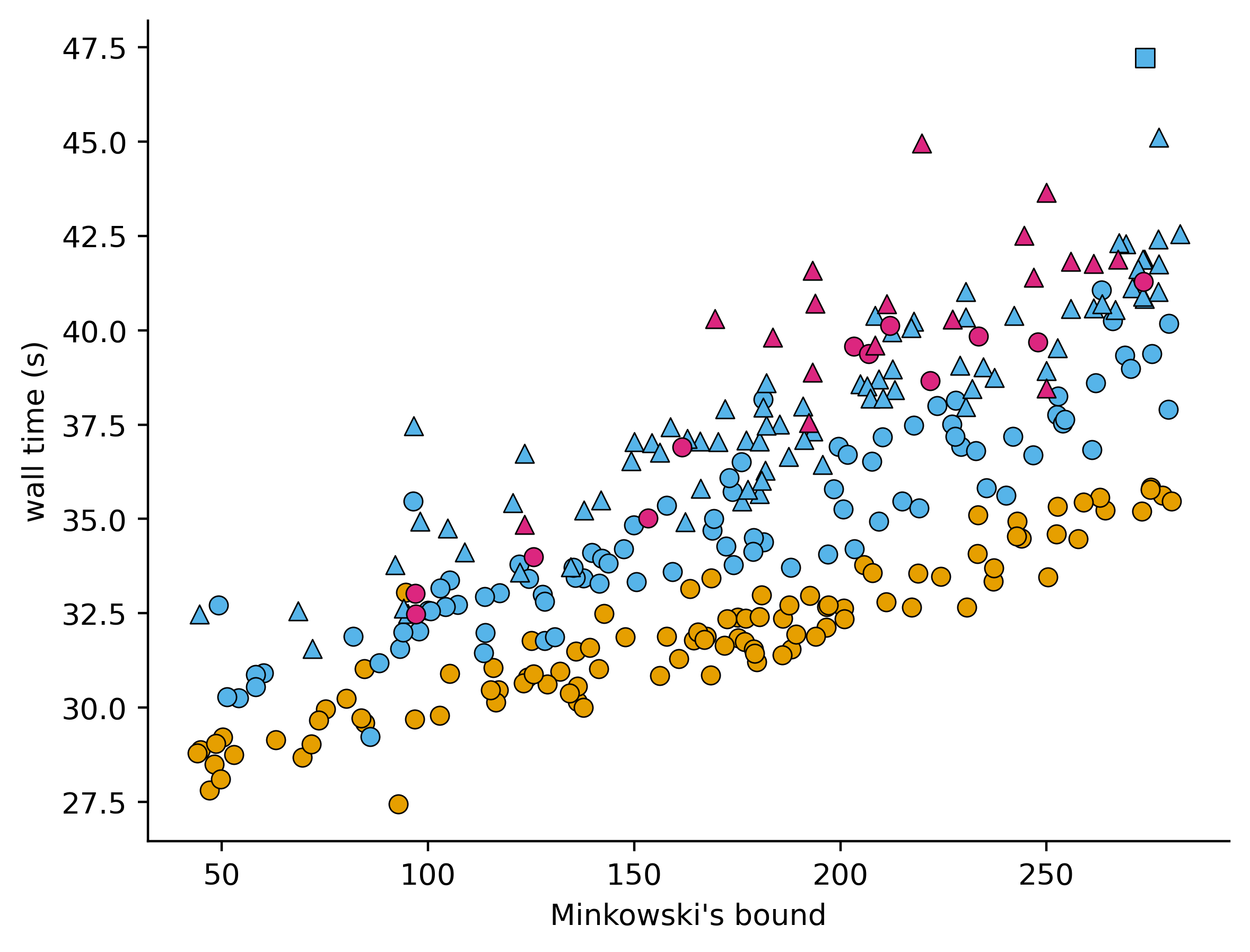}
  \end{minipage}\hfill
  \begin{minipage}{0.49\linewidth}
    \centering
    \small{Figure 2}
    \includegraphics[width=\linewidth]{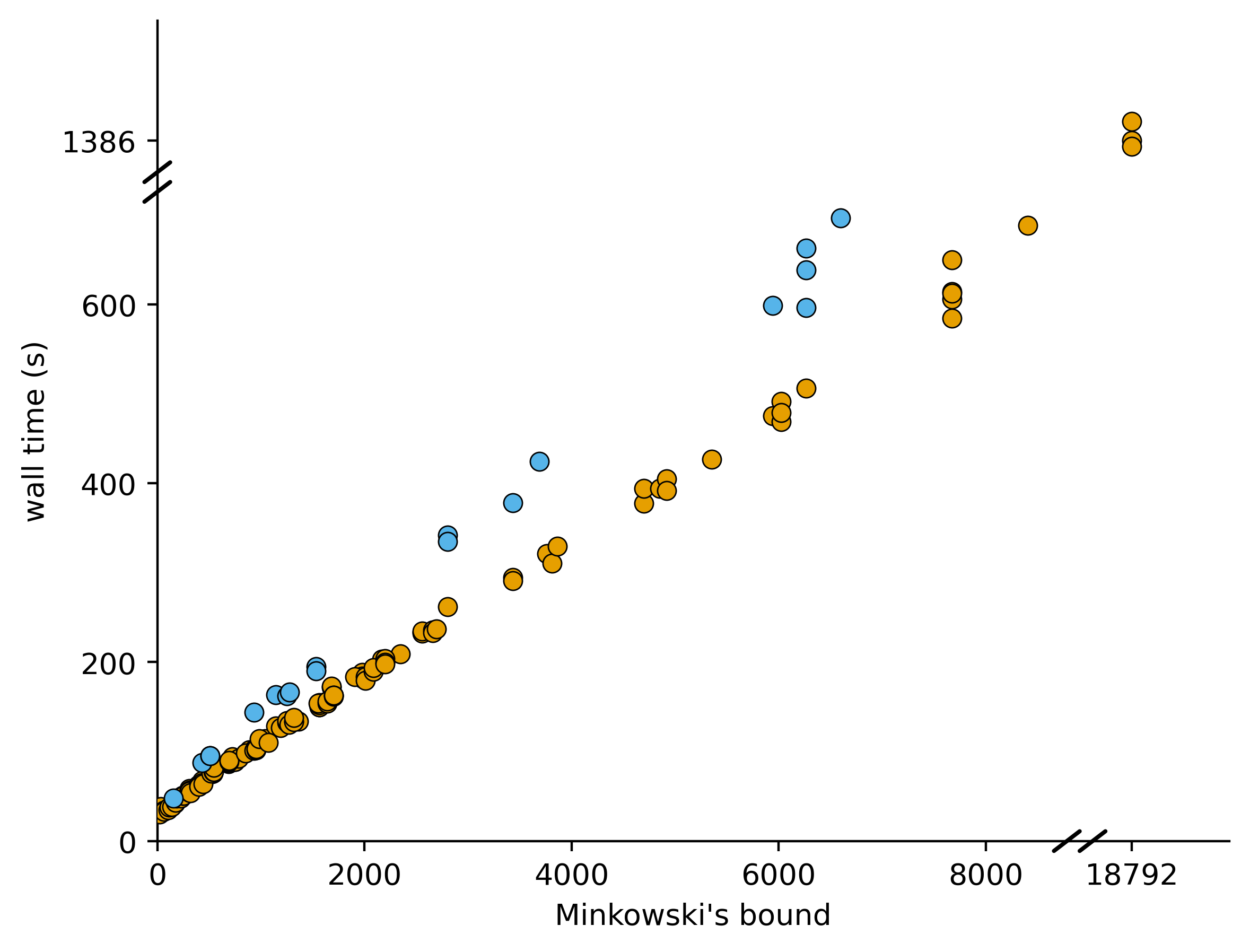}
  \end{minipage}
  {\footnotesize Color indicates number of prime divisors of $h_K$ (orange 0, blue 1, magenta 2), shape indicates number of cyclic factors of $\operatorname{Cl}(\mathcal{O}_K)$ (circle 1, triangle 2, square 3). Note the axis breaks (diagonal marks) in Figure 2.}
\end{figure}
As an example, consider the number field $K$ with defining polynomial $X^5 - X^4 + 3X^2 + 21X + 4$. 
\begin{lstlisting}
def K := AdjoinRoot (map (algebraMap ℤ ℚ) (X^5 - X^4 + 3*X^2 + 21*X + 4 ))
\end{lstlisting}

We automatically generate proofs of the listed invariants using our SageMath script. The summary file is \href{https://github.com/alainchmt/CertifyingInvariantsNF/blob/v1/IdealArithmetic/Examples/NF5_1_3790297_2/Results5_1_3790297_2.lean}{\texttt{Examples/Results5\_1\_3790297\_2}}. It proves that the discriminant and signature are
\begin{lstlisting}
theorem K_discr' : discr K = 3790297 := ...

lemma K_nrComplexPlaces' : InfinitePlace.nrComplexPlaces K = 2 := ...

lemma K_nrRealPlaces' : InfinitePlace.nrRealPlaces K = 1 := ...

\end{lstlisting}

The class group $\operatorname{Cl}(\mathcal{O}_K)$ has order $4$ and is isomorphic to the Klein group $\mathbb{Z}/2\mathbb{Z} \times \mathbb{Z}/2\mathbb{Z}$, as shown in 
\begin{lstlisting}
def class_group_equiv' : 
  (∀ i : Fin 2 , (ZMod (![2, 2] i))) ≃+ Additive (ClassGroup (RingOfIntegers K)) := ...

theorem class_number_K_eq_4' : classNumber K = 4 := ...
\end{lstlisting}

Here, \lstinline|∀ i : Fin k , (ZMod (n i))| is the type of (dependent) functions sending \lstinline|i : Fin k| to an element of type \lstinline|ZMod (n i)|, the integers modulo $n_i$. It corresponds to the group $\mathbb{Z}/n_1\mathbb{Z} \times \ldots \times \mathbb{Z}/n_k\mathbb{Z}$. 

We note that it is possible to describe the isomorphism \lstinline|class_group_equiv'| explicitly using the map from Section \ref{sec : iso class} between the class group of our computational model $\mathcal{O}$ and Mathlib's $\mathcal{O}_K$.
In addition to these examples, we also verified a selection of number fields of degrees 2, 4, and 6 through 10 with modest Minkowski bounds and non-trivial class groups, running into timeout issues with the degree-10 example in the ring of integers verification (solved by increasing Lean's \lstinline|maxHeartbeats|). 

\section{Bottlenecks and optimizations} \label{sec : optimizations}
By far, the largest portion of the verification time is spent on the factor base. For example, for the number field \href{https://github.com/alainchmt/CertifyingInvariantsNF/tree/v1/IdealArithmetic/Examples/NF5_1_337500000_4}{\texttt{Examples/NF5\_1\_337500000\_4}} with Minkowski's bound around $1,147$ and class number $5$, almost $80\%$ of CPU-time goes to the factor base verification, around $15\%$ to the relations, and the rest to ring of integers, discriminant, and saturation calculations. 
For a prime number $p$, the main bottleneck involves the proofs certifying the products of prime ideals $\prod_{i}\mathfrak{p}_i \subseteq \langle p\rangle$. 
For this, most of the time is spent on ideal primality, split between certifying a $\mathbb{Z}$-basis (needed for the norm computation) and checking the primality certificate from Section \ref{sec : primality}, which in turn requires proving irreducibility of a polynomial modulo $p$.  The remaining time is spent chaining ideal multiplications. All of these costs increase with the degree of the number field, and the irreducibility proofs dominate with large $p$.

The dependence on $p$ is illustrated in \href{https://github.com/alainchmt/CertifyingInvariantsNF/tree/v1/IdealArithmetic/Examples/NF5_1_18225000000_1}{\texttt{Examples/NF5\_1\_18225000000\_1}}. For primes $p$ in the range $2-103$, the verification spends around $32\%$ of the time on $\mathbb{Z}$-basis proofs, $36\%$ on irreducibility proofs, $14\%$ on the remaining parts of the primality certificates, and $18\%$ on chaining ideal multiplications. For primes in the range $8,263-8,317$, the proportions become approximately $17\%$, $62\%$, $10\%$, and $11\%$ respectively, with the ratio of the irreducibility proofs to the $\mathbb{Z}$-basis proofs increasing by a factor larger than $3$. 
The certification of a $\mathbb{Z}$-basis can be avoided for primes not dividing the index if one uses the Kummer-Dedekind theorem (see Remark \ref{remarkkummer}). However, this still requires proving irreducibility of polynomials modulo $p$, and therefore gives no asymptotic improvement in $p$. 

During development, we encountered performance issues arising from Lean's typeclass inference. When verifying examples, synthesizing instances for the explicitly constructed subalgebra $\mathcal{O}$, such as \lstinline[mathescape]|Algebra $\mathbb{Z}$ $\mathcal{O}$|, was slow. We mitigate this by explicitly declaring these instances, providing a shortcut for typeclass inference.
Further optimizations to the polynomial, ring of integers, and ideal arithmetic computations could be explored. Bhavik Mehta has suggested ways to optimize these for faster kernel reduction. One possible direction is to replace polymorphic functions and overloaded operations with their underlying concrete implementations, also avoiding repeated unfolding and instance synthesis. Alternative encodings could also be explored. For example, we have used tuples \lstinline[mathescape]|Fin n $\to$ A| to represent collections of elements in \lstinline|A|, as this is convenient for indexing, theorem proving, and integrates well with Mathlib (in particular, the linear algebra part of the library). This worked well in practice, however, alternative data types could be considered, at the potential cost of more laborious proofs. 

Another optimization is a finer split of independent proofs, enabling a more granular parallel verification which can be trivially sped up by increasing the number of cores. Even with highly optimized verification, however, certifying the prime ideals of norm below Minkowski's or Zimmer's bound remains computationally infeasible for number fields of large discriminant. One possible direction is to formalize and use alternative factor base bounds, such as Bach's bound, which hold under GRH.

\section{Discussion}\label{sec : discussion}
\subsection{Alternative approaches}\label{alternative}
Alternative strategies could be explored for certifying the invariants considered in this project. Instead of using the discrete logarithms approach for saturation, the non-principality of an ideal $I$ could be proven by finding a local norm obstruction. Although such certificates are not guaranteed to exist in general, the Hasse norm theorem \cite[Theorem 4.5]{janusz} shows that for number fields cyclic over $\mathbb{Q}$, whenever $N(I)$ is not a global norm, it fails to be a local norm at some completion. Another method is to compute a so-called \emph{reduced} ideal in the class of $I$ and look it up in a precomputed finite list of all the reduced principal ideals (see \cite[pag. 125]{buchmann2}) . 

More analytic and numerical approaches are also possible. 
For a number field $K$ of degree $n$ and defining polynomial $f$, the signature $(r_1, r_2)$ and approximations of the images of $\alpha \in K$ under the embeddings $\sigma : K \to \mathbb{C}$ can be found by approximating the roots of $f$ with rigorous error bounds. 

Such embedding computations can further be used to certify a system of $\mathbb{Z}$-independent units via the \emph{regulator}. Recall that the group of units modulo torsion $U_K = \mathcal{O}_K^{\times}/\mu(\mathcal{O}_K)$ is free of rank $r_1 + r_2 - 1$. The kernel of the logarithmic embedding $ l : \mathcal{O}_K^{\times} \to \mathbb{R}^{r_1 + r_2}$ (see \cite[pag. 57]{janusz})  is $\mu(\mathcal{O}_K)$, and its image is a full lattice in the hyperplane $\sum_i x_i = 0$ of $\mathbb{R}^{r_1 + r_2}$. Projecting onto $\mathbb{R}^{r_1 + r_2 - 1}$ by deleting any coordinate yields a full lattice in $\mathbb{R}^{r_1 + r_2 - 1}$ whose covolume is the \emph{regulator} $R_K$ of $K$. More generally, for a collection of units $\{u_1, \ldots, u_{r_1 + r_2 -1}\}$, their regulator is the covolume of the lattice generated by their logarithmic images in $\mathbb{R}^{r_1 + r_2 - 1}$. It can be computed via a determinant, and it is nonzero if and only if the collection of units is $\mathbb{Z}$-linearly independent. In that case, it is an integer multiple of $R_K$. 

From a collection of ideals whose classes generate $\operatorname{Cl}(\mathcal{O}_K)$, one can easily compute an integer multiple $h'$ of the class number $h_K$. Similarly, given a collection of $r_1 + r_2 -1$ $\mathbb{Z}$-linearly independent units, one can find (up to some error $\pm \epsilon$) their regulator $R'$ and thus an integer multiple of $R_K$. If a sufficiently precise
approximation of $h_KR_K$ is close enough to $h'R'$, one can conclude that $h_K = h'$ and $R_K = R'$, certifying both invariants at once. This idea avoids the saturation step in Section \ref{sec:satclass} and is used in the classical Buchmann algorithm \cite{buchmann}, which employs the class number formula to approximate $h_KR_K$.

The analytic class number formula (\href{https://github.com/leanprover-community/mathlib4/blob/6f1c6456ee863edbcf96c4febc52cd1c8d07487f/Mathlib/NumberTheory/NumberField/DedekindZeta.lean#L72-L87}{formalized} in Lean by Xavier Roblot) writes the residue at $s=1$ of the Dedekind zeta function  $\zeta_K$ of $K$ (see \cite[Definition 5.1]{neuk}) in terms of the signature, the number of roots of unity, the discriminant, and the product $h_KR_K$. Assuming that we have determined the first three invariants (which is much easier than the rest) we could try approximating $\operatorname{res}_{s=1} \zeta_K(s)$ to find $h_KR_K$ up to certain precision. 
Under GRH, there are practical ways \cite{bach2, belabas2} to approximate this residue with rigorous error bounds, by considering prime ideals of norm up to $O(\log |\operatorname{disc}(K)| / \log \log |\operatorname{disc}(K)|) ^2$. 
Preliminary work towards the class number formula approach for certifying regulators in Lean (under GRH) was carried out by some participants, including the first author, during the "Bridging Lean and LMFDB" workshop in Norwich, UK (29 June -- 3 July 2026).

\subsection{Related work}
The method of using a CAS to generate a certificate that is verified inside a proof assistant was first introduced by Harrison and Thery in \cite{skeptic}, known as the skeptic’s approach. It has been applied in various settings. For instance, in primality certificates for integers \cite{gregoire}, computations based on externally generated Grobner bases \cite{grobner}, and the verification of graph-theoretic properties \cite{houseofgraphs}. Unverified computations that are later certified also appear in \cite{paulson, zeta3}. In Lean, prior work bridging Lean and a CAS includes \cite{lewis}, the tactic polyrith, written by Dhruv Bhatia, and sageify \cite{alex}. 

Within algebraic number theory, the certification of the ring of integers of a number field of arbitrary degree in Lean is described in \cite{rings}. This work also formalizes the resultant and discriminant of polynomials via the Sylvester matrix, and its relation with the discriminant of a number field, although without computational optimizations. An explicit formula for the discriminant of cyclotomic number fields was formalized in Lean in \cite{fermat}. Class groups of Dedekind domains were formalized for the first time in \cite{dedekindforma}, and the verification of the class group for a selection of quadratic number fields in Lean was carried out in \cite{alex}. In \cite{subresultant}, algebraic numbers and their arithmetic operations were implemented via their minimal polynomials in Isabelle.
Related work on real closed fields, Sturm's theorem, and pseudo-remainder sequences was discussed in Section \ref{sturmrel}. 

Finally, foundational algorithms in linear algebra and polynomial arithmetic, often used in computational number theory, have been implemented in proof assistants. For instance, in Isabelle and Coq/Rocq \cite{jesus1, jesus2, maxime}, including the LLL basis reduction algorithm \cite{lll}. Very recently, Kim Morrison announced \href{https://github.com/leanprover/hex#hex}{Hex}, a Lean library implementing several of these algorithms, including LLL reduction.

\subsection{Future work}
Several directions for future work exist. For instance, the implementation of the optimizations described in Section \ref{sec : optimizations}. Among these, the formalization of the GRH statement and the derivation of Bach's bound remain interesting formalization challenges. 

An integrated bridge between Lean and our certificates in SageMath could improve the user experience for on-demand certifications. Our current approach generates at once the Lean files from a SageMath script, without interactive integration. A more seamless interface, requiring a substantial amount of engineering effort, would perhaps make this work more accessible. Furthermore, greater use of custom tactics could shorten the generated proofs. 

A further direction is the certification of additional invariants of number fields. The analytic class number formula approach under GRH, outlined in Section \ref{alternative}, offers a possible way to conditionally certify the full unit group. An alternative approach to prove that a subgroup $W$ equals $U_K = \mathcal{O}_K^{\times}/\mu(\mathcal{O}_K)$, which does not need to assume GRH, is to bound the index $[U_K : W]$ by finding a lower bound for the regulator $R_K$ (for instance, via \cite{fiekerR}), and use the certificate of Section \ref{sec:certunits} to prove $p \nmid [U_K : W]$ for every prime $p$ up to this bound. 
Other interesting invariants to certify include the Galois group of a number field.

Ultimately, one of the main motivations for computing number field invariants, such as the class group, is the resolution of certain Diophantine equations. Applying our methods to formally solve some of them is a natural continuation of this work.

\begin{appendices}

    \section{Commutative algebra preliminaries}\label{App:B}

We recall some basic definitions and properties of groups and rings, as can be found e.g. in the undergraduate textbooks \cite{dummit, langundergraduate}. Every finite abelian group $G$ is isomorphic to a direct product $G \cong \mathbb{Z}/n_1\mathbb{Z} \times \dots \times \mathbb{Z}/n_k\mathbb{Z}$ with $n_i \in \mathbb{Z}_{>0}$. An element $x$ in a ring $R$ is a \emph{unit} if there exists a $y$ in $R$ such that $xy = yx = 1$. The set of units of $R$ is a group under multiplication and is denoted by  $R ^ {\times}$. A commutative ring is an \emph{integral domain} if it is not the zero ring and $ab = 0$ implies $a = 0 $ or $b = 0$. An integral domain in which all nonzero elements are units is a \emph{field}. Moreover, every finite integral domain is necessarily a field. 

The following material is standard and can be found in a graduate algebra textbook such as \cite{lang2}. 

\subsection{Ideals}
    Let $R$ be a commutative ring. An \emph{ideal} \emph{I} of $R$ is a nonempty subset of $R$ which is closed under addition and such that for all $x \in R$ and all $i \in I$, we have $xi \in I$. One readily sees that $I$ is an additive subgroup since $-1 \in R$. The quotient of additive groups $R/I$ is equipped with a well-defined multiplication that makes $R/I$ into a ring. The subset $\{0\} \subseteq R$ is an ideal, known as the zero ideal of $R$.
An ideal $I$ of $R$ is \emph{prime} if $I \neq R$ and $xy \in I$ implies $x \in I$ or $y \in I$. Equivalently, $I$ is prime if and only if $R/I$ is an integral domain. Moreover, $I$ is called \emph{maximal} if $R/I$ is a field.  Note that if $R/I$ is finite for every nonzero ideal $I$, then (with $R$ not a field) maximal and nonzero prime ideals of $R$ coincide.

 For an ideal \emph{I} of $R$, there is a canonical map $R \to R/I$ sending $x$ to $x + I$ called \emph{reduction modulo $I$} or \emph{quotient map}. The coset $x + I$ is also denoted by $\bar{x}$ when $I$ is clear from the context. The ideal generated by elements $a_1, \ldots, a_n$ in $R$, denoted by $\langle a_1, \ldots, a_n\rangle$, is defined as the set of elements in $R$ that can be written as $r_1a_1 + \ldots + r_na_n$ with $r_i \in R$. An ideal $I$ is finitely generated if $I = \langle a_1, \ldots, a_n\rangle$ for some $a_i \in R$.
The ring $R$ is \emph{Noetherian} if every ideal of $R$ is finitely generated. An ideal generated by a single element is called \emph{principal}, and an integral domain is a \emph{principal ideal domain (PID)} if every ideal is principal. The canonical examples of a PID, and hence of a Noetherian ring, are $\mathbb{Z}$ and the ring of univariate polynomials $k[X]$ over a field $k$.

\subsection{Dedekind domains}
Let $R$ be a commutative ring and $I$ and $J$ ideals of $R$. Their sum $I + J$ is the ideal $\{x + y \mid x \in I, y \in J\}$, and their product $IJ$ is the smallest ideal containing $\{x y \mid x \in I, y \in J\}$. Under these operations, the collection of all ideals of $R$ forms a commutative semiring.
If $R$ is an integral domain and $F$ its field of fractions, then a subset $I \subseteq F$ is a \emph{fractional ideal} of $R$ if there exists $x \in R\setminus{\{0\}}$ such that $xI \subseteq R$ is an ideal of $R$. Addition and multiplication of (integral) ideals extend in the natural way to fractional ideals. A fractional ideal $I$ is said to be \emph{invertible} if there exists a fractional ideal $J$ such that $IJ = R$. If every nonzero fractional ideal of $R$ is invertible, then $R$ is called a \emph{Dedekind domain}. In a Dedekind domain, every nonzero proper ideal has a unique factorization as a product of prime ideals. Furthermore, every ideal can be written using at most two generators.

\subsection{Modules and Algebras}
Recall that a vector space $V$ over a field $k$ is an abelian group equipped with an action of $k$ satisfying some axioms that ensure the compatibility with both the field operations of $k$ and the group operation of $V$. An $R$-module generalizes this concept by replacing $k$ with an arbitrary commutative ring $R$. The usual notions of linear map and linear independence generalize in a straightforward manner to this more general framework. We highlight two important instances of modules. First, an ideal $I$ of a commutative ring $R$ is an $R$-module with the action given by the ring multiplication. Second, every abelian group $G$ can be viewed as a $\mathbb{Z}$-module by defining the action of $m \in \mathbb{Z}$ on $g \in G$ with the formulas $0 \cdot g = 0$, $(m + 1) \cdot g = m \cdot g + g $, and $(-m) \cdot g = - m \cdot g$ for $m > 0$.

For an $R$-module $M$ and a subset $A$ of $M$, the $R$-submodule consisting of the finite linear combinations $r_1\cdot x_1 + \ldots + r_n  \cdot x_n$ with $r_i \in R$,  $x_i \in A$ and $n \in \mathbb{N}$ is denoted by $\operatorname{span}_{R} A$. We say that $M$ is finitely generated if $M = \operatorname{span}_{R}A$ for some finite subset $A$ of $M$. If $M$ is finitely generated over $R$ and $R$ is Noetherian, then every $R$-submodule of $M$ is also a finitely generated $R$-module. A subset $B$ of $M$ is an $R$-basis of $M$ if $B$ is linearly independent and $M = \operatorname{span}_{R} B$. Unlike in the case of vector spaces, not every module has a basis (for example, $\mathbb{Z}/n\mathbb{Z}$, with $n>0$, is a $\mathbb{Z}$-module but does not have a $\mathbb{Z}$-basis). If it does, it is called a \emph{free} module. A free module $M$ with a finite basis of size $n$, called the \emph{rank} of $M$, is isomorphic to $R ^ n$. Every finitely generated torsion-free module over a PID is free. In particular, every finitely generated torsion-free abelian group has a $\mathbb{Z}$-basis.
A commutative ring $S$ equipped with a ring homomorphism $\varphi : R \to S$ is an $R$-algebra. It has an $R$-module structure via $r \cdot x = \varphi(r)x$, with $r\in R$ and $x \in S$ (thus, every commutative ring is a $\mathbb{Z}$-algebra). An ideal $I$ of $S$ can then be regarded as both an $S$-module and an $R$-module.

\subsection{Finite fields}
Let $F$ be a finite field. The canonical map $\mathbb{Z} \to F$ has kernel equal to $p\mathbb{Z}$ for some prime number $p$ called the \emph{characteristic} of $F$. This induces an injective morphism of fields $\mathbb{Z}/p\mathbb{Z} \to F$, so that $F$ may be regarded as an extension of the finite field $\mathbb{F}_p = \mathbb{Z}/p\mathbb{Z}$. The extension $ \mathbb{F}_p \subseteq F$ is simple, which means that there is $\alpha \in F$ such that $F = \mathbb{F}_p(\alpha)$. The cardinality of $F$ is $p ^ d$, where $d$ is the degree of the extension (the dimension of $F$ as an $\mathbb{F}_p$-vector space).
If $q = p^d$ is a prime power, then the splitting field (unique up to isomorphism) of $X^q - X$ over $\mathbb{F}_p$ is a finite field of size $q$, denoted $\mathbb{F}_q$. Any field with $q$ elements is isomorphic to this one, so we speak of \emph{the} finite field with $q$ elements.

	\section{Frobenius theorem}\label{app:A}
	Let $K$ be a number field and $f \in \mathcal{O}_K[X]$ a monic polynomial of degree $n$ with $n$ distinct roots in a splitting field, and take $L \supseteq K$ to be a splitting field of $f$. Consider a nonzero prime ideal $\mathfrak{q} $ of $\mathcal{O}_K$. Reducing the coefficients of $f$ modulo $\mathfrak{q}$ gives a polynomial $f\pmod{\mathfrak{q}} \in (\mathcal{O}_K/\mathfrak{q})[X]$, which factors into irreducible polynomials. The collection of degrees of these polynomials gives a partition of $n$, called the \emph{factorization type} of $f$ modulo $\mathfrak{q}$. An element $\sigma$ in the Galois group $\operatorname{Gal}(L/K)$ of $f$ can be seen as a permutation on the set of $n$ roots of $f$ in $L$, and thus as an element in the symmetric group $S_n$. The cycle type of $\sigma$ is the partition of $n$ whose parts are the lengths of the disjoint cycles in the decomposition of $\sigma$ in $S_n$. The main result we need is the  Frobenius density theorem, which can be obtained as a corollary of Chebotarev's density theorem, but in fact has a much easier direct proof. 

	\begin{theorem}[Frobenius density theorem]\label{theo:frob}
		Consider $f \in \mathcal{O}_K[X]$ as above with splitting field $L \supseteq K$. Set $G= \operatorname{Gal}(L/K)$. Let $C$ be a partition of $n$, and let $S$ be the set of prime ideals $\mathfrak{q}$ such that the $f$ modulo $\mathfrak{q}$ has factorization type $C$. Then the natural density of $S$ is given by
		\begin{align*}\label{eq:frob}
			\lim_{x \to \infty} \frac{\#\{ \mathfrak{q} \mid N(\mathfrak{q}) \leq x \text{ and } \mathfrak{q} \in S\}}{\#\{\mathfrak{q} \mid N(\mathfrak{q}) \leq x \}} = \frac{\#\{\sigma \in G \mid \sigma \text{ has cycle type } C\}}{\# G}.
		\end{align*}
	\end{theorem}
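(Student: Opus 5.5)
We prove the Frobenius density theorem by the classical route: count primes by their Frobenius data in $L/K$ and average over cyclic subextensions. We treat the Prime Ideal Theorem \cite[Chapter XV, Theorem 4]{lang3} as known for every number field. We do not invoke Chebotarev.

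\emph{Step 1: unramified primes and Frobenius cycle types.} Discard the finitely many primes $\mathfrak{q}$ of $\mathcal{O}_K$ that ramify in $L$ or divide $\operatorname{disc}(f)$; this does not change any natural density. For every remaining $\mathfrak{q}$, choose a prime $\mathfrak{Q}$ of $L$ above it. The decomposition group of $\mathfrak{Q}$ is cyclic, generated by a Frobenius element $\sigma_{\mathfrak{Q}}$. Changing $\mathfrak{Q}$ conjugates $\sigma_{\mathfrak{Q}}$, so its cycle type on the roots of $f$ depends only on $\mathfrak{q}$. Since $f$ is separable modulo $\mathfrak{q}$, reducing roots modulo $\mathfrak{Q}$ identifies the orbits of $\langle\sigma_{\mathfrak{Q}}\rangle$ on the roots with the irreducible factors of $f \bmod \mathfrak{q}$, and the orbit lengths equal the factor degrees. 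Hence the factorization type of $f \bmod \mathfrak{q}$ equals the cycle type of $\sigma_{\mathfrak{Q}}$. So it suffices to show that, for every cycle type $C$, the set of $\mathfrak{q}$ whose Frobenius has cycle type $C$ has density $\#\{\sigma\in G : \sigma \text{ of type } C\}/\#G$.

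\emph{Step 2: group the Frobenius elements into divisions.} Call $\sigma,\tau \in G$ equivalent if $\langle\sigma\rangle$ and $\langle\tau\rangle$ are conjugate subgroups. Each class is a union of conjugacy classes, and all its elements have the same cycle type, because $\tau$ is conjugate to a generator of $\langle\sigma\rangle$, i.e.\ to $\sigma^k$ with $k$ prime to the order of $\sigma$, and such powers preserve cycle type. Therefore it is enough to prove that each class $D$, namely the conjugates of the generators of some cyclic subgroup $H$, has Frobenius density $\#D/\#G$.

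\emph{Step 3: counting via fixed fields (main obstacle).} For each cyclic $H \le G$, let $L^H$ be its fixed field. Count the degree-one primes of $L^H$ of norm at most $x$ whose restriction to $K$ is also degree one. By the Prime Ideal Theorem applied to $L^H$ (and Lemma~\ref{lemma:inerden} in its proof form), this count is asymptotic to $\pi_K(x)$, the number of primes of $K$ with norm at most $x$. On the other hand, a prime $\mathfrak{q}$ whose Frobenius class is $\langle\sigma\rangle$ contributes a number of such primes equal to the number of fixed points of $\langle\sigma\rangle$ acting on the cosets $G/H$, i.e.\ the number of $gH$ with $g^{-1}\langle\sigma\rangle g \subseteq H$. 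This gives one linear relation per cyclic $H$:
\[
\sum_{D} \#\operatorname{Fix}_{G/H}(D)\, \delta(D) = 1,
\]
where $\delta(D)$ is the upper or lower density of primes whose Frobenius lies in $D$. The system is triangular with respect to inclusion of cyclic subgroups up to conjugacy, with nonzero diagonal entries. Inverting it by Möbius inversion on the poset of cyclic subgroups determines the densities $\delta(D)$ uniquely, and they exist as limits. Finally, the values $\#D/\#G$ satisfy the same system, by an orbit-counting identity: for the unramified abelian case, the pair $L/L^H$ with $H$ trivial gives $\#G$ copies, and the general case follows by Burnside counting. So $\delta(D) = \#D/\#G$. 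The hardest part is handling the error terms uniformly (using degree-one primes only, and keeping densities rather than Dirichlet densities) and checking that the system is triangular. If the natural-density bookkeeping becomes too messy, the fallback is to run the same argument with Dirichlet densities via $\log \zeta$ and then upgrade to natural density using the Prime Ideal Theorem's asymptotic form.
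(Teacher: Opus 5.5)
Your proposal is correct, but it does more than the paper does: the paper gives no argument of its own. It only cites a derivation from Chebotarev and a direct proof, both for Dirichlet density, plus a separate reference for the passage to natural density.

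What you give is Frobenius's own counting argument:
\begin{itemize}
\item reduce to divisions;
\item count degree-one primes of the fixed fields $L^H$ for cyclic $H$;
\item observe that $\mathfrak{q}$ with Frobenius $\sigma$ contributes exactly $\#\operatorname{Fix}_{G/H}(\sigma)$ of them, since the $\langle\sigma\rangle$-orbits on $G/H$ correspond to the primes of $L^H$ above $\mathfrak{q}$, with orbit length equal to residue degree;
\item invert the square system, which is triangular when conjugacy classes of cyclic subgroups are ordered by size, with diagonal entries $[N_G(H):H]\neq 0$;
\item identify the solution by Burnside's lemma, $\frac{1}{\#G}\sum_{\sigma\in G}\#\operatorname{Fix}_{G/H}(\sigma)=1$, because $G$ acts transitively on $G/H$.
\end{itemize}
Compared with the Chebotarev route, this avoids class field theory entirely. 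Compared with the cited Dirichlet-density proofs, feeding in the Prime Ideal Theorem instead of $\log\zeta$ gives natural density in one step. So neither the upgrade the paper cites nor your fallback is needed.

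One point to tighten: do not state the linear relation for ``upper or lower densities''. These are not additive, so the relation as you wrote it need not hold for them. State it for counting functions instead:
\[
\sum_{D}\#\operatorname{Fix}_{G/H}(D)\,\pi_D(x)=\pi_K(x)+o\bigl(\pi_K(x)\bigr)\qquad(H\le G\text{ cyclic}),
\]
where $\pi_D(x)$ is the number of primes of $K$ of norm at most $x$ whose Frobenius lies in $D$. The primes of $K$ of degree at least $2$ and the finitely many discarded primes only cost $O(\sqrt{x})$. The matrix is fixed and invertible, so applying its inverse gives $\pi_D(x)/\pi_K(x)\to c_D$ directly, and Burnside shows $c_D=\#D/\#G$. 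Your sentence about the ``unramified abelian case'' and ``$\#G$ copies'' is just the instance $H=1$ of that identity and can be dropped.
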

\begin{proof}
    With Dirichlet density, see \cite[Theorem 2.49]{frob} for a proof via Chebotarev's density theorem or \cite[Theorem 5.2]{janusz} for a direct proof. The result also holds with natural density (see \cite[pag. 24]{frob}).
\end{proof}

Let $p$ be a prime number and $\zeta_p$ a primitive $p$-th root of unity in an algebraic closure of $K$. We set $\delta_p = [K(\zeta_p) : K]$. Since $\mathbb{Q}(\zeta_p)/\mathbb{Q}$ has cyclic Galois group, one can show that $\delta_p$ divides $p-1$.

	\begin{lemma*}
		Let $\alpha \in \mathcal{O}_K \setminus  (\mathcal{O}_K) ^ p$. Then the natural density of prime ideals of $\mathcal{O}_K$ for which $\bar{\alpha} \in \mathcal{O}_K/\mathfrak{q}$ is a $p$-th power is equal to $1-(p-1)/(p\delta_p)$.
	\end{lemma*}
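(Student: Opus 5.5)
Apply the Frobenius density theorem (Theorem~\ref{theo:frob}) to $f = X^p - \alpha \in \mathcal{O}_K[X]$ with splitting field $L = K(\zeta_p, \beta)$, where $\beta^p = \alpha$. First check the hypotheses. Since $\alpha \notin (\mathcal{O}_K)^p$ and $\mathcal{O}_K$ is integrally closed, $\alpha$ is not a $p$-th power in $K$. In particular $\alpha \neq 0$, so $f$ has $p$ distinct roots $\zeta_p^i\beta$. By the standard criterion for binomials of prime degree, $X^p-\alpha$ is irreducible over $K$, so $[K(\beta):K] = p$. As $[K(\zeta_p):K] = \delta_p$ divides $p-1$ and is coprime to $p$, we get $[L:K] = p\delta_p$.

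Next describe $G = \operatorname{Gal}(L/K)$ as a permutation group on the roots. Each $\sigma \in G$ acts by $\beta \mapsto \zeta_p^{b}\beta$ and $\zeta_p \mapsto \zeta_p^{a}$, with $a \in H \subseteq \mathbb{F}_p^\times$ and $\#H = \delta_p$. Identifying the root $\zeta_p^i\beta$ with $i \in \mathbb{F}_p$, $\sigma$ acts as the affine map $i \mapsto ai + b$. Counting orders gives $G \cong \{i\mapsto ai+b : a\in H, b \in \mathbb{F}_p\}$.

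Now translate the condition into a factorization type. Fix a prime $\mathfrak{q}$ not dividing $p\alpha$; excluding these finitely many primes does not change densities. Then $\bar f$ is separable over $\mathbb{F} = \mathcal{O}_K/\mathfrak{q}$, and $\bar\alpha$ is a $p$-th power in $\mathbb{F}$ if and only if $\bar f$ has a root in $\mathbb{F}$, i.e.\ its factorization type contains a part equal to $1$. On the group side, an affine map with $a\neq 1$ has exactly one fixed point, giving $\delta_p - 1$ times $p$ elements with a fixed point. For $a=1$, only the identity ($b=0$) has fixed points, while $b\ne0$ gives a $p$-cycle. So the number of $\sigma \in G$ with a fixed point is $p(\delta_p-1)+1$.

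Summing Theorem~\ref{theo:frob} over all partitions containing a part $1$ (finitely many, so densities add), the density is
\[
\frac{p(\delta_p - 1) + 1}{p\delta_p} = 1 - \frac{p-1}{p\delta_p}.
\]
The main obstacle is the degree computation $[L:K]=p\delta_p$, which rests on the irreducibility of $X^p-\alpha$ and the coprimality of $p$ with $\delta_p$. The remaining steps are routine once $G$ is identified with the affine group above.
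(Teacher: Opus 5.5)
Your proof is correct and follows the same route as the paper: you apply the Frobenius density theorem to $X^p-\alpha$, use irreducibility and $\gcd(p,\delta_p)=1$ to get $[L:K]=p\delta_p$, and count the $p(\delta_p-1)+1$ Galois elements that fix a root. You also supply details the paper leaves as ``it can be shown'' or ``one checks'': passing from $\alpha\notin(\mathcal{O}_K)^p$ to $\alpha\notin K^p$ via integral closedness, identifying $\operatorname{Gal}(L/K)$ with an affine group over $\mathbb{F}_p$ to justify the count, and discarding the finitely many primes dividing $p\alpha$.
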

	\begin{proof}
		Since $\alpha$ is not a $p$-th power, it can be shown that the polynomial $f = X^p - \alpha$ over $\mathcal{O}_K$ is irreducible.  For a prime ideal $\mathfrak{q}$, we have that $\bar{\alpha} \in \mathcal{O}_K / \mathfrak{q}$ is a $p$-th power if and only if $X ^ p - \bar{\alpha}$ has a linear factor in $(\mathcal{O}_K / \mathfrak{q})[X]$. The extension  $L = K(\sqrt[p]{\alpha}, \zeta_p)$, is the splitting field of $f$. Since $p$ and $\delta_p$ are coprime ($\delta_p$ divides $p-1$), the degree of $L$ over $K$ equals $p\delta_p$. The elements in $\operatorname{Gal}(L/K)$ with a cycle of length $1$ in their decomposition are those that fix a root of $f$. One checks that there are $p(\delta_p - 1) + 1$ of those elements. Therefore, by Theorem \ref{theo:frob}, the density of prime ideals of $\mathcal{O}_K$ where $\bar{\alpha}$ is a $p$-th power is $(p(\delta_p - 1) + 1)/p\delta_p = 1 - (p-1)/p\delta_p$.
	\end{proof}

	\begin{lemma*}
		The natural density of the set  $ \{\mathfrak{q} \mid p | N(\mathfrak{q}) - 1\}$ of prime ideals of $\mathcal{O}_K$ equals $1/\delta_p$.
	\end{lemma*}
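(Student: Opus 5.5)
We apply the Frobenius density theorem (Theorem \ref{theo:frob}) to the $p$-th cyclotomic polynomial $\Phi_p = X^{p-1} + \dots + X + 1 \in \mathbb{Z}[X] \subseteq \mathcal{O}_K[X]$. It is monic and has $p-1$ distinct roots, namely the primitive $p$-th roots of unity. Its splitting field over $K$ is $L = K(\zeta_p)$, so $G = \operatorname{Gal}(L/K)$ has order $\delta_p$. Each $\sigma \in G$ is determined by $\sigma(\zeta_p) = \zeta_p^{a_\sigma}$ with $a_\sigma \in (\mathbb{Z}/p\mathbb{Z})^\times$. This identifies $G$ with a subgroup $H$ of $(\mathbb{Z}/p\mathbb{Z})^\times$ of order $\delta_p$, and in particular shows $\delta_p \mid p-1$.

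The first step is to compute cycle types. The permutation $\sigma$ acts on the roots $\zeta_p^b$ by multiplying exponents by $a_\sigma$. Its cycles therefore all have length equal to the multiplicative order of $a_\sigma$. Hence $\sigma$ has cycle type $(1,\dots,1)$ if and only if $\sigma = \mathrm{id}$, so exactly one element of $G$ has this type.

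The second step links factorization type to the norm condition. Discard the finitely many primes $\mathfrak{q}$ above $p$; this does not affect density. For the remaining $\mathfrak{q}$, the field $\mathcal{O}_K/\mathfrak{q}$ has characteristic different from $p$, so $\bar{\Phi}_p$ is separable. It splits into linear factors over $\mathcal{O}_K/\mathfrak{q}$ exactly when that field contains a primitive $p$-th root of unity. Since $(\mathcal{O}_K/\mathfrak{q})^\times$ is cyclic of order $N(\mathfrak{q})-1$, this happens exactly when $p \mid N(\mathfrak{q})-1$. So, up to a finite set, $\{\mathfrak{q} \mid p \mid N(\mathfrak{q})-1\}$ is the set of primes where $\Phi_p$ has factorization type $(1,\dots,1)$.

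Theorem \ref{theo:frob} then gives density $1/\#G = 1/\delta_p$.

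The main care point is the second step: we must check that finite exceptional sets do not change natural density, and that the identification of factorization types holds for all but finitely many $\mathfrak{q}$. We handle this by excluding primes above $p$ and using separability of $X^p-1$ modulo such $\mathfrak{q}$.
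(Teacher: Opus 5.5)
Your proof is correct and follows the paper's argument. Both apply Theorem \ref{theo:frob} to the $p$-th cyclotomic polynomial with splitting field $K(\zeta_p)$, observe that only the identity of $\operatorname{Gal}(K(\zeta_p)/K)$ fixes a root, and discard the finitely many primes above $p$. The only cosmetic difference is that the paper phrases the condition as $\Phi_p$ having a linear factor modulo $\mathfrak{q}$ (Galois elements fixing at least one root), whereas you use complete splitting, i.e.\ the single cycle type $(1,\dots,1)$, which matches the one-partition form of Theorem \ref{theo:frob} slightly more directly.
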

	\begin{proof}
		The condition $p \mid N(\mathfrak{q}) - 1$ is equivalent to $(\mathcal{O}_K/\mathfrak{q})^{\times}$ containing a primitive $p$-th root of unity. This corresponds to the $p$-th cyclotomic polynomial having a linear factor modulo $\mathfrak{q}$ and $p \not \in \mathfrak{q}$. Consider $L = K(\zeta_p)$ of degree $\delta_p$, the splitting field of the $p$-th cyclotomic polynomial over $K$. The only element in $\operatorname{Gal}(L/K)$ fixing at least one of the roots of this polynomial is the identity. Hence, by applying the Frobenius density theorem, and since there are only finitely many $\mathfrak{q}$ with $p \in \mathfrak{q}$, we have that the natural density of $ \{\mathfrak{q} \mid p | N(\mathfrak{q}) - 1\}$ is $1/\delta_p$.
	\end{proof}

\end{appendices}


	\bibliographystyle{plain}
	\bibliography{references}

\end{document}